\newcolumntype{P}[1]{>{\centering\arraybackslash}p{#1}}
\newtheoremstyle{bfnote}%
  {}{}
  {\itshape}{}
  {\bfseries}{.}
  { }{\thmname{#1}\thmnumber{ #2}\thmnote{ (#3)}}
\theoremstyle{bfnote}
\newtheorem{definition}{Definition}[section]
\def\thm@space@setup{\thm@preskip=0pt
\thm@postskip=0pt}
\newcommand{\B}{\vspace*{-\smallskipamount}}
\newcommand{\BBB}{\vspace*{-\bigskipamount}}
\newcommand{\cmark}{\ding{51}}
\definecolor{redx}{RGB}{180,0,0}
\definecolor{greenx}{RGB}{0,180,0}
\definecolor{redx}{RGB}{180,0,0}
\definecolor{greenx}{RGB}{0,180,0}
\newcommand*\wrapletters[1]{\wr@pletters#1\@nil}
\def\wr@pletters#1#2\@nil{#1\allowbreak\if&#2&\else\wr@pletters#2\@nil\fi}
\newcommand{\blue}[1]{{\color{black}{#1}}}
\begin{document}

\title{Hiding Access-pattern is Not Enough!
\textsc{Veil}: A Storage and Communication Efficient Volume-Hiding Algorithm}

\author{Shanshan Han}
\email{shanshan.han@uci.edu}
% \orcid{xxx}
\author{Vishal Chakraborty}
\email{vi.c@uci.edu}
\author{Michael Goodrich}
\email{goodrich@uci.edu}
\author{Sharad Mehrotra}
\email{sharad@ics.uci.edu}

\affiliation{%
  \institution{University of California, Irvine}
  % \streetaddress{P.O. Box 1212}
  \city{Irvine}
  \state{California}
  \country{USA}
  \postcode{92697}
}

\author{Shantanu Sharma}
\email{shantanu.sharma@njit.edu}

\affiliation{%
  \institution{New Jersey Institute of Technology}
  % \streetaddress{1 Th{\o}rv{\"a}ld Circle}
  \city{Newark}
  \state{New Jersey}
  \country{USA}
}

% 

%%
%% By default, the full list of authors will be used in the page
%% headers. Often, this list is too long, and will overlap
%% other information printed in the page headers. This command allows
%% the author to define a more concise list
%% of authors' names for this purpose.
\renewcommand{\shortauthors}{Han et al.}

%%
%% The abstract is a short summary of the work to be presented in the
%% article.
\begin{abstract}

This paper addresses  \emph{volume }\emph{leakage} (\textit{i}.\textit{e}., leakage of the number of records in the answer set) when processing keyword queries in encrypted  key-value (KV) datasets. Volume leakage, coupled with prior knowledge about data distribution and/or previously executed queries, can reveal both ciphertexts and current user queries.
We develop a solution to prevent volume leakage, entitled \textsc{Veil}, that partitions the dataset by randomly mapping keys to a set of equi-sized buckets.
\textsc{Veil} provides a  tunable mechanism for data owners to explore a trade-off between storage and communication overheads. 
To make buckets
indistinguishable to the adversary, \textsc{Veil}  uses a novel padding strategy that
allow buckets to overlap, reducing the need to add fake records. 
Both theoretical and experimental results show \textsc{Veil} to significantly outperform
existing state-of-the-art.

\end{abstract}

\maketitle
\section{Introduction}\label{sec:intro}

Over the past two decades, secure data outsourcing to untrusted clouds has emerged as an important research area.  Techniques to support  operations such as  keyword search~\cite{goh2003secure,curtmola2006searchable,kamara2012dynamic,cash2014dynamic,chase2010structured,wang2018multi,guo2020verifiable}, range search~\cite{popa2011cryptdb,guo2018enabling,faber2015rich,yuan2018enabling,pappas2014blind,poddar2016arx,wu2019servedb,yuan2017enckv}, join computation~\cite{Wang2021SecureYJ,Jutla2021EfficientSS,Chang2022TowardsPO,bater2016smcql,poddar2021senate}, aggregations \cite{Wang2021SecureYJ,ren2022heda}, as well as, techniques to support SQL queries~\cite{bater2016smcql,poddar2021senate} have
been developed.  Many such techniques use cryptographic primitives that allow the untrusted cloud to match queries by checking
whether the ciphertexts corresponding to the query keys are stored at the cloud
without having to decrypt the data. 
One of the first methods proposed as searchable encryption~\cite{searchable_encryption,curtmola2006searchable,kamara2012dynamic,cash2014dynamic,guo2020verifiable,wang2018multi} 
embeds a trapdoor/token (which is an encrypted query predicate) for a given query key into a random string, such that the equality/inequality of a query key can be checked against ciphertext.  
Several 
order-preserving encryption techniques~\cite{lewi2016order,bogatov2019comparative} to support range queries have subsequently been proposed. 
To find matching records  over ciphertext in sublinear time, several encrypted indexing techniques have
been proposed \cite{curtmola2006searchable, kamara_dynamic,kamara2018suppression,kamaraVLH,goh2003secure,kamara2012dynamic,cash2014dynamic}.

{\color{black}
These cryptographic techniques suffer from information leakages via access patterns and volumes (or, output-size). Access patterns refer to the identity of the returned records. Volumes refer to the number of records returned to answer queries. The impact of access pattern leakage has been extensively discussed in
literature~\cite{Islam2012AccessPD,Kellaris2016GenericAO,Garg2016TWORAMEO,Chen2018DifferentiallyPA}. Oblivious Random Access Memory (ORAM)~\cite{goldreich1987ORAM,goldreich1996software,ostrovsky1990efficient} and its improved version, known as Path-ORAM~\cite{stefanov2018path} are well-known tools to hide access patterns. However, both ORAM and Path-ORAM do not hide volume. Furthermore, they suffer from two problems: 
(\textit{i}) query inefficiency, % --- when we want to fetch a single data item/row/object using Path-ORAM, 
it requires fetching poly-logarithmic amount of data for a query, and 
(\textit{ii}) limited throughput and limited support for concurrent users due to the underlying tree structure, which can 
support 0.055MB/s throughput (see Table 1 of~\cite{chakraborti2017sqoram}) and 30 concurrent clients~\cite{chakraborti2018concuroram}, while existing DBMS, such as MySQL, offers high throughput and support for concurrent users. 
In contrast, volume-hiding techniques have not been given much attention, until recently. The seminal work by Kellaris et. al.~\cite{kellaris2016generic} and subsequent work~\cite{yao2020sok,lacharite2018improved,gui2019encrypted,Popa2020,durak2016else}
%shown that volume leakage can reveal sensitive information both about the query and the ciphertext datastored at the cloud. 
have shown that systems that hide access patterns can still be vulnerable to attacks that allow an adversary to reconstruct the database counts, %Specifically, attackers can determine the number of documents in the database that contain specific values, even though the access patterns are concealed 
\textit{e}.\textit{g}., if an adversary has prior knowledge about the number of records corresponding to a query, it can potentially deduce/narrow down the query by observing the number of records in the answer set.} %, \textit{i}.\textit{e}., the volume.}

%Many prior works  on encryptedsearch~\cite{yao2020sok,kellaris2016generic,lacharite2018improved,gui2019encrypted,Popa2020,durak2016else} have not considered the problem of  {volume leakage}.
%wherein  an adversary gains information from the number of the returned records (or, the output size), instead,  to consider such leakage as acceptable.  
% More recent works~\cite{yao2020sok,kellaris2016generic,lacharite2018improved,gui2019encrypted,Popa2020,durak2016else} have, however, 
% shown that volume leakage can reveal sensitive information both about the query and the ciphertext data
%  stored at the cloud. 
% For instance,  if an adversary  has prior knowledge about the number of records corresponding to a query key, it can potentially deduce/narrow down  the  query by observing the number of values in the answer set, \textit{i}.\textit{e}., the volume. 
%clarify -VC
% \sharad{why is this true? Why would oram be not effective?}.
%A typical leakage, known as volume leakage~\cite{Popa2020}, reveal information about query keys through the number of returned values of queries, that may reveal information about the query key. 
%Specifically, if an adversarial server has prior knowledge about the data distribution, \textit{i}.\textit{e}., the number of records corresponding to each key, it can deduce the plaintext query key when observing an equal number of values being retrieved to answer a query. Once the query is guessable, the technique used for access pattern hiding (such as ORAM~\cite{goldreich1987ORAM}) is no longer effective, and the plaintext can be easily obtained.
 Volume hiding can be achieved by ensuring that the number of records returned is always equal,  irrespective of the query being asked. Such records must 
include all the matching records, though they may also return additional records that are then filtered out
by the querier. In the context of keyword queries (\textit{e}.\textit{g}., as in key-value stores), this necessitates
that the cloud returns at least the number of results that is greater than or equal to 
the number of key-value pairs corresponding to any key, \textit{i}.\textit{e}., the maximum key size, denoted by $L_{\mathit{max}}$. 
Clearly, a technique returning less than $L_{\mathit{max}}$ records will reveal to the adversary 
that the corresponding query is not for the key corresponding to the $L_{\mathit{max}}$ values.
{\color{black}
A trivial solution to prevent volume leakage is to use Path-ORAM~\cite{stefanov2018path}. 
However, Path-ORAM will incur a huge communication cost of $O(L_{max}\times \log^2 |\mathcal{D}|)$, where $|\mathcal{D}|$ is the total number of key-value pairs (as discussed in \S1.1 of~\cite{kamaraVLH}). }

{\color{black}
To avoid such communication cost, recent volume-hiding techniques~\cite{feifei_hybridx,Moti_dprfMM,Moti21_dynamic,kamara2018suppression,kamara_dynamic,kamaraVLH} have explored alternative approaches that do not use ORAM.} As will become clear, 
such techniques
store spurious (fake) encrypted records to prevent volume leakage. These methods may return more than $L_{\mathit{max}}$ records in response to a query, or may store a typically small number of data records on the local side in plaintext, referred to as a ``\emph{stash}''. 
Note that
an optimal approach would not store any fake records, would not use a stash, and would retrieve $L_{\mathit{max}}$ records for any query.  We can, thus, compare the volume-hiding techniques based on the following metrics:

% \noindent$\bullet$ \emph{\textbf{Query amplification}} ($\boldsymbol{\mathit{QA}}$): the ratio of the number of records returned for a query and $L_{\mathit{max}}$ (optimal is 1).
\begin{itemize}[leftmargin=0.14in,nolistsep,noitemsep]
\item \emph{\textbf{Query amplification}} ($\boldsymbol{\mathit{QA}}$): the ratio of the number of records returned for a query and $L_{\mathit{max}}$ (optimal is 1).

\item \emph{\textbf{Storage amplification}} ($\boldsymbol{\mathit{SA}}$): the ratio of the number of records stored in the encrypted database
and the number of records in the plaintext dataset (optimal is 1).

\item  \emph{\textbf{Stash ratio}} ($\boldsymbol{\mathit{SR}}$): the ratio of the number of records stored locally and the number of records in the plaintext dataset (optimal is 0).
\end{itemize}

These metrics offer trade-offs, and we can easily
design schemes with zero stash size (or, $\mathit{SR}$) that are optimal for one of $\mathit{QA}$ or $\mathit{SA}$ (but not
for both).
%{\color{red} shantanu - ??[[what is the meaning of $\mathit{QA}$ (SA)]]??} 
%if we ignore the other  metric  (viz. $\mathit{SA}$ ($QA)$). 
For instance,  a  strategy that retrieves the entire 
ciphertext database for any keyword query  
%{\color{red} shantanu - ??[[but how it is optimal if u fetch entire db.... .... u wrote ``optimal for a given metric QA'']]??}
prevents volume leakage and has optimal storage overhead 
($\mathit{SA}$ is 1 since it does not store any fake 
records). But it has abysmal $\mathit{QA}$ (which could be $O(|\mathcal{D}|)$, if, for instance, $L_{\mathit{max}}$ is considered a constant and $|\mathcal{D}|$ is the size of the database).
%\footnote{Consider, a key with $L_{\mathit{max}} = {|\mathcal{D}|}/{2}$ and a set of ${|\mathcal{D}|}.{2}$ keys with one record each.}
An alternate baseline strategy is to store key-value pairs as a multimap~\cite{kamara2018suppression,kamara_dynamic,kamaraVLH,Moti_dprfMM,Moti21_dynamic,feifei_hybridx}, wherein volume leakage is prevented  by storing enough fake records in the multimap
associated with each key to  ensure the number of records
for each key is equal to $L_{\mathit{max}}$. Such a strategy will always retrieve $L_{\mathit{max}}$ records
(hence optimal $\mathit{QA}$), but may result in very poor $\mathit{SA}$. In particular, if $L_{\mathit{max}}$ is $O(|\mathcal{D}|)$, then the resulting ciphertext dataset may have $O(|\mathcal{D}|^2)$ key-value pairs!
\footnote{Consider a key with $L_{max} = {|\mathcal{D}|}/{2}$ records and the rest keys with only a single record each. The number of fake records is $({|\mathcal{D}|}/{2} - 1)^2$, which is $O(|\mathcal{D}|^2)$. }

One could possibly design schemes that are more efficient in terms of 
$\mathit{QA}$
compared to 
retrieving the entire database, and more efficient in terms of $SA$ compared
to padding each key to $L_{max}$. For example, using Path-ORAM over multi-map,
one can design an approach to obliviously retrieve $L_{max}$ records 
including all records associated with the query key, but at the cost of $\mathit{QA}$ to be $\log^2(|\mathcal{D}|)$, which is impractical. 
Instead, the existing volume-hiding techniques~\cite{kamara2018suppression,kamaraVLH,kamara_dynamic,XorMM,Moti_dprfMM} have explored significantly better strategies for better $\mathit{QA}$ and/or $\mathit{SA}$ (and some of them~\cite{kamara_dynamic,kamara2018suppression,kamaraVLH} are likely integrated into MongoDB). For instance, one of these
strategies, 
\textsf{dprfMM}~\cite{Moti_dprfMM}, uses cuckoo hash to achieve  $\mathit{QA}$ and $\mathit{SA}$ of 2, with a small stash having tight upper bounds. The most recent approach, \textsf{XorMM}~\cite{XorMM} uses an XOR filter~\cite{graf2020xor}  to store the dataset and achieves $\mathit{QA}$ of 1 and $\mathit{SA}$ of 1.23. %However, \textsf{XorMM} is not always successful in building the XOR filter. In our implementation, we found that if the algorithm fails, more than 40\% of the records may be not placed in the XOR filter, and those records have to be placed in a local stash, introducing high storage overhead. 
We will discuss these and other related strategies in \S\ref{sec: related_work}.

\medskip
\noindent
\textbf{Our contribution:} We develop a novel strategy, entitled \textsc{Veil},  that, given a key $k$, retrieves records associated with the key from the encrypted key-value store while preventing volume leakage\footnote{Implementation of \textsc{Veil}: https://github.com/han-shanshan/VEIL. }.
Our approach is based on bucketing, wherein keys are mapped to a set of buckets
that are then encrypted and stored. To retrieve the records for a given key, \textsc{Veil} retrieves
all the buckets that could contain the records of the given key.  Unlike prior approaches, \textsc{Veil} allows the database owner (or, user) to fine-tune the parameters/metrics -- viz., $\mathit{QA}$ and $\mathit{SA}$, which are input
parameters to the system. A user can set the values of $\mathit{QA}$ and $\mathit{SA}$ to be any value more than or
equal to 1, % and can thus achieve the desired value of these parameters. 
and \textsc{Veil} uses a local stash in order to guarantee the $\mathit{SA}$ and $\mathit{QA}$ that the user had desired. {\color{black}Like existing volume-hiding techniques, \textsc{Veil} also does not focus on hiding access-patterns.}

Given that~\cite{Moti_dprfMM} already achieves a $\mathit{QA}$ and $\mathit{SA}$ values of 2 with a small stash, our primary 
exploration in \textsc{Veil} is the resulting stash size when $\mathit{QA}$ and $\mathit{SA}$ are below 2, \textit{i}.\textit{e}.,  
the number of records retrieved remains below
$2L_{\mathit{max}}$ and the total number of records in the ciphertext database
are below $2|\mathcal{D}|$.
%{\color{red} shantanu - ??[[i know the meaning of 2 here.... but this is no early... can we assume that reviewer will digest and understand the meaning of 2, by simply saying 2 here????]]??}. 
We show both analytically and experimentally that even when we choose relatively small values of these parameters
(\textit{e}.\textit{g}., $\mathit{SA}= 1.2$ and $\mathit{QA}$ close to 1), \textsc{Veil} achieves a very small stash, which
experimentally is significantly smaller than that of the scheme in~\cite{Moti_dprfMM}.  Thus, \textsc{Veil} is
a significantly better strategy that guarantees the prevention of volume leakage and achieves a near-optimal value of $\mathit{QA}$ and acceptably
small values of $\mathit{SA}$ compared to the 
best-known strategy. We further optimize on \textsc{Veil} by developing a modified strategy that 
allows buckets to share common records to further reduce $\mathit{SA}$. The modified 
strategy requires a significantly less number of fake records to be added without increasing $\mathit{QA}$ or
the stash size. %\noindent
%\textbf{Our contributions.}
In summary, our contributions are:
\begin{itemize}[nolistsep,noitemsep,leftmargin=0.15in]
\item %We propose 
A flexible volume-hiding strategy that allows tuning storage overhead and query overhead to achieve a trade-off. 

\item %We propose 
A random bucketing strategy that distributes records of a key to buckets in a greedy way.

\item %We design 
Two strategies to add fake values to the created buckets, including a basic strategy that pads each created bucket to equal size, and an overlapping strategy that further reduces the number of fake records using a $d$-regular graph.

\item Experimental evaluation shows that \textsc{Veil} achieves flexible tuning of $\mathit{SA}$ and $\mathit{QA}$ and uses a small stash.  
\end{itemize}

\B
\section{Settings}
\label{sec:setting}
This section describes the problem more concretely, including the adversarial model and the security model in  \textsc{Veil}.

\subsection{Problem Definition}

We consider a key-value (KV) dataset $\mathcal{D}$ with a set of unique keys $\mathcal{K}$, where each key $k_i\in\mathcal{K}$ is associated with $|k_i|$ records. %, denoted as $v_1^{i}, v_2^{i}, \ldots, v_{|k_i|}^{i}$. 
The maximum number of records associated with any key in $\mathcal{K}$ is denoted by $L_{\mathit{max}}$, \textit{i}.\textit{e}., $L_{\mathit{max}}=\texttt{MAX}\{|k_i|\}_{k_i\in\mathcal{K}}$. 
The dataset $\mathcal{D}$ is encrypted in a ciphertext-secure manner, ensuring that no information is revealed from the ciphertexts, and subsequently outsourced to an untrusted public cloud server. Users, or the database owner, can query the encrypted dataset by sending encrypted queries for a key $k_i$ to the cloud.
In the absence of volume-hiding techniques, an adversary learns the number of records returned in response to an encrypted query, and this leakage enables the adversary to deduce the plaintext query key based on prior knowledge of the data distribution.

Our goal in this paper is to develop solutions to hide real volumes, \textit{i}.\textit{e}., the number of records associated with the query key, during query processing. Before presenting an overview of our approach, we first discuss the adversarial model.

\subsection{Adversarial Model}\label{sec:adv}

We consider a powerful adversary who knows the data distribution. That is, for a KV dataset $\mathcal{D}$ and its key set $\mathcal{K}$, the adversary is aware of each key $k_i\in \mathcal{K}$ and its volume $|k_i|$. The adversary also has full access to the ciphertext database. Let $q$ be a query. Let $\textsf{Cipher}(q)$ be the ciphertexts that $q$ touches. Consequently, on executing the
query, the adversary learns the association between  $q$ and the ciphertexts $\textsf{Cipher}(q)$. The  adversary may also
know the keywords corresponding to 
(a subset of) prior queries. 
Let $Q$ be the set of queries that have been executed in the database so far, and let $Q' \subseteq Q$. Suppose for each $q \in Q'$, 
the adversary knows the plaintext keyword associated with $q$. In the
worst case, $Q' = Q$, \textit{i}.\textit{e}., the adversary knows the query keyword
for all prior queries executed in the database.

%\textit{i}.\textit{e}.for each known query $q_{\lambda}$, the adversary knows the key that $q_{\lambda}$ was querying. In the worst case, this holds true for all prior queries. %{\color{red} shantanu - ??[[what is the meaning of the last sentence and what is lambda....is lambda related to number of keys???]]??}

The adversary's objective is to determine the query keyword. It can achieve this by deducing the real volume of the current query key or other keys by observing queries and then mapping the ciphertexts to the corresponding plaintexts based on prior knowledge and/or query execution. It can also use  known keys from previous queries to infer information about the plaintext query key. \emph{\textit{The goal of hiding volumes is to prevent the adversary from deducing the query keys and obtaining information for other keys based on the knowledge of past queries.}}

\noindent\textbf{Security Requirement (VSR). } Consider a KV dataset $\mathcal{D}$ with a key set $\mathcal{K}$ and a series of past queries to keys $\mathcal{K}_Q = \{k_1, \ldots, k_m\}$.
We assume that there are at least two keys, say $k_1, k_2\in\mathcal{K}$ 
%, denoted by $k^{\prime}$ and $k^{\prime\prime}$in $\mathcal{K}$ 
that have never been queried before, {\color{black}\textit{i}.\textit{e}., $|\mathcal{K}_Q| \leq |\mathcal{K}| - 2$, $k_1, k_2\in\mathcal{K}$, and $k_1, k_2\notin \mathcal{K}_Q$. 
Consider an adversary (based on the adversarial model discussed above) with the  knowledge of:
(\textit{i}) data distribution, %, \textit{i}.\textit{e}., for each keys $k$ in $\mathcal{K}$, the number of records  for the key $k$;
(\textit{ii}) the corresponding ciphertext records and volume for each queried key $k_i\in\mathcal{K}_Q$, %, \textit{i}.\textit{e}., $\textsf{Cipher}(k_i)$ and $|\textsf{Cipher}(k_i)|$; 
%, and which specific encrypted records that were  retrieved (\textit{i}.\textit{e}., access pattern associated with the query) denoted by $\textsf{AP}(\textsf{Cipher}(q_i))$. 
and (\textit{iii}) the plaintext key for some queried keys $k_{i} \in \mathcal{K}_Q$.
Suppose the adversary observes a new query to key $k_\alpha$, where $k_\alpha\in\mathcal{K}-\mathcal{K}_Q$, the goal of the adversary is to deduce whether $k_\alpha=k_1$ or $k_\alpha=k_2$. A technique will be volume-hiding if the following condition holds: 
\begin{equation}
\mathit{Prob}[k_{\alpha} = k_1|Adv] = \mathit{Prob}[k_{\alpha} = k_2|Adv]
\end{equation}

% {\centerline{$\mathit{Prob}[k_{\alpha} = k_1|Adv] = \mathit{Prob}[k_{\alpha} = k_2|Adv]$}
That is the probability of the adversary ($Adv$) for guessing $k_\alpha=k_1$ is identical to the probability of guessing $k_\alpha=k_2$.

%{\color{red}\textbf{??SHANTANU ---- MOTI model is didfferent..... we need to say something here?????}} -- SHARAD : NOT HERE... in the appendix.

\section{Overview of \textsc{Veil}}\label{sec:overview}
This section overviews \textsc{Veil}, a secure volume-hiding strategy for key-value stores. \textsc{Veil} partitions a KV dataset into buckets by associating records for a given key with one or more buckets. Given a query key, the buckets corresponding to the key, which may potentially store the records for the key, are retrieved. These retrieved buckets may include extra records that are not associated with the query key. Such records are filtered out to obtain the query answer.

\subsection{Notation}
To describe \textsc{Veil} formally, we define the following notations. 
Let $\mathcal{D}$ be a key-value dataset, 
$\mathcal{K}$ be the set of keys in $\mathcal{D}$, 
and ${\mathcal{B} = \{B_1, \ldots, B_n\}}$ be the set of $n$ buckets created over $\mathcal{D}$. 
\textsc{Veil} associates each 
 key $k_i\in\mathcal{K}$ with a set of $f$ buckets from $\mathcal{B}$, where $f$ is referred to as the \emph{fanout}.
 We define a function $\textnormal{\textsf{MAP}}$ that associates/maps a given key $k_i$ to a set of buckets.

\begin{definition}[\textnormal{\textsf{MAP}}]
%Let $\mathcal{D}$ be a key-value dataset. Given $\mathcal{K}$, the set of all keys in $\mathcal{D}$, and a set $\mathcal{B}$ of buckets created over $\mathcal{D}$, 
We define $\textnormal{\textsf{MAP}}: \mathcal{K} \rightarrow \mathcal{P}(\mathcal{B})$ where $\mathcal{P}(\mathcal{B})$ is the powerset of $\mathcal{B}$. For a key $k_i \in \mathcal{K},$ the function
{$\textnormal{\textsf{MAP}}(k_i)$} returns a set of $f$ bucket-ids, denoted as $\mathcal{B}[k_i]$, that corresponds  to $f$ buckets in $\mathcal{B}$, \textit{i}.\textit{e}., $|\mathcal{B}[k_i]| = f$, such that each record of $k_i$ may reside in one of the buckets in $\mathcal{B}[k_i]$.
Further, $\forall k_i$ and $\forall B_j$
 such that $B_j \not \in \textnormal{\textsf{MAP}}(k_i)$, the records of $k_i \not \in B_j$. $\Box$

 %Note that for each key $k_i$,
 \end{definition}
 Note that based on the definition above, records corresponding to a key $k_i$ may or may not be in bucket $B_j$.
%%%NOTATION INCONSISTENCY
We illustrate the  notation  above using the example below.
\begin{example}[Example of \textnormal{\textsf{MAP}(*)}]
\label{ex:bucketsMAP}
Consider a key-value dataset $\mathcal{D}$ that contains three records: $\mathcal{D}=\{ \langle k_1, v_{1} \rangle, \langle k_1, v_{2} \rangle, \langle k_2, v_{3}\rangle \}$. Let $\mathcal{B} = \{ B_1, B_2, B_3 \}$ be the set of buckets created over $\mathcal{D}$, where bucket $B_1$ contains $\{\langle k_1, v_{1} \rangle\}$, bucket $B_2 $ contains $ \{\langle k_1, v_{2} \rangle, \langle k_2, v_3 \rangle\}$, and bucket $B_3$ is empty. Here, $\textnormal{\textnormal{\textsf{MAP}}}(k_1)=\{B_1, B_2\}$ and ${\textnormal{\textnormal{\textsf{MAP}}}(k_2) = \{B_2, B_3\}}$ indicates that records of $k_1$ reside in $B_1$ and/or $B_2$, and records of $k_2$ reside in $B_2$ and/or $B_3$.
%Note that all records of $k_2$ may only reside in $B_2$. %According to $\textnormal{\textsf{MAP}}(k_2)$, records with key $k_2$ can be found in $B_2$ and/or $B_3$, but not in any other buckets. Furthermore, records with a specific key $k_i$ may be stored in up to $f$ distinct buckets, according to $\textnormal{\textsf{MAP}}(k_i)$, and a single bucket may contain records from multiple keys, \textit{e}.\textit{g}., $B_2$ holds records from both $k_1$ and $k_2$.
$\Box$
\end{example}

Below we define \emph{well-formed buckets}. Intuitively, we say that a set of buckets formed over a dataset is \emph{well-formed} if each bucket are of the same size and $\textnormal{\textsf{MAP}}$ is defined appropriately.

\begin{definition}[Well-Formed Buckets]\label{def: well_form_bucket}
Let \textnormal{\textsf{MAP}} be
the function as defined above. %that maps keys in $\mathcal{K}$ to $f$ buckets in $\mathcal{B}$.
The buckets %We say that the buckets  $\mathcal{B} = \{B_1, \ldots, B_n\}$ 
are well-formed, if and only if the following hold:

\begin{enumerate}[noitemsep,nolistsep,leftmargin=0.22in]
{\color{black}    
\item \emph{Equal bucket size.} For all buckets $B_p$ and  $B_{q}\in\mathcal{B}$, $|B_p| = |B_{q}|$.

\item \emph{Disjoint Buckets.} For all buckets $B_p$ and  $B_{q}\in\mathcal{B}$, $B_p \cap B_q = \varnothing$.

\item \emph{Consistent mapping.}} For all buckets $B_p\in\mathcal{B}$  and all key $k_i\in\mathcal{K}$, if $B_p$ contains one or multiple records of $k_i$, then $B_p \in \textnormal{\textsf{MAP}}(k_i)$. 

    We denote the set of well-formed buckets after padding by $\mathcal{B}_f$.
    $\Box$
\end{enumerate}
\end{definition}

% Observe that buckets shown in Example \ref{ex:bucketsMAP} are not well-formed since buckets $B_1$, $B_2$, and $B_3$ contain different number of records (1, 2, and 0 respectively). Consider an alternative scenario where the buckets $B_1=\{\langle k_1, v_{1} \rangle\}$, $B_2 = \{\langle k_1, v_{2} \rangle \}$ and $B_3 = \{\langle k_2, v_3 \rangle\}$. Here the set of buckets $\{B_1, B_2, B_3\}$ are well-formed since they are of equal size and consistent with the \textnormal{\textsf{MAP}}() function. 

Buckets can be made equal-sized by adding fake records to them appropriately. Suppose $\theta_1, \theta_2$ and $\theta_3$ refer to ``fake'' records.  Consider, again, Example \ref{ex:bucketsMAP}. We add $\theta_1$ to bucket $B_1$ and $\theta_2$ and $\theta_3$ to bucket $B_3$. Thus, we have $B_1=\{\langle k_1, v_{1} \rangle, \theta_1\}$, bucket $B_2 = \{\langle k_1, v_{2} \rangle \langle k_2, v_3 \rangle\}$, and bucket $B_3 = \{\theta_2, \theta_3\}$. The set of buckets $\{B_1, B_2, B_3\}$ is now well-formed.

% We have introduced all the required notation to state \textsc{Veil} formally. 

\subsection{Components of \textsc{Veil}}\label{sec:components of veil}
\textsc{Veil}  is characterized by the 
 following five operations: Bucket Creation, Padding, Data Outsourcing, Query Evaluation, and Filtering. %We describe each of them in the following.

\paragraph{Bucket Creation $\mathcal{BC}$ ($\mathcal{D}$,$ QA, SA, f$) $\rightarrow
\langle$\textnormal{\textnormal{\textsf{MAP}}}, $\mathcal{B}, {\color{black}\mathit{stash}} \rangle$:}
The function $\mathcal{BC}$
takes the dataset $\mathcal{D}$, the parameters $\mathit{QA}$ (query amplification), the parameter $\mathit{SA}$ (storage amplification), and a fanout $f$ as inputs and returns
a mapping \textnormal{\textsf{MAP}}
from keys to buckets, a set 
of buckets $\mathcal{B} = \{B_1, \ldots, B_n\}$, and {\color{black}a stash (containing a few records that do not fit in any buckets).}  {\color{black}Observe that $\mathcal{B}$ is consistent to $\textnormal{\textsf{MAP}}$ but buckets in $\mathcal{B}$ may be of unequal size.

\paragraph{Padding} To make the buckets well-formed, \textsc{Veil} adds fake records to make the buckets equi-sized. 
Let $\ell_b$ be the bucket size (\S\ref{sec:veil_steps} will explain the method of computing $\ell_b$). For  each bucket $B_j \in \mathcal{B}$, if $|B_j| < \ell_b$, we add fake records to $B_j$ to pad it to size $\ell_b$.
We denote the set of well-formed buckets after padding by $\mathcal{B}_f$.}

{\color{black}
\paragraph{Data Outsourcing} This operation takes the well-formed buckets $\mathcal{B}_f$ and produces the following:

\begin{itemize}[noitemsep,nolistsep,leftmargin=0.15in]
\item \emph{Encrypted Record Set:} that includes the set of all real or fake records in each bucket $B_i \in \mathcal{B}_f$. All such records are appropriately encrypted and outsourced as a record set. Each record is associated with a RID that will be used in a multimap index. For example, a key-value pair or a record $\langle k_i, v\rangle$ in a bucket is represented as: $\langle RID, E(k_i, v)\rangle$, where $E$ is an encryption function, such as AES256~\cite{daemen1999aes}.\footnote{\scriptsize \color{black}AES produces an identical ciphertext for more than one appearance of a cleartext. Since each $\langle k_i, v\rangle$ pair is unique (\textit{i}.\textit{e}., differ in at least one bit), all the ciphertext values will be non-identical (due to avalanche effect property~\cite{feistel1973cryptography}). In case, there are two or more appearances of a $\langle k_i, v\rangle$ pair due to insert operation, we could add a random number to $\langle k_i, v, r\rangle$ before encryption to produce non-identical ciphertext.}

\item \emph{Multimap Index, $Mmap(B_i)$:} contains a map for each bucket $B_i {\in} \mathcal{B}_f$ consisting of a list of RIDs associated with records in $B_i$.
\end{itemize}
 
The encrypted record set and the multimap index for each bucket are outsourced to a server.
Furthermore, \textsc{Veil} also maintains information at the client for converting user queries into appropriate server-side queries. In particular, \textsc{Veil} stores the following information at the client:
\begin{itemize}[noitemsep,nolistsep,leftmargin=0.15in]

\item $f$: fanout that is the number of buckets in which records of a key may get mapped to.
\item $n$: total number of buckets created by bucketing.
\item Stash.
\end{itemize}

\emph{Aside.} Note that the above strategy for outsourcing the $Mmap(B_i)$ corresponds to implementing the multimap index as a secondary index over an encrypted database of records. We could, instead, also implement $Mmap(B_p)$ as a primary index in which case the $Mmap(B_p)$ would store encrypted records instead of RIDs to the encrypted record stored in the encrypted record store. 

}

{\color{black}
\paragraph{Query Evaluation $\mathcal{QE}(k_i) \rightarrow \mathcal{B}[k_i]$:}
$\mathcal{QE}$ takes
a query key $k_i$ as input from the user and fetches encrypted buckets stored at the public cloud. Particularly, the client utilizes \textsf{MAP} function, $\textnormal{\textsf{MAP}}(k_i)$, to determine the $f$ bucket-ids that may store 
the encrypted records of $k_i$ and sends the $f$ bucket-ids to the cloud to fetch the $f$ buckets. Depending on the way the data is stored at the cloud (either in the form of a secondary index, -- multimap index, or a primary index), the cloud returns all the $f$ buckets to the client.

\paragraph{Filtering:} This operation takes the query key $k_i$ and the encrypted records in the buckets retrieved by $\mathcal{QE}$ as inputs and decrypts them. All the records that are not corresponding to $k_i$ are discarded. Further, the client reads the stash to find records having the key $k_i$.} 

%Since buckets generated by $\mathcal{BC}$ are well-formed (hence equisized) and since $\mathcal{QE}$ retrieves the same number of buckets,\textit{i}.\textit{e}., $f$,  in \textsc{Veil} described above, the volume of data retrieved  remains the same irrespective of the query $k_i$.
Observe that in \textsc{Veil}, irrespective of the key,
the volume of data retrieved remains the same
because (1) buckets generated by $\mathcal{BC}$ are well-formed (hence equal-sized); (2) $\mathcal{QE}$ always retrieves the same number of buckets, \textit{i}.\textit{e}., $f$. 

{\color{black}
\noindent
\textbf{Leakage from $\mathcal{BC}$.} A potential leakage arises when the adversary possesses knowledge of the algorithm used by $\mathcal{BC}$ to create buckets. This serves as a motivation to address the design requirement of the $\mathcal{BC}$ algorithm developed in \textsc{Veil}, which will be discussed in \S\ref{sec:random_bucketing}. 
We illustrate a scenario where the adversary is aware that the algorithm $\mathcal{BC}$ uses to create buckets is the first-fit decreasing\footnote{\scriptsize FFD is a well-known bin-packing algorithm that sorts keys in descending order by the size and places each key into the first available bucket with sufficient space. If a key cannot fit in an existing bucket, the algorithm creates a new bucket for the key. FFD creates at most  $\frac{11}{9}\times OPT$ equisized buckets~\cite{ffd}, where $OPT$ is the least number of buckets required to store the database.} (FFD)~\cite{ffd} algorithm.
%to create buckets over the following dataset.
}

\begin{example}
Consider a dataset $\mathcal{D}$ with three keys $k_1$, $k_2$, and $k_3$ containing 3, 2, and 1 records, respectively. The bucket size is 3 (same as the  largest number of records associated with any key in $\mathcal{D}$). Using FFD~\cite{ffd}, we allocate the keys to buckets by first placing the largest key, \textit{i}.\textit{e}., $k_1$, in  bucket $B_1$. Since bucket size is 3 and $k_1$ has three corresponding records, $B_1$ lacks space for $k_2$. We create a new bucket $B_2$ for $k_2$. We also add $k_3$ to $B_2$. Thus, each query retrieves a bucket with 3 records from an adversary's perspective.

Consider an adversary (based on the adversarial model in \S\ref{sec:adv}) that has the following knowledge:
\begin{itemize}[noitemsep,nolistsep,leftmargin=0.15in]
\item Data distribution, \textit{i}.\textit{e}., knowledge that
$k_1$, $k_2$, and $k_3$ associate with 3, 2, and 1 records respectively.
\item Prior queries, \textit{i}.\textit{e}., the adversary knows that a prior query that retrieved $B_2$ corresponds to  $k_2$.
\end{itemize}
Given the data distribution, the adversary can execute the FFD algorithm by itself to determine that of the two buckets, one 
contains records of $k_1$ and the other
has records corresponding to $k_2$ and $k_3$. Now, if a query retrieves $B_1$, the adversary can infer 
that the query is for the key $k_1$. Likewise, if
the query returns $B_2$, 
 the adversary can determine that the query is for either $k_2$ or $k_3$ but NOT for $k_1$.
\end{example}

The example above illustrates leakage in scenarios where key are mapped to a single bucket using FFD, \textit{i}.\textit{e}., the fanout $f$ is 1.
Analogous examples can be created for other bucket creation algorithms, whether they yield a fanout of 1 or more, as these strategies construct buckets based on key sizes and a desired bucket size, which may inadvertently aid adversaries to infer information about keys. {\color{black} A possible solution to avoid such a leakage when using FFD is to use ORAM to fetch a bucket; however, this will come with communication overhead, as has been discussed in~\S\ref{sec:intro}.}

In this paper, we develop \emph{\textit{\textsc{Veil}, a bucketization-based strategy}} that 
prevents the attack illustrated above and \emph{\textit{ensures that the adversary cannot determine the volume of the results and thus cannot gain any information from the queries}}. 

To pad the buckets to well-formed, we also propose two strategies for padding,  including a basic strategy that adds fake records to buckets to make them equal-size (in~\S\ref{sec:random_bucketing}), and a sophisticated overlapping strategy that allows buckets to ``borrow'' records from other buckets to further minimum the number of fake records added while making the buckets well-formed (in~\S\ref{sec:overlapping}).

Throughout the rest of this paper, we mainly concentrate on the bucket creation operation, which allocates records to buckets and pads them to achieve a well-formed structure. We do not extensively discuss the outsourcing and filter operations, as they are relatively straightforward. Since the query operation is intrinsically connected to the way we create buckets, we  discuss it alongside bucket creation. 

\section{\textsc{Veil} with Random Bucket Creation }\label{sec:random_bucketing}

% In this section, our objective is to design bucket creation strategies to prevent volume leakage against adversaries. A straightforward idea is to use bin-packing algorithm to pack the given inputs into a set of buckets of a given size while ensuring that all the buckets (except possibly one) are at least half-full. In the context of key-value datasets, the bin-packing algorithm takes the key sizes (the number of values associated with each key) as inputs, fixes the bucket size to be the maximum key size, and allocates one or more keys to a bucket. The values associated with the keys are also allocated to the bucket.
% However, the bin-packing approach is not secure when the adversary knows the bin-packing algorithm as well as some previous queried keys.

\textsc{Veil} uses a random bucket creation strategy, denoted by 
$\mathcal{BC}$, that allows users
to specify input parameters of query amplification $\mathit{QA}$, storage amplification $\mathit{SA}$, 
as well as the fanout $f$. $\mathcal{BC}$ generates
a randomized  mapping between keys and buckets, denoted by $\textnormal{\textsf{MAP}}$, based
on which it creates set of buckets $\mathcal{B}$. 
In $\mathcal{BC}$, since keys are  assigned randomly to buckets
(\textit{i}.\textit{e}., $\textnormal{\textsf{MAP}}$ is a randomized function), there is always a chance that
not all records in $\mathcal{D}$ fit into the buckets in $\mathcal{B}$.  Records that do not fit into the assigned buckets, \textit{i}.\textit{e}., records in 
$\mathcal{D} - \cup_{B_j \in \mathcal{B}} B_j$ 
are assigned to a local storage that we refer to as a 
\emph{stash}.  Such a stash is stored at the local site (and not the public cloud).
With the possibility of a stash, the retrieval algorithm in \textsc{Veil} is slightly modified. 

Given a query key $k_i$, in addition to retrieving all the corresponding
buckets in $\textnormal{\textsf{MAP}}(k_i)$, the user also checks the stash for presence of records 
for key $k_i$. Note that the  effectiveness of the strategy depends upon the size of
the stash which we would like to be as small as possible.
The stash size in $\mathcal{BC}$ , as will become clear,  depends upon factors including
the $\mathit{QA}$, $\mathit{SA}$, and $f$. We theoretically show that
even when we set these factors close to their optimal values , \textit{i}.\textit{e}., 1, the expected size of the stash remains very small. This is also reflected by our experiments which clearly establish the
superiority of the $\mathcal{BC}$ not only in terms of $\mathit{QA}$ and $\mathit{SA}$ but also in a much
smaller size of stash compared to \textsf{dprfMM}~\cite{Moti_dprfMM} which also 
exploits the stash to store spillover records that do not fit into
the cuckoo hash tables.

The $\textnormal{\textsf{MAP}}$ function in \textsc{Veil} is implemented  using a hash function $\mathcal{H}$ such as SHA-256~\cite{sha-256}.
To generate bucket-ids for a query key $k_i$, \textsc{Veil} appends an integer counter $\gamma$ to the key $k_i$, where $1 \leq \gamma \leq f$. Subsequently, the hash function $\mathcal{H}$ processes the concatenated strings and produces $f$ distinct bucket-ids that are
then associated with the key $k_i$. The algorithm for $\textnormal{\textsf{MAP}}$ is shown in Algorithm~\ref{alg: map}. \\
%Note that to be able to generate the \textnormal{\textsf{MAP}} function for any key $k_i$ during query processing, the user simply computes $\textnormal{\textsf{MAP}}(k_i)$ using the hash function $\mathcal{H}$ over   $k_i$ concatenated with
%counters from 1 to $f$ to generate the buckets
%where records of $k_i$ might be present.
% needed to execute the query can then use the  key to generate the hash for the input qu

\LinesNotNumbered \begin{algorithm}[!t]
% \footnotesize
\small
%=========Inputs=========Inputs=========Inputs=========Inputs=========Inputs=========
\textbf{Inputs:} $k_i$: a query key; $n$: total number of buckets; $\mathcal{H}$: a hash function. 
\textbf{Outputs:} $\mathcal{B}[k_i]$: a list of $f$ bucket-ids corresponding to $k_i$.

{\bf Function} $\textnormal{\textsf{MAP}}(k_i, n, \mathcal{H})$
\Begin{
\nl $\mathcal{B}[k_i]\leftarrow []$

\nl \For{$\gamma\in [1,f]$}{

% \nl$\mathcal{B}[k_i]\leftarrow  

\nl $\mathcal{B}[k_i]$.$\mathit{append}$($\mathcal{H}$($k_i|\gamma$) \% $n$) \Comment{Map to $[0, n-1]$}

}

\nl $\mathbf{return}$ $\mathcal{B}[k_i]$
}

\caption{Map Algorithm}
\label{alg: map}
\end{algorithm}
\setlength{\textfloatsep}{0pt}

%The buckets created by $\mathcal{BC}$, while conforming to the randomized mapping of keys to buckets, might not be equal-sized. To transform
%the generated buckets into well-formed buckets, \textsc{Veil} further pads the buckets.  In the rest of this section, we describe the randomized bucket creation algorithm $\mathcal{BC}$ (\S\ref{sec: bucketing})  and then discuss the padding algorithm that we refer to as \emph{basic padding} (\S\ref{sec: disjoint}).
%\subsection*

\noindent 
\textbf{\textsc{Veil} Steps.}\label{sec:veil_steps}
We next discuss the components of \textsc{Veil} based on the
randomized bucket creation.

\paragraph{Bucket Creation  $\mathcal{BC}$}
\label{sec: bucketing}

%==========ALGORITHM FOR data encryption
\LinesNotNumbered \begin{algorithm}[!t]
% \footnotesize

\small
%=========Inputs=========Inputs=========Inputs=========Inputs=========Inputs=========
\textbf{Inputs:} $\mathit{SA}$: a desired storage amplification. $\mathit{QA}$: a desired query amplification. $f$: fanout. $L_{\mathit{max}}$: maximum key size in the dataset. $\mathcal{D}$: dataset. $\mathcal{K}$: key set. $\textnormal{\textsf{MAP}}$: a map function for each key and their corresponding buckets.

\textbf{Outputs:} $\mathcal{B}$: a list of buckets; $\mathcal{S}$: a stash.

\nl $\ell_b\leftarrow \lceil {QA\cdot L_{\mathit{max}}}/{f} \rceil$ \Comment{bucket size} \nllabel{ln:calculate_b_size}

\nl $n\leftarrow \lceil{SA\cdot|\mathcal{D}|}/{\ell_b}\rceil$ \Comment{the total number of buckets} \nllabel{ln:calculate_b_num}

\nl $\mathcal{B}\leftarrow \mathit{create}\_\mathit{empty}\_\mathit{buckets}(n)$ \Comment{create $n$ empty buckets} \nllabel{ln:create_empty_b}

% \nl $\mathit{map} \leftarrow []$

% \nl \For{$k \in \mathcal{K}$}{\For{$i\in [1,f]$}{$\mathit{map}[k]$.$\mathit{add}$($\mathcal{H}(k|i)$)}}

\nl $\mathit{shuffle}(\mathcal{D})$\Comment{mix the dataset $\mathcal{D}$} \nllabel{ln:mix_d}

\nl $\mathcal{S}\leftarrow []$ \Comment{a local stash}  \nllabel{ln:create_stash}

\nl \For{each tuple $\mathit{t}\in \mathcal{D}$}{ 

\nl $\mathit{key}\leftarrow \mathit{t}.\mathit{extract}\_\mathit{key}()$ \Comment{extract the key from $t$} \nllabel{ln:get_key}

\nl $\mathit{bucket}\_\mathit{id}\leftarrow \mathit{find}\_\mathit{smallest}\_\mathit{bucket}(\textnormal{\textsf{MAP}}(\mathit{key}))$ \nllabel{ln:get_smallest_bucekt}

\Comment{find the smallest bucket from the $f$ buckets of the key}

\nl \eIf{$\mathcal{B}[\mathit{bucket}\_\mathit{id}]$ is not full }{
\nl $\mathcal{B}[\mathit{bucket}\_\mathit{id}].\mathit{add}(\mathit{t})$ \Comment{add $t$ to the smallest bucket} \nllabel{ln:add_to_smallest_b}
}{ 
\nl $\mathcal{S}.\mathit{add}(\mathit{t})
\nllabel{ln:add_to_stash}$ \Comment{all buckets are full: add $t$ to stash}

}

}

\nl $\mathbf{return} $ $\mathcal{B}, \mathcal{S}$

\caption{$\mathcal{BC}$: Random Bucket Creation Algorithm.}
\label{alg: random_bucketing}
\end{algorithm}
\setlength{\textfloatsep}{0pt}

The pseudo-code for $\mathcal{BC}$ is presented in Algorithm~\ref{alg: random_bucketing}. Initially, the algorithm calculates the bucket size $\ell_b$, based on the maximum key size $L_{\mathit{max}}$, a user desired query amplification ($\mathit{QA}$), and the fanout $f$ (Line~\ref{ln:calculate_b_size}). Subsequently, the algorithm determines the number of buckets $n$ according to the desired storage amplification ($\mathit{SA}$) and the data size $|\mathcal{D}|$ (Line~\ref{ln:calculate_b_num}).
\begin{equation}
\ell_b = \frac{QA\times L_{\mathit{max}}}{f}, n = \frac{\mathit{SA}\times |\mathcal{D}|}{\ell_b} \label{eq: bucket_size_num}
\end{equation}
 % to prevent any leakage that may occur from the existing ordering of key-value pairs in the database  (Line~\ref{ln:mix_d}). % -- see Appendix~\ref{} on how this ordering can lead to leakage. 

Next, the algorithm generates $n$ empty buckets (Line~\ref{ln:create_empty_b}) and shuffles the dataset to mix the key-value pairs of different keys (Line~\ref{ln:mix_d}). 
Additionally, a local stash is established for key-value pairs that cannot fit into a bucket (Line~\ref{ln:create_stash}). 
For each key-value pair $\langle k_i, v\rangle$ arranged in a random order, 
the algorithm identifies the corresponding $f$ bucket-ids for $k_i$ based on $\textnormal{\textsf{MAP}}(k_i)$ (\textit{i}.\textit{e}., Algorithm~\ref{alg: map})
and locates the smallest among the $f$ buckets corresponding to $k_i$ (Line~\ref{ln:get_smallest_bucekt}) to place the key-value pair (Line~\ref{ln:add_to_smallest_b}). If the bucket is at capacity, indicating that all $f$ buckets corresponding to $k_i$ are full, the key-value pair will be placed into the local stash (Line~\ref{ln:add_to_stash}). 

{\color{black} 
Note that shuffling (in Line~\ref{ln:mix_d}) prevents the adversary from learning the order in which keys
are inserted into buckets. If we insert keys in the order in which they appear in the database,
the adversary may be able exploit such information to gain information about the ciphertext.
Say $\textsf{MAP}(k_1)=\{B_1, B_2, B_3\}$ and $\mathit{QA}=1$. Thus, all records in the 3 buckets are for $k_1$. Records of any other key mapped to these buckets will go to stash. Now, if an adversary gets to learn that a query $q$ is for $k_1$, it will learn that all records in $B_1$, $B_2$, and $B_3$ are for the same key $k_1$. Shuffling prevents such a situation. Shuffling is implemented by permuting records in the database prior to creating 
buckets.
}

Also, note that while employing $\mathcal{BC}$ utilizes user-defined $\mathit{SA}$ and $\mathit{QA}$ to determine the bucket size and the number of buckets. Consequently, the desired $\mathit{SA}$ sets a limit on the number of fake records needed to pad the buckets later.

\paragraph{Padding} 
As the buckets created by Algorithm~\ref{alg: random_bucketing} may contain different numbers of records. In order to generate equi-sized buckets of size $\ell_b$, we pad the buckets with fake records once $\mathcal{BC}$ has terminated. 
% {\color{purple}More formally, for each bucket $B_j \in \mathcal{B},$ if $|B_j| < \ell_b$, then add fake values.  
% We refer to the set of buckets in $\mathcal{B}$ wherein all the 
%  buckets $B_j$ have been padded with fake records as  
%  $\mathcal{B}_{f}$. Note that in $\mathcal{B}_{f}$ all buckets $B_p$ are equisized while the original set of buckets in $\mathcal{B}$ might not be.

% \paragraph{ Outsourcing. }
% Let $\mathcal{B}_f$ be the set of well-formed buckets created by 
% \textsc{Veil} (Algorithm~\ref{alg: random_bucketing}) after padding.
% Data in \textsc{Veil} is outsourced as consisting of
% two parts: an encrypted set of real and fake records in the 
% database (\textit{i}.\textit{e}. in the union of all the buckets in $\mathcal{B}_f$)
% and a map for each bucket $B_p$.

% \begin{itemize}
%     \item \emph{Encrypted Record Set}: The set of all records real or fake, in each bucket $B_p \in \mathcal{B}_f$,
%  are appropriately encrypted, shuffled, and outsourced as a record set. Each record is associated with a RID which will be used in a multimap index. 

%     \item \emph{Multimap Index:} consists of a map for each bucket $B_p \in \mathcal{B}_f$, denoted by 
%     $Mmap(B_p)$, which consists of a list of RIDs associated with  records in $B_p$.
% \end{itemize}
% The encrypted record set, along with the $Mmap[B_p]$  for each 
%  bucket $B_p$ is outsourced to the server. 
% }

{\color{black}
\paragraph{Data Outsourcing.} Let $\mathcal{B}_f$ be the set of well-formed buckets created by 
\textsc{Veil} (Algorithm~\ref{alg: random_bucketing}) after padding. Finally, the encrypted record set and multimap index are created and outsourced to the cloud, as explained in~\S\ref{sec:components of veil}.

\paragraph{Query Evaluation and Filter} A query for $k_i$ is executed by fetching $f$ buckets from the cloud, and then filter operation is executed at the client, as has been explained in~\S\ref{sec:components of veil}. } \\
%\textsc{Veil} employs $\textnormal{\textsf{MAP}}(k_i)$ to determine the $f$ bucket-ids associated with $k_i$, and subsequently retrieves all records in the corresponding buckets from the public cloud.
% {\color{purple}For each such bucket $B_p\in \textnormal{\textsf{MAP}}(k_i)$, the server retrieves each encrypted key-value pair that is indexed using $B_p$, \textit{i}.\textit{e}., $\langle B_p, E(k, v)\rangle$.

% {\color{black}
% \paragraph{Filtering}
% Finally, \textsc{Veil} executes filtering operation as explained in~\S\ref{3.2}. }

% decrypts the records using metadata stored at the client. On decryption, it filters away all records not corresponding to the 
% query keyword. In addition, it retrieves records in the stash, if any, corresponding to the query keyword to determine the final answer to the query. Since the filtering step remains identical in all strategies  developed in the paper, henceforth, we will not discuss filtering in the remainder of the paper.

\noindent
\textbf{Discussion:}
Analysis of \textsc{Veil} based on both performance \& security is formally presented in Appendix \S\ref{sec:security_analysis_veil}. 
From the performance perspective, \textsc{Veil} provides guaranteed storage and
query amplification as specified by the user while ensuring that the expected size of stash is zero. From the
security perspective, \textsc{Veil} is secure based on the security requirement VSR in \S\ref{sec:adv}.

\section{Reducing Storage Amplification }\label{sec:overlapping}
% \BB
\textsc{Veil} creates equal-sized buckets by padding
each bucket to the same size, as discussed in \S\ref{sec:random_bucketing}. The resulting buckets are
``disjoint'', indicating that no two buckets share common records. In this section, we propose an optimization to \textsc{Veil} that 
reduces storage amplification $\mathit{SA}$ \emph{without} increasing the stash size or query amplification $\mathit{QA}$. We introduce an alternate strategy called
\textsc{Veil-O} that allows buckets to share records, thereby decreasing the need to insert fake records to equalize the bucket sizes. The letter 
O represents ``overlapping'' between buckets due to the common records.
\begin{definition}
\label{def: overlapping}
Given a set of buckets $\mathcal{B}$ where each bucket contains key-value pairs, we say two distinct buckets $B_p$ and $B_q$ in $\mathcal{B}$ are \emph{overlapping} if there exists a key-value pair $\langle k,v \rangle$ in both $B_p$ and $B_q.$ The records shared by the overlapping buckets are referred to as \emph{common records}.  The number of common records in $B_p \cap B_q$, \textit{i}.\textit{e}., $|B_p \cap B_q|$, is the \emph{overlapping size}.  $\Box$
\end{definition}

Intuitively, the optimized strategy allows
buckets to borrow key-value pairs from other buckets instead of inserting fake records to pad themselves
to the desired bucket size $\ell_b$, thereby reducing the storage amplification $\mathit{SA}$.

\subsection{Overlapping Buckets}\label{sec:veil_o_intro}
Creating well-formed buckets by borrowing records from other buckets may seem straightforward. However, borrowing records indiscriminately may lead to information leakage.

\begin{example}\label{example:unsecure_clustering}
Consider a dataset $\mathcal{D}$ with three keys, ${k_1, k_2, k_3}$, containing 3, 1, and 3 values, respectively. Let the desired fanout $f$ be 1, and let $\mathcal{D}$ be partitioned into three buckets $\mathcal{B}=\{B_0, B_1, B_2\}$ of size 3, where $B_0$ contains records of $k_1$, $B_1$ contains records of $k_2$, and $B_2$ contains records of $k_3$. To create equal-sized buckets, we allow $B_1$ to borrow records from $B_0$ and $B_2$ to increase its size to 3, as illustrated in Fig.~\ref{fig: overlapping_unsecure_example}. 
Observe that in this example, the buckets are well-formed: each bucket is equal-sized and consistent with the $\textnormal{\textsf{MAP}}$ function, which maps $k_1$, $k_2$, and $k_3$ to $B_0$, $B_1$, and $B_2$, respectively.

\begin{figure}[h]
% \BBB\B
\centering
\includegraphics[width=6cm]{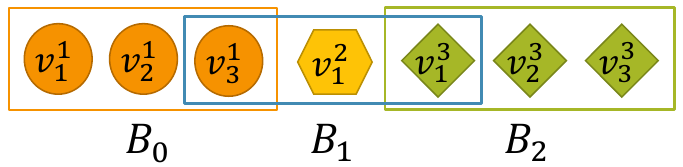}
% \BBB
\caption{An Intuitive Example of Unsecure Buckets} \label{fig: overlapping_unsecure_example}
% \B\BBB\B
\end{figure}

Such overlapping buckets can, however, lead to leakage. An adversary may infer buckets for some keys using the data distribution and access pattern (\textit{i}.\textit{e}., the records retrieved when a query is executed). Let's examine how an adversary could infer that $\textnormal{\textsf{MAP}}(k_2) = B_1$ in the example above. 
Assume the adversary observes three queries that retrieve records in buckets $B_0$, $B_1$, and $B_2$, respectively. $B_0$ and $B_1$ share a common record (in the intersection of $B_0$ and $B_1$), so do $B_1$ and $B_2$. Note that $B_0$ and $B_2$ do not intersect.
Given that $k_1$ and $k_3$ each have 3 records, and $k_2$ has 1 record, the adversary can easily deduce that neither $k_1$ nor $k_3$ could be mapped to $B_1$. If either were, all records in $B_1$ would correspond to $k_1$ (or $k_3$), in which case the other key with 3 records (\textit{i}.\textit{e}., $k_3$ or $k_1$) would not have enough space for its records in either $B_0$ or $B_2$, since at least one of the records in those buckets (\textit{i}.\textit{e}., the intersecting record with $B_1$) belongs to $k_1$ (or $k_3$).
As a result, it must be the case that $\textnormal{\textsf{MAP}}(k_2) = \{B_1\}$, and the first three records and the last three records in Fig.~\ref{fig: overlapping_unsecure_example} are for keys $k_1$ or $k_3$, respectively. Thus, the adversary not only learns which query is for key $k_2$, but also which ciphertext (in this case, the ciphertext in $B_1$ that does not intersect with either $B_0$ or $B_2$) corresponds to the key-value pair for key $k_2$.$\Box$
\end{example}

{\color{black}
 %push to paper
As shown in the example above,
while allowing buckets to overlap
can help ensure that results
returned for queries are equi-sized, 
the resulting  access pattern (which specific
records get retrieved) could lead to 
inferences about the keyword. 
Of course, if the underlying system used
an  access pattern hiding technique such   
as ORAM to prevent access pattern leakage, the leakage above would not occur.
But as  we mentioned in \S\ref{sec:intro}, technques 
such as ORAM are can be computationally prohibitive. Instead, in \textsc{Veil-O}  we devise 
 clever ways to share records amongst buckets such that resulting access patterns  do not leak additional information. This is described in the remainder of this section.

}
To prevent leakage for overlapping buckets, we must ensure that the adversary cannot distinguish between buckets based on the intersections between them. For a bucket, we define the notion of its {\bf neighborhood} as the set of all buckets it overlaps with.

\begin{definition}\label{def:neighborhood}
Given a set of buckets $\mathcal{B}$, for a bucket $B_j \in \mathcal{B}$, we define its \emph{neighborhood} as %$\mathcal{N}(B_j)= \{ B_p\in \mathcal{B} \mid B_j \text{ and } B_p \text{ share common records}\}$
$\mathcal{N}(B_j)= \{ B_p\in \mathcal{B} \mid |B_j \cap B_p|>0\}$. 
We call each bucket in $\mathcal{N}(B_j)$ a neighbor of $B_j$. 
%For a bucket of $B_p$, we call each bucket in $\mathcal{N}(B_p)$, a \emph{neighboring bucket} and denote them with $B^j_p$ where $1 \leq j \leq \mid \mathcal{N}(B_p)\mid.$
\end{definition}
%REPEATING In the example above, the intersection between buckets $B_2$ and each of $B_1$ and $B_3$ contains one record, while the intersection of $B_1$ and $B_3$ is empty, making the buckets distinguishable. This, combined with the data distribution, can lead to leakage.
The well-formed bucket criteria in Definition~\ref{def: well_form_bucket}, while sufficient, is not necessary to prevent leakage, thus we generalize the well-formed criteria for overlapping buckets.
%REPEATING: To prevent the adversary from gaining further knowledge based on intersections between overlapping buckets, we generalize the criteria for well-formed buckets in the context of overlapping buckets.

\begin{definition}[\textbf{Well-Formed Buckets with Overlap.}]\label{def:well_formed_overlapping}
Let $\mathcal{D}$ be a key-value dataset, $\mathcal{K}$ be the set of keys in $\mathcal{D}$, and $\mathcal{B}$ be the set of $n$ buckets created over $\mathcal{D}$.
Let \textnormal{\textsf{MAP}}(*) be the function that maps keys in $\mathcal{K}$ to $f$ buckets in $\mathcal{B}$. We say that the buckets in $\mathcal{B}$ are well-formed if and only if  %we have:
\begin{enumerate}[noitemsep,nolistsep]

    \item  \emph{Equal bucket size}. For all buckets $B_p$ and $B_{q}{\in}\mathcal{B}$, we have $|B_p| {=} |B_q|$.
    \item {\color{black}\emph{Constraints on Overlap.} 
    \begin{itemize}
        \item \emph{Equal sized neighborhood.}  For all buckets $B_p$ and $B_{q}\in\mathcal{B}$,$|\mathcal{N}$($B_p$)$|= |\mathcal{N}$($B_{q}$)$|$.
        \item \emph{Equal overlapping size}.  For all   $\mathcal{B}_1$ and $\mathcal{B}_2$ in $\mathcal{P}(\mathcal{B})$, where 
        $\mathcal{P}(\mathcal{B})$ is the power set of $\mathcal{B}$, with
    \begin{itemize}
        \item $|\mathcal{B}_1| = |\mathcal{B}_2|$,
        \item for all $B_1, B_1'' \in \mathcal{B}_1$, buckets
        $B_1'$ and $B_1''$ are overlapping, and for all $B'_2, B_2'' \in \mathcal{B}_2$,
        %we have $B'_2 \bigcap B_2'' \neq \varnothing$ 
        buckets $B'_2$ and $B_2''$ are overlapping,
    we have  $|\bigcap\limits_{B \in \mathcal{N}(B_p)} B| = |\bigcap\limits_{B' \in \mathcal{N}(B_q)} B'|$.      
    \end{itemize} 
    \end{itemize}
    \item \emph{Consistent mapping.} For all buckets $B_p\in\mathcal{B}$  and all key $k_i\in\mathcal{K}$, if $B_p$ contains one or multiple records of $k_i$, then $B_p \in \textnormal{\textsf{MAP}}(k_i)$.}
$\Box$
    \end{enumerate}
\end{definition}

% \begin{theorem}
% \label{th: veil_o}
% Let $\mathcal{D}$ be a dataset with the key set $\mathcal{K}$, and $\mathcal{B}$ be a set of buckets created over $\mathcal{D}$ using $\mathcal{BC}$. 
% Suppose we use a strategy to create well-formed overlapping buckets, we have 
% $|\bigcap\limits_{B \in \mathcal{B}_p} B| = |\bigcap\limits_{B' \in \mathcal{B}_q} B'|$. 
% Then the strategy
% is secure with respect to VSR.
% \end{theorem}

Intuitively, the well-formed definition above requires each bucket to overlap with the same number of buckets, and the number of common records between a subset of overlapping buckets of any size to be the same. This ensures buckets are indistinguishable from each other. 
%Buckets created using the random bucketing strategy that satisfies the well-formed overlapping criteria above, prevent leakage. 
{\color{black} Note that the well-formed definition in \S\ref{def: well_form_bucket} is subsumed by the definition above - \textit{i}.\textit{e}., if buckets 
are not overlapping, the above definition reduces to that of the well-formed buckets as in Definition~\ref{def: well_form_bucket}.
Henceforth, by well-formed, we refer to the well-formed buckets with overlap. 
%In the following section, we present a strategy based on the above criteria.

To form well-formed overlapping buckets, we superimpose 
a structure of a random $d$-regular graph over $\mathcal{B}$, 
the set of buckets. In the graph, $G = (V,E)$, each vertex $v_i \in V$ corresponds to the bucket $B_j \in \mathcal{B}$, and edges correspond to $d$ neighbors, which we will see are assigned randomly. For illustration, let us consider a  situation where $d$ is 3, thus, each node has  exactly 3 neighbors 
 (\textit{i}.\textit{e}., intersection of any four buckets is empty). Fig.~\ref{fig: d_regular_graph_overlapping_buckets} illustrates such a 3-regular graph with an initial set of buckets $\mathcal{B}$= $\{B_0, B_1, B_2, B_3\}$ 
 and the corresponding neighbors.
% In this section, we consider a special case in Definition~\ref{def:well_formed_overlapping}-Requirement (2), where $|\mathcal{B}_1| = |\mathcal{B}_2| = 2$, \textit{i}.\textit{e}.,
% the subsets of $\mathcal{B}$ only have two buckets in them. This entails that a record can exist in at most two buckets, \textit{i}.\textit{e}., a record may reside only in one bucket, or be shared between two buckets.
%We utilize a regular graph $\mathcal{G}$ with a degree of $d$, in which the nodes (also known as vertices) symbolize buckets. Each edge within $\mathcal{G}$ links a pair of neighboring buckets containing shared records. As $\mathcal{G}$ is a $d$-regular graph, each node has $d$ incident edges, implying that each bucket has $d$ neighboring buckets.
Neighboring buckets borrow/lend records to each other, and hence may overlap. 

With the $d$-regular graph $\mathcal{G} = (V, \mathcal{E})$, we further associate the following:
(\textit{i}) \textbf{Weight} (denoted by  $\delta$) that specifies the number of records shared between one bucket and each of its neighbors. Note that $\delta$ is a constant - \textit{i}.\textit{e}., the number of records shared between neighbors is always equal to $\delta$. 
(\textit{ii}) \textbf{Direction} that is a function $\mathit{dir}(v_p, v_q): E\rightarrow\{0,1\}$
that assigns to each edge in $G$ a direction representing which corresponding bucket borrows/lends to its neighbor. Suppose bucket $B_p$ borrows from bucket $B_q$ then $\mathit{dir}(v_p, v_q) = 0$ and $\mathit{dir}(v_q, v_p) = 1$.
(\textit{iii}) \textbf{Labels} that indicate the specific common records that neighbors borrow/lend from each other. For an edge $(v_p, v_q)$, if $\mathit{dir}(v_p, v_q) = 0$, then the label $\mathit{label}(v_p, v_q)$ is the set of the records that  $B_p$ borrows from $B_q$. Conversely, if $\mathit{dir}(v_p, v_q) = 1$, then $\mathit{label}(v_p, v_q)$ is the set of the records that $B_q$ borrows from $B_p$.
We illustrate an example graph $\mathcal{G}$. {\blue{To simplify notations, we only show values in the buckets.}}
}

% \begin{figure*}[htbp]
%     \centering
%     \begin{minipage}{0.33\textwidth}
%         \centering
%         \includegraphics[scale=0.25]{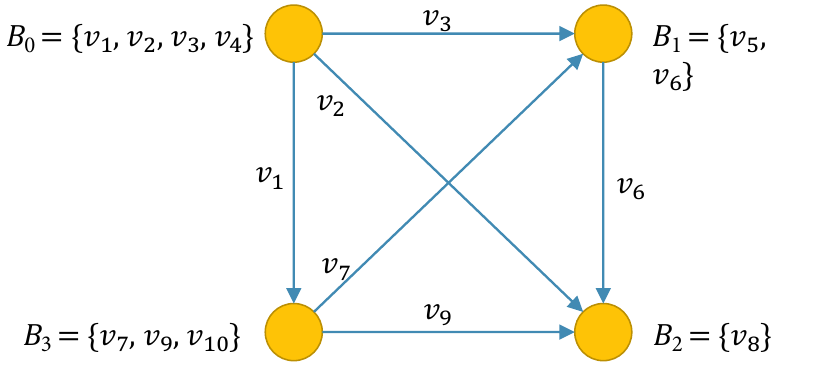}
%     \caption{An Example of a 3-Regular Graph. (To simplify notations, we only show the values in the buckets.)}\label{fig: d_regular_graph_overlapping_buckets}
%     \end{minipage}%
%     % \hfill
%     \begin{minipage}{0.33\textwidth}
%         \centering
%         \includegraphics[width=\linewidth]{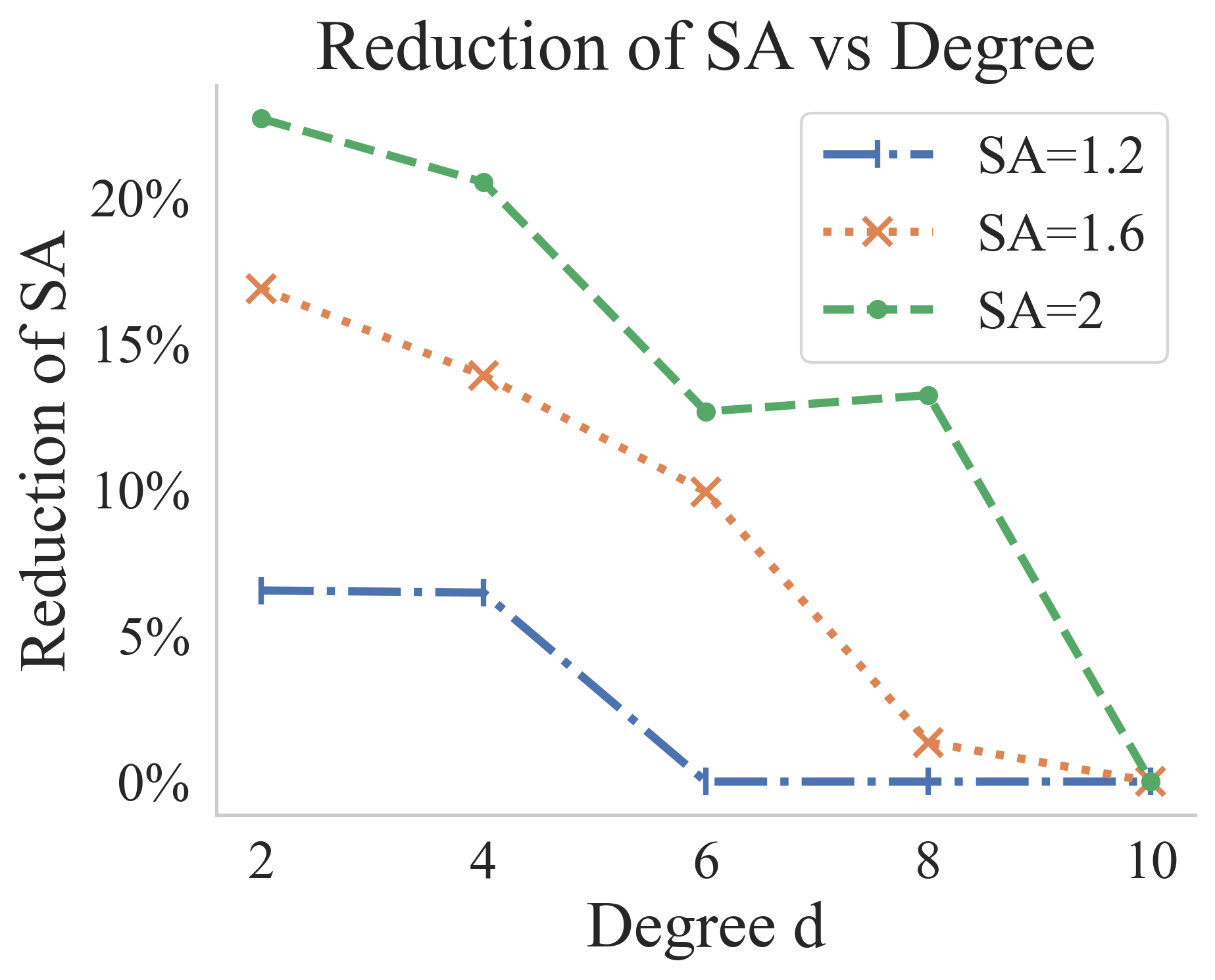}
%         \BB\BBB\BBB\BBB
%         \caption{Impact of Degree ($\mathit{QA}$=1, $f$ = 6)}\label{fig: exp_real_sa_vs_d}
%     \end{minipage}%
    
% \end{figure*}

%\begin{example}
Suppose the generated buckets $\{B_0, B_1, B_2, B_3\}$ contain 4, 2, 1, and 3 records, respectively, with the bucket size to be 4.
We can  pad each bucket to the size of 4 as follows.
 $B_0$ currently contains 4 records and, thus, cannot accept additional values. It can, however, provides one value to each of its neighbors $B_1$, $B_2$, and $B_3$. In turn, $B_3$ receives one record from $B_0$, achieving the target bucket size of 4, and contributes one record to each of $B_1$ and $B_2$. Now, $B_1$ receives two records from $B_0$ and $B_3$ to reach the desired size of 4, subsequently offering one record to $B_2$. Finally, $B_2$ receives three values from $B_0$, $B_1$, and $B_3$ to fulfill its size requirement of 4. In Fig.~\ref{fig: d_regular_graph_overlapping_buckets}, each bucket overlaps with 3 buckets, the weight over each edge is 1, and each bucket is padded to size 4, thus the buckets are indistinguishable from the adversary's perspective. 
%\end{example}

Observe that overlapping strategies reduce $\mathit{SA}$ by allowing sharing records across buckets. In Fig.~\ref{fig: d_regular_graph_overlapping_buckets}, each bucket overlaps with 3 buckets to achieve the desired bucket size. The number of fake records required is 0. On the other hand, when simply padding each bucket to size 4 using fake records, $B_1$, $B_2$, and $B_3$ require 2, 3, and 1 fake records, respectively, and 
the total number of fake records is 6.

%\begin{corollary}
%Let $\mathcal{D}$ be a dataset with the key set $\mathcal{K}$, and $\mathcal{B}$ be a set of buckets created over $\mathcal{D}$ using $\mathcal{BC}$. 
%Suppose we use a strategy to create well-formed overlapping buckets
%such that for all $\mathcal{B}_{p}$ and $\mathcal{B}_q$  in $\mathcal{P}(\mathcal{B})$ with $|\mathcal{B}_{p}|=|\mathcal{B}_{q}|$ = 2, we have 
%$|\bigcap\limits_{B \in \mathcal{B}_p} B| = |\bigcap\limits_{B' \in \mathcal{B}_q} B'|$. 
%Then the strategy
%is secure with respect to VSR.
%\end{corollary}

%This follows from Theorem~\ref{th: veil_o}, where the size of all subsets of $\mathcal{B}$ is two.

% \begin{theorem}[Overlapping Theorem]
% \label{th: overlapping_buckets}
% Consider \textnormal{\textsf{MAP}} as a mapping of keys to clusters and buckets. Each mapped bucket contains all values of the corresponding key, and a value may appear in one or more buckets. If the clusters satisfy the following conditions: (i) each cluster forms a $d$-regular graph of $\beta$ buckets; (ii) each bucket is well-formed; (iii) the weight on each edge is identical; and (iv) a common value can only be shared by two buckets, then the corresponding \textsf{retrieval} can hide volumes and the adversary cannot tell the differences between clusters.
% \end{theorem}

{\color{black}
Based on the idea to pad buckets to make them
equi-sized by lending/borrowing records from neighboring buckets, we develop a strategy, entitled 
\textsc{Veil-O}, that ensures the security requirement  VSR. 
\textsc{Veil-O} uses the same randomized strategy to create initial buckets $\mathcal{B}$ as that used in \textsc{Veil}. It, however, 
allows buckets to overlap to make the buckets equi-sized as discussed in the following subsection.

\subsection{\textsc{Veil-O}: Padding}
\label{sec:veil_o}

\textsc{Veil-O} uses a padding strategy that allows sharing of records between buckets to reduce the number of fake records added, thereby reducing $SA$.
Given a KV-dataset $\mathcal{D}$, 
\textsc{Veil-O} first creates a set of $n$ buckets $\mathcal{B}=\{B_0, \ldots, B_{n-1}\}$ using the randomized bucket creation algorithm $\mathcal{BC}$ (\S\ref{sec: bucketing}), then
implements a padding strategy that, unlike \textsc{Veil}, enables buckets to borrow records from their neighboring buckets, creating  well-formed buckets with overlap as defined in 
Definition~\ref{def:well_formed_overlapping}.
The padding strategy consists of a sequence of steps (an overview of the strategy is presented in Algorithm \ref{alg:VEIL-OPadding}) that starts by first creating a 
 $d$-regular graph from the buckets in $\mathcal{B}$ 
(function \textsc{Graph Creation} (GC)). Such a graph has
$n$ nodes, each corresponding to a bucket in $\mathcal{B}$, and  undirected edges between overlapping buckets (\textit{i}.\textit{e}.,
buckets that share records). 
The graph $\mathcal{G} = (V, \mathcal{E})$ is represented 
as $\mathcal{B}$, representing
the set of buckets (vertices), and $\mathcal{M}$, an 
adjacency matrix corresponding to $\mathcal{E}$.
The next step  determines the maximum possible  
 overlap size
 between neighbors that will still result in well-formed overlapping buckets, \textit{i}.\textit{e}., function \textsc{Maximum Overlap Determination} (\textsc{MOD}).
 The \textsc{MOD} function returns the number of records
 that a bucket $B_p$ can borrow/lend from/to a specific 
 neighbor $B_q$. Next, \textsc{Veil-O} uses the function 
\textsc{Edge Direction Determination (EDD)}
to determine the directions of edges, \textit{i}.\textit{e}., for each pair of neighbors $B_p$ and $B_q$ whether $B_p$ borrows or lends records from/to $B_q$. This determines $\mathit{dir}(v_p, v_q)$ for each $(v_p, v_q) \in \mathcal{E}.$ We represent 
$\mathit{dir}(v_p, v_q)$ by transforming 
 $\mathcal{M}$ into an adjacency matrix $\overline{\mathcal{M}}$ that represents
 directed edges. Thus, if $\mathit{dir}(v_p, v_q) = 0$ and $\mathit{dir}(v_q, v_p) = 1$ (\textit{i}.\textit{e}., 
 $B_p$ borrows from $B_q$), 
 we remove the edge $(v_p, v_q)$ from 
 $\overline{\mathcal{M}}$ but let the edge
 $(v_q, v_p)$ remains.
 % In the former case, the undirected edge between
% $B_p$ and $B_q$ are converted into a directed edge from
% $B_p$ to $B_q$ while in the latter case  from $B_q$ and $B_p$. 
Next, fake records are added to buckets to ensure
that the size of each bucket consisting of all records
assigned to it originally, in addition to all the 
records it borrows from its neighbors, plus 
fake tuples added up to exactly the bucket
size. 
Addition of fake tuples changes the 
buckets in $\mathcal{B}$  resulting in $\mathcal{B}^f$, where $f$ represents ``full''. A bucket $B_p \in \mathcal{B}^f$ contains fake records in 
addition to the original records associated
with $B_p$.
Next, the algorithm determines
the labels to be associated with each
edge in the $d$-regular graph $\mathcal{G}$. Each label represents the records that are borrowed/lent between two overlapping buckets. 
As a final step, 
using the labels generated, and the
buckets in $\mathcal{B}^f$, 
 \textsc{Veil-O} generates the 
set of well-formed buckets which are then outsourced.
%As we will see, the $d$-regular graph, along with its weight $\delta$, the functions $dir$ and $label$ together represent well-formed buckets.

\begin{figure*}[htbp]
% \BBB\BB
    \centering
    \begin{minipage}{0.33\textwidth}
        \centering
         \includegraphics[scale=0.355]{images/d_regular_graph_overlapping_buckets_modified.pdf}
% \BBB
\vspace{0.05em}
\caption{A 3-Regular Graph. } \label{fig: d_regular_graph_overlapping_buckets}
    \end{minipage}%
    % \hfill
    \begin{minipage}{0.33\textwidth}
        \centering
         \includegraphics[scale=0.25]{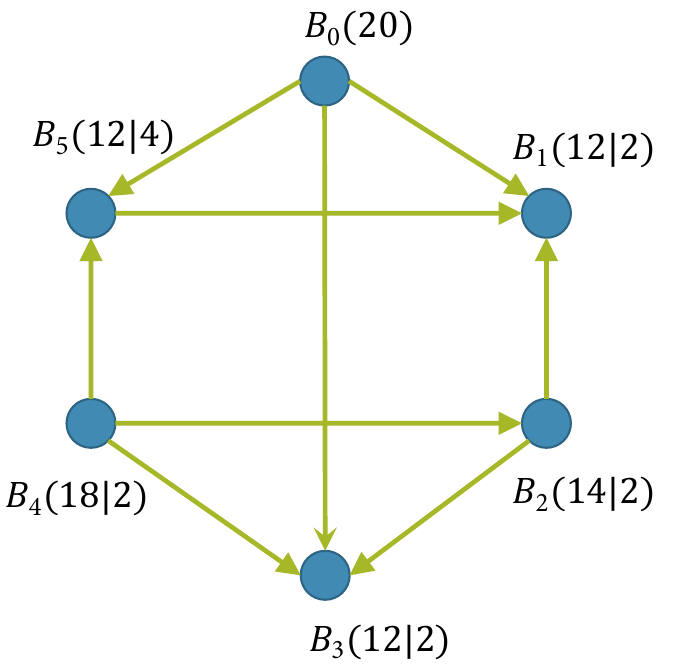}
         \BBB
    \caption{A 3-Regular Graph of \textsc{Veil-O}}\label{fig: single_cluster}
    \end{minipage}%
    \begin{minipage}{0.33\textwidth}
        \centering
        
{\scriptsize 
\begin{tabular}
{|m{0.2cm}|m{0.2cm}|m{0.2cm}|m{0.2cm}|m{0.2cm}|m{0.2cm}|m{0.2cm}|}
\hline
  % \centering
& $B_0$ & $B_1$ & $B_2$ & $B_3$ & $B_4$ & $B_5$ \\\hline
$B_0$ & & \cmark & & \cmark & & \cmark \\\hline
$B_1$ & & & & & &  \\\hline
$B_2$ & & \cmark & & \cmark & &  \\\hline
$B_3$ & & & & & & \\\hline
$B_4$ & &  & \cmark & \cmark & & \cmark \\\hline
$B_5$ & & \cmark & & & &  \\\hline

% \bottomrule
\end{tabular}
}

% \end{table}
% \end{center}
\vspace{1.1em}
\caption{Adjacency Matrix $\overline{\mathcal{M}}$} \label{fig: veil-OGraphMatrixExample}

    \end{minipage}%
    % \BBB\BB
\end{figure*}

We illustrate the above described algorithm in  Algorithm~\ref{alg:VEIL-OPadding} and then describe each step in details.  The algorithm takes the set of buckets $\mathcal{B}$ created using $\mathcal{BC}$ as inputs and
returns $\delta$, the overlapping size, \textit{i}.\textit{e}., the number of records shared by any two overlapping buckets, and a 
matrix $\overline{\mathcal{M}}$, representing the $d$-regular graph containing
information about neighbors and a directed edge representing which one of each two neighboring buckets borrow/lends to the other.  The strategy of which specific records a bucket $B_p$ will borrow from/lend
to $B_q$, its neighbor such that the resulting buckets become well-formed will be discussed
separately.
 \\

%\vishal{Change bucket index in figures and write up. (2) Padding algorithm rewriting with individual sub algorithms (3) Outsourcing - S1 and S1 and corresponding Quty strategy Q1 and Q2}
%==========ALGORITHM FOR data encryption

\LinesNotNumbered \begin{algorithm}[!t]

\small
\textbf{Inputs:} $\mathcal{B}$: buckets created using $\mathcal{BC}$ (Algorithm~\ref{alg: random_bucketing}); 
$d$: the number of neighbors for each bucket; 
$\ell_b$: bucket size.

\textbf{Outputs:} $\mathcal{B}_f$: A set of well-formed overlapping buckets;
%
%$\overline{\mathcal{M}}$: An adjacency matrix containing the direction of each edge in the graph;
%
%$\mathcal{L}$: Labels for each edge in the graph
%
\Begin{
{\nl $\mathcal{B}$, $\mathcal{M},  \leftarrow \boldsymbol{\mathit{GC}}(\mathcal{B}, d)$}

{\nl $\delta_{\mathit{MAX}} \leftarrow \boldsymbol{\mathit{MOD}(\mathcal{B}, \mathcal{M})}$ }

{\nl $\mathcal{B}, \overline{\mathcal{M}}, \delta \leftarrow \boldsymbol{\mathit{EDD}(\mathcal{B}, \mathcal{M}, \delta_{\mathit{MAX}}, \ell_b)}$}

{\nl $\mathcal{B}_f, \overline{\mathcal{M}}, \delta \leftarrow \boldsymbol{\mathit{AFV}(\overline{\mathcal{M}}, \delta )}$}

{\nl $\mathcal{L} \leftarrow \boldsymbol{\mathit{LC}(\mathcal{B}_f, \overline{\mathcal{M}}, \delta )}$}

{\nl $\mathcal{B}_f \leftarrow \boldsymbol{\mathit{WFBC}(\mathcal{B}_f, \mathcal{L}  )}$}

%{\nl $\mathcal{EB}\leftarrow \mathit{Encrypt}(\mathcal{B})$}

%{\nl $\mathit{Outsource}(\mathcal{EB})$}

%{\nl $\mathbf{return}$ $\mathcal{B}_f, \overline{\mathcal{M}}, \delta, \mathcal{L}
%$}
{\nl $\mathbf{return}$ $\mathcal{B}_f
$}

}

\caption{\textsc{Veil-O} Padding Algorithm}
\label{alg:VEIL-OPadding}
\end{algorithm}
\setlength{\textfloatsep}{0pt}
\noindent\textsc{\textbf{Step 1. Graph Creation (GC).}} [Algorithm \ref{alg: Veil-O-GC}]
Input to this step are buckets in 
$\mathcal{B}$ and a value of $d$ and the output consists of 
a $d$ regular graph  $\mathcal{G}$ with nodes corresponding to buckets in $\mathcal{B}$ and undirected edges 
 between buckets that will be neighbors to each bucket\footnote{We will experimentally show the selection of $d$ in Experiment 5 in \S\ref{sec: experiment}.}. It is assumed that at least one of $d$ and $|\mathcal{B}|$ are even, as the number of edges in a $d$-regular graph is computed as $d|\mathcal{B}|/2$, and 
if both $d$ and $|\mathcal{B}|$ are odd then a $d$-regular graph cannot be constructed. The graph is represented with $\mathcal{B}$, the set of vertices, and an adjacency matrix $\mathcal{M}$ denoting the edges.
Let $\mathcal{B} = \{B_0, B_1, \ldots, B_{n-1}\}$. 
For a bucket $B_p \in \mathcal{B}$, 
\textsc{Veil-O} assigns a set of $d$ neighboring buckets $B_p^j$ ($j=1,\ldots, d$) using an \emph{ordered
set} of 
carefully chosen  $d$ functions $\mathcal{F} = \langle F_1, F_2, \ldots, F_d \rangle$, where for each function $F_j\in\mathcal{F}$,
$F_j: \mathcal{B} \rightarrow \mathcal{B}.$
We denote 
$F_j(B_p)$ as  $B_p^j$ 
and refer to it as the
$j^{th}$ neighbor of $B_p$. 
The functions in $\mathcal{F}$ are such that  if 
a bucket $B_q$ is a neighbor of $B_p$ (\textit{i}.\textit{e}., for some  function $F_j \in \mathcal{F}$,   $B_q = F_j(B_p)$),  then $B_p$ is also a neighbor of $B_q$, \textit{i}.\textit{e}., there exists a
 $1 \leq j^{\prime} \leq d$ such that $F_{j^{\prime}}(B_q) = B_p$. 

We define a set of such  functions in $\mathcal{F} = \{F_1, \ldots, F_d\}$ as follows.
\begin{itemize}[noitemsep,nolistsep]

         \item For $1 \leq i \leq \lfloor {d}/{2} \rfloor$ we have $F_i(B_p) = B_{q}$ where $q=(p + i)\mod{n}$
         % \item For $\lceil \frac{d+1}{2} \rceil \leq i < \lceil \frac{d+1}{2} \rceil  + \lceil \frac{d-1}{2} \rceil$ we have $F_i(B_p) = B_{(p - (i-\lfloor \frac{d}{2} \rfloor )) \mod{n}}$
         \item For $1 {\leq} i {\leq} \lfloor {d}/{2} \rfloor$, we have $F_{d-i+1}(B_p){=}B_q$ where $q{=}{(p-i+n){\bmod} n}$

         \item If $d$ is odd $F_{{(d+1)}/{2}}(B_p) = B_q$ where $q={(p + ({n}/{2}))\mod{n}}$.\footnote{Note that in this case $n$ must be even.}
\end{itemize} 
We illustrate the functions used for assigning neighbors using 
the example below.
\begin{example}
\label{ex: graphBucket}
% B0, B1, ... B5
% d = 2
% f1 = x +1 mod n
% f2 = x-1 mod n

% 0  -- {1, 5}
% 1 -- {2,0}
% 2 - {1,3}
Consider  six  buckets  $B_0, \ldots, B_5$ in Fig.~\ref{fig: single_cluster}
%with number of records $20, 12, 14, 12, 19, \text{ and } 12$, respectively. 
Thus, $n=6$. Let us consider $d$ to be $3$. %Since $3$ is odd, 
According to rule 1, we get $F_1(B_p) = B_{(p+1)\mod{6}}$,
which finds neighbor bucket $B_1$ for $B_0$, $\ldots$, %$B_2$ for $B_1$, $B_3$ for $B_2$, $B_4$ for $B_3$, 
$B_5$ for $B_4$, and $B_0$ for $B_5$. According to rule 2, we get $F_3(B_p)=B_{(p-1+6)\mod n}$, which finds neighbor bucket $B_5$ for $B_0$,  
$B_0$ for $B_1$, $\ldots$, and $B_4$ for $B_5$. According to rule 3, we get $F_2=B_{(p+\frac{n}{2})\mod n}$, which finds neighbor bucket $B_3$ for $B_0$,  
$B_4$ for $B_1$, $\ldots$, and $B_2$ for $B_5$.
Thus, 
neighbors for buckets will be:
$\mathcal{N}(B_0)=\{B_5, B_1, B_3\}$;  $\mathcal{N}(B_1) = \{B_2, B_0, B_4\}$;
$\mathcal{N}(B_2)= \{B_3, B_1, B_5\}$;
$\mathcal{N}(B_3)=\{B_4, B_2, B_0\}$;
$\mathcal{N}(B_4)=\{B_5, B_3, B_1\}$;
and $\mathcal{N}(B_5)=\{B_6, B_4, B_2\}$. $\Box$
\end{example}

The example above illustrates that  the functions  used to define the  neighbors ensure that  if $B_p$ is a neighbor of $B_q$, 
then $B_q$ is a neighbor of $B_p$. %We prove this in the theorem in Appendix~\ref{th: symFunc}. 
Since the neighbor relationship to buckets is symmetric, we can represent the set of buckets and their neighborhood in the form of a graph  $\mathcal{G}=(V, \mathcal{E})$  with the set $V = \{0, \ldots, n-1\}$ of vertices corresponding to the $n$ buckets. The set $\mathcal{E} = \{\langle B_p,B_q \rangle | B_p \in 
\mathcal{N}(B_q)\}$ are edges.
Note that the resulting graph $\mathcal{G}$ is $d$-regular graph.
Fig.~\ref{fig: single_cluster} 
illustrates the resulting $d$-regular graph for the buckets in Example \ref{ex: graphBucket}.

%==========ALGORITHM FOR data encryption
\LinesNotNumbered \begin{algorithm}[!t]

\small
% \footnotesize
%=========Inputs=========Inputs=========Inputs=========Inputs=========Inputs=========
\textbf{Inputs:} $\mathcal{B}$:Set of buckets $n$ generated using $\mathcal{BC}$. $d$: degree of the graph

\textbf{Outputs:} $\mathcal{B}$: the set of $n$ buckets; $\mathcal{M}$: A adjacency matrix of a $d$-regular (undirected) graph

\nl{\bf Function GC($\mathcal{B}, d$) }
\Begin{
\nl $\mathcal{M} \leftarrow 0$ \Comment{Initialise matrix to 0}

\nl {\For{$p \in [0, n-1]$}{
 \nl { \For{$j \in [1,d]$ \Comment{Assigning $d$ neighbors to $p^{th}$ bucket} }{ 
         \nl {$B_q \leftarrow F_j(B_p)$, $\mathcal{M}[p][q] \leftarrow 1 $} 
    }
    }
}
  \nl $\mathbf{return}\  \mathcal{B}, \mathcal{M}$
}
}
\caption{\textsc{Veil-O}:Graph Creation (GC).}
\label{alg: Veil-O-GC}
\end{algorithm}
\setlength{\textfloatsep}{0pt}
After constructing the $d$-regular graph for a given set of input buckets, \textsc{Veil-O} next
determines the maximum number of records that can be lent/borrowed, \textit{i}.\textit{e}, the weight of each edge denoted with $\delta$; the direction of the edges, \textit{i}.\textit{e}., which bucket lends/borrows; the labels, \textit{i}.\textit{e}., the records that will be shared by neighboring buckets.  We will illustrate this in the following steps. 

% illustrates a 3-regular graph constructed from 6 buckets, 
% listed along with their sizes: $B_1(20), B_2(12), B_3(14), B_4(16), B_5(18), B_6(12)$. 
% After creating the $d$-regular graph $\mathcal{G}$, for each edge in $\mathcal{G}$, $\textsc{Veil-O}$ determines 
%  its weight, label, and direction as discussed next. 
%2) \textbf{weights},
%or, the overlapping size; and 
%3) \textbf{directions} of edges.  %\textit{i}.\textit{e}., determining the ``lender (or ``provider'') and the ``borrower'' (or ``receiver) for each pair of neighboring buckets connected by an edge.
% We design \textsc{Veil-O} to build a $d$-regular graph.
%We will use the graph shown in  Fig.~\ref{fig: veil-OGraphMatrixExample} to help illustrate the steps. 
% Given a bucket $B_p$, and its neighborhood $\mathcal{N}(B_p)$, we will use $L_{max}^{\mathcal{N}(B_p)}$  to denote the number of records in the largest neighbor of $B_p.$
\noindent \textsc{\textbf{Step 2. Maximum Overlap Determination (MOD)}}. [Algorithm \ref{alg: Veil-O-MOD}]
This step finds an initial value (upper bound) of the overlapping size, \textit{i}.\textit{e}., the weight of the edges in the graph, denoted as $\delta$.
\textsc{Veil-O} determines the limiting value for $\delta$ as follows:

\begin{enumerate}
\item For each bucket $B_p \in \mathcal{B}$ such that $|B_p| = \ell_b$ (\textit{i}.\textit{e}., full buckets), 
it must be the case that $\delta \leq \ell_b-L_{\mathit{max}}^{\mathcal{N}(B_p)}$, where   
 $L_{max}^{\mathcal{N}(B_p)}$  denotes the number of records in the largest neighbor of $B_p.$ More specifically, $L_{max}^{\mathcal{N}(B_p)} = \texttt{MAX}_{B_q \in \mathcal{N}(B_p)} |B_q|$. 
To see this, suppose $B_q$ is the largest neighbor of $B_p$. Then it can only borrow no more than $\ell_b - |B_q|$ records and this limits $\delta$ from being larger.

\item It always holds that $\delta \leq \frac{\ell_b}{d}$, as in a $d$-regular graph, each node has $d$ edges, \textit{i}.\textit{e}., 
each bucket has $d$ neighbors, and each pair of neighboring buckets has to share an equal number of records.

\item If $L_{\mathit{min}}$ is the size of the smallest bucket in $\mathcal{B}$, then the overlapping size is at most $\delta\leq\lfloor\frac{\ell_b-L_{\mathit{min}}}{d}\rfloor$.
To construct a $d$-regular graph successfully, the smallest bucket must borrow records from each of its $d$ neighbors. Moreover, it must borrow an equal number of records from each of its neighbors.
% For the bucket $B_{smallest}$ with the smallest number of elements in $\mathcal{B},$
% \emph{The smallest bucket must receive values from all its neighbors}. To successfully build a $d$-regular graph, \textsc{Veil-O} ensures smaller buckets tend to receive more value, while larger buckets are more likely to provide values to their neighboring buckets. In this setting, the smallest bucket must receive values from all its neighbors, otherwise, it's highly likely that larger buckets may not find receivers amongst their neighbors, making the construction of a $d$-regular graph to fail. 
% If the smallest bucket size is $L_{\mathit{min}}$, the overlapping size is at most $\delta\leq\lfloor\frac{\ell_b-L_{\mathit{min}}}{d}\rfloor$.
\end{enumerate}

% The maximum number of values it may lend is 
% \noindent(i)\emph{ Buckets that are full (\textit{i}.\textit{e}., they already have $\ell_b$ KV pairs) must provide values to their neighbors}. 
% as they do not have space to receive values from their neighbors. To implement this, \textsc{Veil-O} examines all buckets which are full in $\mathcal{B}$. Suppose $B_p \in \mathcal{B}$ if a bucket which is full, $\textsc{Veil-O}$ determines their $d$ neighbors (\textit{i}.\textit{e}., the set $\mathcal{N}(B_p)$ of neighboring buckets). It then computes $L_{max}^{\mathcal{N}(B_i} = \texttt{MAX}_{B_i \in \mathcal{N}(B_p)} |B_i|$. The overlapping size $\delta$ must be less than or equal to $\ell_b-L_{\mathit{max}}^{\mathcal{N}(B_p)}$. 

}

The maximum possible overlapping size is then determined by taking the minimum of the values in (1)-(3).
%$\delta=\texttt{MIN}\{\ell_b-L_{\mathit{max}}^{\mathcal{N}}, \lfloor\frac{\ell_b}{d}\rfloor, \lfloor\frac{\ell_b-L_{\mathit{min}}}{d}\rfloor\}$.
Note that the overlapping size is not final and can be adjusted in Step 3.
%==========ALGORITHM FOR data encryption
\LinesNotNumbered \begin{algorithm}[!t]
% \footnotesize
%=========Inputs=========Inputs=========Inputs=========Inputs=========Inputs=========
\small
\textbf{Inputs:} $\mathcal{B}$: the set of $n$ buckets; $\mathcal{M}$:Undirected graph; $d$: degree of graph

\textbf{Outputs:} $\delta_{MAX}$: Maximum overlapping size

\nl{\bf Function MOD($\mathcal{B}, \mathcal{M}, d$) }
\Begin{

{\nl $B_{max}^{\mathcal{N}(B_p)}\leftarrow \mathit{find}\_\mathit{largest}\_\mathit{neighbor}\_\mathit{of}\_\mathit{full}\_\mathit{buckets}(\mathcal{B})$}

{\nl $L_{max}^{\mathcal{N}(B_p)}\leftarrow\mathit{compute}\_\mathit{bucket}\_\mathit{size}(B_{max}^{\mathcal{N}(B_p)})$}

{\nl $B_{\mathit{min}}\leftarrow \mathit{find}\_\mathit{smallest}\_\mathit{buckets}(\mathcal{B})$}

{\nl $L_{\mathit{min}} \leftarrow \mathit{compute}\_\mathit{bucket}\_\mathit{size}(B_{\mathit{min}})$}

\nl $\delta_1 \leftarrow \ell_b - L_{max}^{\mathcal{N}(B_p)}$

\nl $\delta_2 \leftarrow \lfloor \frac{\ell_b}{d}\rfloor$

\nl
$\delta_3\leftarrow \lfloor \frac{\ell_b-L_{\mathit{min}}}{d}\rfloor$

\nl  $\mathbf{return}$ $\textsf{MIN}\{\delta_1, \delta_2, \delta_3\}$

}

\caption{\textsc{Veil-O}:Maximum Overlap Size Determination (MOD).}
\label{alg: Veil-O-MOD}
\end{algorithm}
\setlength{\textfloatsep}{0pt}

\begin{example}
\label{ex: MOD}
    In Fig.~\ref{fig: single_cluster}, the overlapping sizes $\delta$ determined using the three rules are 8, 6, and 2, respectively, the initial value of $\delta =2$. 
\end{example}

%==========ALGORITHM FOR data encryption
\LinesNotNumbered \begin{algorithm}[!t]
\small
% \footnotesize
%=========Inputs=========Inputs=========Inputs=========Inputs=========Inputs=========
\textbf{Inputs:} $\mathcal{B}$: A set of $n$ buckets, $d$: degree of graph; $\mathcal{M}$:Adjacency matrix of the $d$-regular graph;  $\delta_{\mathit{MAX}}$: maximum overlapping size

\textbf{Outputs:} $\mathcal{B}$; Set of $n$ buckets; $\overline{\mathcal{M}}$: An adjacency matrix encoding the direction of each edge in $\mathcal{M}$

\nl{\bf Function EDD($\mathcal{B}, \mathcal{M}, \delta_{MAX}$) }
\Begin{

\nl $\mathit{sort}(\mathcal{B})$ \Comment{Sort buckets by their sizes in increasing order}

{\nl $\delta \leftarrow \delta_{MAX}$}

{\nl $\overline{\mathcal{M}} \leftarrow []$ }

\nl \For{$B_j \in \mathcal{B}$} {
    \nl $L_j \leftarrow \mathit{compute}\_\mathit{bucket}\_\mathit{size}(B_j)$ 

    \nl $\mathcal{N}(B_j)\leftarrow \mathit{find}\_\mathit{neighbors}(B_j, \mathcal{M})$

    \nl $\mathit{sort}(\mathcal{N}(B_j))$ \Comment{In decreasing order by sizes}

    \nl \For {$B_p \in \mathcal{N}(B_j)$}{

        \nl \If {$\overline{\mathcal{M}}[p][j] != 1$ and $\overline{\mathcal{M}}[j][p] != 1$} {

            \nl \If {$L_j + \delta < \ell_b $} {

            \nl $\overline{\mathcal{M}}[p][j] = 1$ \Comment{Assign an edge between $B_p$ and $B_j$}
            
            \nl $L_j \leftarrow L_j + \delta$
        } 

    \nl \Else{ 
    \nl $\overline{\mathcal{M}}[j][p] = 1$  \Comment{Assign an edge between $B_j$ and $B_p$}

    \nl $L_{p} \leftarrow \mathit{compute}\_\mathit{bucket}\_\mathit{size}(B_p)$

        \nl \If{$L_{p} + \delta > \ell_b$}{
                $\mathit{reduce}$ $\delta$
                }
            }
        }

        }
    }
    {\nl$\mathbf{return} \mathcal{B}, \overline{\mathcal{M}}, \delta$}
 }
\caption{$\textsc{Veil-O}$:Edge~Direction~ Determination (EDD)}
\label{alg: Veil-O-EDD}
\end{algorithm}
\setlength{\textfloatsep}{0pt}

%==========ALGORITHM FOR data encryption

% \sharad{
% Padding: input :  d, Buckets;  output is well formed bucket with overlap.

%   G(V,E) undirected (stored as M) <-  create_d-regular graph(B, d) ;
%    delta    <--        Compute_max_overlap(M);
%   G(V,E^(arrow)) directed <-- Determine_edge_dir(G, delta)
%     B0..B_{(n-1)} <-- Create_well_formed_bucket(G^arrow)

%     create_d-regular_graph(B,d)  
%         use F (f1..fd)functions to generate neighbors and use the neighbor to genersate G.
%           For each bucket B 
%              for i = 1.. d
%                   N(B) <- N(B) union F_i(B)
%             M <--- Create Graph  G(V,E) 
%             /*G(V,E) is represented as a adjacency matrix*/
%     Compute Max_Overlap( M ) {
%     what is the function.
%     }
% }

{\color{black}\noindent\textsc{\textbf{Step 3. Edge Direction Determination (EDD).}} [Algorithm \ref{alg: Veil-O-EDD}]
To determine the directions of edges, \textsc{Veil-O} starts from the bucket in $\mathcal{B}$ with the smallest size and proceeds in increasing order of their sizes. This is because buckets with fewer records are less likely to get full, and thus should borrow as many records as possible from their neighbors.
For each bucket that is not full, it must borrow records starting from its neighbor with the maximum size and continue doing so from other neighbors in decreasing order of bucket size. This is to maximize the overlapping size $\delta$, and buckets with more records are more likely to become full to restrict $\delta$, and they should lend values whenever possible. The algorithm creates an adjacency matrix $\overline{\mathcal{M}}$ containing the directions for each edge in $\mathcal{M}$.

\begin{example}
   Continuing Example \ref{ex: MOD}  ($\delta = 2$, $\ell_b = 20$, $d=3$), the Algorithm EDD 
   generates $\overline{\mathcal{M}}$ as shown in Fig.~\ref{fig: single_cluster}.
   For instance,  for bucket $B_1$ with neighbors $B_0$ and $B_2$, 
   $\overline{\mathcal{M}}$  shows edges from both $B_0$ and $B_2$ to  $B_1$. Thus, $B_1$ 
   borrows  $\delta$ (which is 2)  records from each of these buckets. 
\end{example}

\noindent \textbf{Step 4: Adding fake values.}
Once edge directions 
have been determined, for each bucket $B_p$, we add appropriate 
fake records to ensure that all buckets 
are equisized. 
Let $L_p^{home}$ refer to the number of records in $B_p$ when it was
created by $\mathcal{BC}$.
For each neighbor
$B_q \in \mathcal{N}(B_p)$ such that there
is an incoming edge from $B_q$ to $B_p$, 
the bucket $B_p$ borrows $\delta$
records. Let there be $L_p^{in}$ such neighbors. To ensure that bucket $B_p$ has $\ell_b$ records associated, \textsc{Veil-O}
adds $L_p^{\mathit{fake}} = \ell_b - (L_p^{\mathit{home}} + \delta L_p^{\mathit{in}} )$ fake records to $B_p$. The resulting set of buckets with fake values added to them is denoted with $\mathcal{B}^{\mathit{padded}}$.

\begin{example}
Based on $\overline{\mathcal{M}}$ as shown in Fig.~\ref{fig: single_cluster}, $B_1$ adds 20 - (12 + 2$\times$2) = 4 fake records. 
Likewise, $B_3$ adds 2,  and $B_5$ adds 4 fake records. Thus, 12 fake records are added in total. Contrast this with the disjoint strategy that would have required 32 fake records to be added in total.
\end{example}

\noindent\textbf{Step 5: Label Creation (LC).}
For each edge $(v_p, v_q)$ in the graph $\mathcal{G}$, we next determine exactly which records are borrowed/lent between neighbors.
Let $B_q = F_j(B_p)$ be  a neighbor of $B_p$
such that $B_p$ borrows from $B_q$.  Furthermore, let $F_k(F_i(B_p)) = B_p$ (\textit{i}.\textit{e}., 
$B_p$ is the $k^{th}$ neighbor of $B_q$).
Consider
$B_q$ that contains 
$L_q^{\mathit{home}} + L_q^{\mathit{fake}}$ records. Henceforth, 
we will consider $B_q$ to be an ordered set of 
$L_q^{\mathit{home}} + L_q^{\mathit{fake}}$ records
and denote it as $B_q^<$  \footnote{The exact ordering does not matter, it could be any random ordering}. When there is no ambiguity we will
still continue to refer to it as  $B_q$ for notational simplicity.
We will denote the $i^{th}$ record in a bucket $B_p^<$ as $B_p[i]$.
Thus, when the bucket 
$B_p$
  borrows $\delta$ records from a bucket $B_q$, it does so based
  on the ordering of the records in $B_q$. In particular, 
  it borrows records
$B_q[(k-1) \delta  + 1]$, $B_q[(k-1) \delta  + 2]$, $\ldots$, $B_q[(k-1) \delta  +  \delta]$ which are added to the $label[(v_q,v_p)]$.
Note that the above strategy ensures that
no records of bucket $B_q$ are in the label for   more than one neighboring buckets, thereby ensuring that intersection of any three neighboring buckets are always empty. 

\begin{example}
Consider bucket $B_3$ in Fig.~\ref{fig: d_regular_graph_overlapping_buckets}
where $B_3$ borrows from each of its
neighbors $B_0, B_2$ and $B_4$ with 
$\delta = 2$. Since $B_3= F_3(B_0)$,
$B_3 = F_1(B_2)$ and $B_3 = F_2(B_4)$,
it will borrow the  $B_0[5]$ and
$B_0[6]$ from $B_0$. Likewise it will borrow $B_2[1]$ and
$B_2[2]$ from $B_2$. Finally, it  will
borrow  $B_4[3]$ and
$B_4[4]$ from $B_4$. Thus, labels for
$\mathcal{L}[(3,0)]= $ 
$\mathcal{L}[(0,3)]= $  
$\{B_0[5], B_0[6]\}$; 
$\mathcal{L}[(3,2)]= $ 
$\mathcal{L}[(2,3)]= $  
$\{B_2[1], B_2[2]\}$; 
$\mathcal{L}[(3,4)]= $ 
$\mathcal{L}[(4,3)]= $  
$\{B_4[3], B_4[4]\}$; 
$\Box$ \end{example}

\LinesNotNumbered \begin{algorithm}[!t]
\small
\textbf{Input:} $\mathcal{B}_f$: A set of buckets representing the vertices of a $d$-regular graph; 

$\overline{\mathcal{M}}$: Adjacency matrix with $dir$ for each edge in the graph; 

$\delta:$ The weight of each edge in the graph

\textbf{Output:} $\mathcal{L}$: a map of labels for each edge in the graph.
{\nl $\mathcal{L} \leftarrow \varnothing$}
   
{\nl \For{$B_p \in \mathcal{B}_f$} { 

    {\nl \For{$k \in [1, \ldots, n]$}{

    {\nl \If{$\overline{\mathcal{M}}[p][k]$ is $1$}{
        
     %{\nl $q \leftarrow  F_i(p)$}
     
    {\nl $\mathit{labels} \leftarrow \varnothing $}
    
    {\nl \For{$\ell \in [1, \delta]$}{
        
         {\nl $\mathit{labels} \leftarrow \mathit{labels} \cup B_p^<[(k-1) \delta + \ell]$}
         }
    }

    {\nl $\mathcal{L}[(p,k)] \leftarrow \mathcal{L}[(p,k] \cup \mathit{labels}$}
    
    {\nl $\mathcal{L}[(k,p)] \leftarrow \mathcal{L}[k,p] \cup \mathit{labels}$}
    }
    }
 }
 }
 %$\mathcal{L} \leftarrow labels_p$
}
}
\caption{Label Creation (LC)}
\end{algorithm}

\noindent\textbf{Step 6: Well-Formed Bucket Creation (WFBC).}
Once labels have been generated for every two neighboring buckets, the 
well-formed overlapping buckets are
generated as follows:
for each bucket $B_p \in \mathcal{B}^f$,
for each of its neighbor $B_q \in 
    \mathcal{N}(B_p)$, if $dir(B_q,B_p)$   is 1 (\textit{i}.\textit{e}., $B_p$ borrows from $B_q$), we add  to $B_p$ all records in 
    $\mathcal{L}[(q,p)]$ which correspond to the the
    set of records $B_p$ borrows from $B_q$.
    Note that for buckets $B_p$ and
    $B_q$, if they are neighbors, 
    they will contain $\delta$ common records.

\subsection{\textsc{Veil-O}: Outsourcing and Querying}

Outsourcing and querying in \textsc{Veil-O} is identical to
that in \textsc{Veil}. First, for
all real and fake tuples in $\mathcal{B}^f$  are shuffled,
encrypted and outsourced as set of encrypted records along with the RID.
Then to
create a multimap index, 
we construct for each bucket $B_p$ 
a map $Mmap[B_p]$ consisting of RID
of the records in $B_p$. As before, 
the encrypted record set, along with the $Mmap[B_p]$  for each 
 bucket $B_p$ is outsourced to the server.

To execute a query for a keyword $k$, first, the keyword
is mapped to the  $f$ buckets as in \textsc{Veil}. 
For each such bucket $B_p$, the server retrieves the
RID of every record in the multimap index $Mmap[B_p]$ and uses these RIDs to
retrieve data from the encrypted record store which are then returned
to the client.

%\textcolor{red}{
%Note that in the modified \textsc{Veil-O multimap} strategy, the client does not need to 
%store any additional metadata as needed in the
%\textsc{Veil-O} strategy and neither does
%the query-length change -- it simply consists of 
%$f$ tokens corresponding to the buckets to be retrieved. 
%Nonetheless, 
Note that unlike \textsc{Veil}, in \textsc{Veil-o} 
 the same RID may appear in more than
 one $MMap[B_p]$ since \textsc{Veil-o} 
 allows for overlap between buckets. 
%while the strategy above based on using multimaps reduces the number of fake tuples by allowing buckets to overlap, nonetheless, RIDs of the same record are replicated in multiple maps at the server
%side.
Since we restrict a RID to be replicated in atmost two buckets, the additional number of RIDs replicated equals $\frac{d \delta  |\mathcal{B}|}{2}$. Given a pointer is four bytes, the number of bytes of overhead can be computed as  $ 2d \delta |\mathcal{B}|$ bytes for a savings of 
$\frac{d \delta  |\mathcal{B}|}{2}$ fake records. 
Such an overhead remains significantly small even when
record sizes are relatively small, but are much more pronounced
when individual records can be large in which case, the overhead
remains a small fraction of the savings.
%which in comparison to the 
%overhead of storing fake tuples is very small. 

%as compared to the client side additional storage of $d |\mathcal{B}| $ of \textsc{Veil-O} (an overhead of $16 \Delta$). The primary benefit
%of \textsc{Veil-O multimap} is the reduced
%size of querry label which is the same as that
%in the original \textsc{Veil}. 

\subsection{Discussion}\label{sec:overlapping_discussion}

%%% veil-o is secure under VSR. It reduces SA overhead. However, Veil-o while it is data independnet in
% choosing neighbors, chooses the maximum things to overlap between  buckets in the data dependent manner
%to optimize the overlap.  Unlike Veil-o, This can result in two databases with the saame nunmber size + max size to be differentiated....   As a result the Veil-o, while it ensures VSR security, 
%the security it offers is weaker than \textsc{Veil} (or xor or moti). We can, howerver, make veil-o follow a similar model
%by instead making overlap a input variable (and thus independent of the data sst),. Of course, we will 
%say user specifies M1, max is M2... then lay out the logic of the approach from the mail. it will increase
% stash

%We refer to the above stragegy as Veil-O' and in the experiments we add a new experiment to show  how it affects
%stash.. Note that since in veil-o' more things in stash, they removed from server, so the SA slightly reduces
%(exactly by the same number to stash). 

\textsc{Veil-O}, like \textsc{Veil},  remains secure under VSR, as shown in Appendix \S\ref{th: veilo}.% Due to the page limitations, we remove the proof of correctness of \textsc{Veil-O}.  %as shown in \S Appendix \ref{th: veilo}.
Thus, adversary, cannot determine which key the user is accessing based on how queries are processed.
However, in \textsc{Veil-O}, since the maximum overlap between buckets is data dependent, adversary, could
distinguish between databases outsourced based on observed overlap (and, thus, the effective \textit{SA} achieved) by \textsc{Veil-O}. A slight modification can,  however make 
\textsc{Veil-O} achieve 
indistinguishability. In the modified
  version, the desired overlap between buckets is specified by the user (independent of the database) - let us refer to it 
  as $O_{\mathit{desired}}$. \textsc{Veil-O} learns the maximum overlap that can be supported (given random neighbor assignment) as in the 
  original protocol - let us denote it by $O_{\mathit{max}}$. If $O_{\mathit{desired}} \leq O_{\mathit{max}}$ we simply revert back to $O_{\mathit{desired}}$ and continue
  with the rest of the \textsc{Veil-O} to generate buckets. Alternatively, if $O_{\mathit{desired}} > O_{\mathit{max}}$, then  for each pair of 
  neighbor buckets $B_i$ and $B_j$ such that $B_j$ is a receiver and $B_i$ a lender of records  we perform the following check.
  Let $f_{B_j}$ be the number of RIDs pointing to fake  records in $B_j$. We first replace those fake records by 
  RIDs to additional records from $B_i$, thereby increasing the amount of overlap between $B_i$ and $B_j$.
  If $O_{\mathit{desired}} - O_{\mathit{max}}$ is  greater than then number of fake records $f_{B_j}$, to ensure  overlap of $O_{\mathit{desired}}$, we shift
  $ (O_{\mathit{desired}} - O_{\mathit{max}}) - f_{B_j}$ real records from $B_j$ to the stash, thereby creating space for equivalent number of RIDs to be 
  borrowed from $B_i$ to $B_j$. Note that in the modified \textsc{Veil-O} strategy, the number of overlapping records between any two buckets
  are equal to $O_{\mathit{desired}}$ and independent of the database being indexed. As a result, 
 the adversary cannot gain information about the database being indexed from the data representation, query access patterns, and
 volume in addition to being unable to learn query keywords. We refer to the modified version of \textsc{Veil-O} as 
 \textsc{Veil-O}$^\prime$. Note that \textsc{Veil-O}$^\prime$ may have a higher stash compared to \textsc{Veil-O} but the effective \textit{SA} it achieves
 could be even better compared to \textsc{Veil-O} since it could reduce number of fake records stored on the server side.
 In the experiment section, we will study impact of the above modified strategy (to make \textsc{Veil-O} secure based on the security model
 used in~\cite{XorMM, Moti_dprfMM}) on increase in stash.
 
%In particular, given two database $\mathcal{D}_1$, $\mathcal{D}_2$ with same number of records and $L_{max}$,  it is possible that
% the maximum overlap determined by \textsc{Veil-O} cfor $D_1$ annot be achieved irrespective of any neighbor assignment %for
% database $\mathcal{D}_2$, enabling the adversary to distinguish between $\mathcal{D}_1$ and $\mathcal{D}_2$. \textsc{Veil-O} can, however, be 
 % modified to offer indistinguishability (and not just VSR) by adding a simple post-processing step.

% \subsection{An Advanced Overlapping Strategy}
% {\color{red}
% We developed \textsc{Veil-O}$^{\prime}$ that allows users to define the desired overlapping size. This user-defined overlapping size is now a mandatory parameter when constructing the $d$-regular graph, providing users with more flexibility and control, regardless of the distribution of the dataset.
    % \textsc{Veil-O}$^{\prime}$ utilizes the same $d$-regular graph creation method as \textsc{Veil-O}, with the primary difference being the approach to overlapping size determination and the borrowing / lending process. Specifically, each smaller bucket is required to borrow records from neighboring buckets. However, in case where there is insufficient space to accommodate the incoming records, the existing records within the bucket must be relocated to the local stash to create the necessary room to allow the bucket to accept new records to maintain the user-specified overlapping size. }

\section{Supporting Dynamic Changes}\label{sec:dynamic}
So far, similar to \cite{Moti_dprfMM, XorMM}, we have   discussed \textsc{Veil} and \textsc{Veil-O} under static setting where database to be outsources is pre-known. While a full development and evaluation of dynamic operations in \textsc{Veil} is outside the scope of this paper, 
we briefly discuss how \textsc{Veil} can be extended to support dynamic operations. Indeed, the flexibility and
ease of \textsc{Veil} in supporting dynamic operations compared to other prior work is one of its advantages
as will be discussed in \S\ref{sec: related_work}.
 We focus on the case of insertion though discussion below which can be extended to updates and deletions as well.
In  \textsc{Veil}, adding new data does not require re-execution of the bucktization strategy as long as the value of $L_{max}$ after insertions 
does not exceed $L_{max} \times \mathit{QA}$. Insertion can be supported by 
retrieving the set of records in the $f$ buckets corresponding to the
key for the  record being inserted, replacing one of the fake tuples (if present) in the buckets 
by the newly inserted tuples, 
re-encrypting the records in the buckets, and re-outsourcing that modified buckets. 
Of course, if the $f$ buckets  do not contain  any fake tuples (and hence
have no residual capacity to store more data), the newly inserted tuple is stored in a stash, as would have
been the case had the buckets become full in the static situation when constructing the original buckets.
The above insertion strategy 
would continue to work, as long as, the new value of $L_{max}$ after insertions, say $L_{max}'$, 
remains below $QA \times L_{max}$ \footnote{ In fact, it can continue to work beyond that except that all new records 
after the point of increase in $L_{max}'$ beyond $QA \times L_{max}$ will need to be stored locally in the stash.}.
As $L_{max}$ increases, since we are not changing the fanout $f$ or the 
bucket size, effectively the $QA$ value reduces which, in turn, increases the probability of a record to have to be stored
in the stash. When  $L_{max}'$ goes above  $QA \times L_{max}$, we can either
continue to map new records to stash (increasing client side overhead) and/or reorganize the data. 
While we do not conduct a formal analysis or experimental validation, we believe that the strategy 
above will allow a large number of insertions between reorganizations since  $L_{max}$ can be expected to grow slowly. Furthermore, an user can begin with a larger value of $QA$ to control how often reorganization is required.
}

\section{Related Work}
\label{sec: related_work}

{\color{black}
Volume hiding (VH) as a security goal,  has  gained attention starting with  the seminal
work by Kamara \emph{et al} in ~\cite{kamara2018suppression,kamaraVLH} that 
utilizes a multimap data structure for encrypted keyword search.  VH is easier
to achieve when access pattern (AP) hiding technque, such as ORAM~\cite{goldreich1987ORAM,goldreich1996software,ostrovsky1990efficient,stefanov2018path}, is already being 
used. Given 
prohibitive nature of AP hiding, recent work ~\cite{kamara2018suppression,kamara_dynamic,kamaraVLH,Moti_dprfMM,Moti21_dynamic,feifei_hybridx}
has explored 
VH
without requiring ORAM. \textsc{Veil} falls into this  category of work (discussed in 
details in \S Appendix~\ref{sec:full_related_work}).
Below,  we focus  on two techniques most relevant to \textsc{Veil}:
\textsf{dprfMM}~\cite{Moti_dprfMM} that uses
a cuckoo hash based strategy, and \textsf{XorMM}~\cite{XorMM} that uses the xor filter to support
volume hiding in key-value stores. We compare \textsc{Veil} and its variants to these approaches
experimentally in the following section. In the remainder
of this section we focus on  a qualitative comparison of the schemes in terms of several criteria 
including support for dynamic updates, security offered, applicability as an indexing technology, and 
expected performance.

\noindent\textbf{Supporting  Dynamic Changes.}
Different approaches,  viz., \textsf{XorMM}, \textsf{dprfMM} and \textsc{Veil} 
offer different level of ease in extending them to support dynamic operations (insertions, updates, deletion).
In particular, \textsc{Veil} and \textsf{dprfMM} offer significant flexibility and ease to support dynamic operations compared to \textsf{XorMM} as shown in \S\ref{sec:dynamic}.
\textsf{DprfMM}, like \textsc{Veil}, also  offers flexibility and can allow 
periodic reorganization by storing new data in the cuckoo hash and/or stash.
In contrast, 
%While the study
%of insertion and dynamic maintenance of \textsc{Veil} is outside the scope of this work, we note that our approach is
%significantly more flexible and amenable for dynamic updates compared to prior work such as  %\cite{Moti_dprfMM} or
%\cite{XorMM}. For instance,  
\textsf{XorMM} cannot support insertions, as it requires 
the xor filter to be recontructed every time there is an insertion. This because the xor filters are constructed through a sequential process that creates dependencies between 
cell representations,  \textit{i}.\textit{e}., a record inserted later into the filter may depend upon the cipher representation of the record inserted earlier. As a result, 
when new data has to be inserted 
(whether or not it results in change to $L_{max}$), the filter has to be recomputed from scratch.
%we requireto recompute the filter from scratch.
%an insertion of a  record (whether or not it causes the 
%$L_{max}$ value to change) would require recomptuation of the entire  filter. 
The challenge of supporting dynamic operations, coupled with the 
flexibility to choose
$QA$ and $SA$ parameters to control the overheads, are the primary advantages of \textsc{Veil}.

%\noindent \textbf{Index} -- secondary versus primary index
\noindent\textbf{Security Offered. }
We developed \textsc{Veil} and its variants under the security goal VSR (\S\ref{sec:adv}) 
that roughly corresponds to ensuring that adversary cannot differentiate between query keywords.
One could also consider stronger security model that in addition also ensures indistinguishability 
amongst databases (\S\ref{sec:overlapping_discussion}) used in \cite{Moti_dprfMM,XorMM}.   We note that 
\textsc{Veil}  is already secure against the enhanced security model  used in \cite{Moti_dprfMM,XorMM},
that we refer to as 
{\em indistinguishability}, - we show this formally in \S Appendix \ref{sec:security_analysis_veil}. 
\textsc{Veil-O}, as discussed in \S\ref{sec:overlapping_discussion}, however, while it
ensures VSR, does not ensure indistinguishability directly. It can, however, be modified
with a simple post-processing step to offer the same level of security as  \cite{Moti_dprfMM,XorMM}, 
with the post processing step causing a slight increase in the stash storage. 

\noindent\textbf{Probabilistic versus Guaranteed Success.}
Given a database $\mathcal{D}$,  \textsf{dprfMM}  and \textsc{Veil} are always successful in forming the multimap representation on the server, XOR filters are probabilistic, and there is  a chance (however, small) 
that the technique may 
fail to create an appropriate xor filter~\cite{XorMM}.
Hence xor filters are not guaranteed to create a multimap to support volume-hiding in all
situations.

\noindent\textbf{Primary versus Secondary Index.} 
The basic \textsc{Veil} approach can outsource records in the form of 
buckets containing encrypted data records, or in the form of buckets with RIDs that point to 
encrypted record store. Likewise, \textsf{dprfMM} can also be extended to store either RIDs or
the records. In contrast,  \textsf{xorMM} can only be used to store records and cannot easily be
extended to store RIDs, since
the security requirement of xor filter requires 
data in the filter to be encrypted. Thus, storing RIDs in the filter will introduce
 additional round of communication between client and
server for the client to decrypt the RID and subsequently ask the server to retrieve the
records corresponding to those RIDs. This will significantly increase communication costs.
Also, the RIDs returned by the server may point to some record that does not exist in the data storage, and such information is revealed to the server, which directly leaks the volume.
\textsc{Veil-O}, in contrast,  provides benefit  (reduced \textit{SA}) when used as a secondary index
with RID that are referenced to retrieve data from the encrypted store. 

\noindent\textbf{Expected Performance Comparison.} 
While we conduct thorough experimental comparison between strategies, we make a few observations
about expected performance of different strategies.\\
\noindent
$\bullet$  In terms of QA and SA,   \textsf{xorMM} (if it can be successfully created) overshadows 
\textsf{dprfMM} since it offers fixed QA and SA or 1 and 1.23 as compared to 2 and 2.6. \textsc{Veil} 
supports a tunable QA and SA values - by appropriately choosing 
QA and SA, it can outperform  \textsf{xorMM}.

\noindent$\bullet$ In terms of client storage, all schemes store very little meta data (\textit{e}.\textit{g}., \textsc{Veil} needs
to store  just two numbers -  fanout and the number of buckets ) in addition to the stash containing
some overflow data records.  Stash is also stored in \textsf{dprfMM} which, as experiments will show, is higher than what is stored in \textsf{Veil}. \textsf{xorMM} does not maintain a client side
stash, but as a result, has a non-zero probability of failing to form. 
%A more complete performance  comparison of the approaches follow in \S\ref{sec: experiment}.

% \begin{table*}[h!]
% \small
% \begin{center}
% \begin{tabular}{|c|c|c|c|c|c|c|c|c|c|c|}
% \hline
%  \multirow{2}{4em}{} & \multicolumn{2}{|c|}{0.4} & \multicolumn{2}{|c|}{0.6} & \multicolumn{2}{|c|}{0.8} & \multicolumn{2}{|c|}{1.0} & \multicolumn{2}{|c|}{1.2} \\ \cline{2-11}
%  &  key \# & max key size  & key \# & max key size& key \# & max key size  & key \# & max key size & key \# & max key size\\\hline
% 6M & 200,000 &357&  200,000 & 5,234 & 81,153 & 131,749 & 199,919 & 370,760  & 2,154 & 1,329,059 \\\hline
% 12M & 400,000 & 456 & 399,999 & 10,430 & 89,905 & 263,499 & 399,806 & 711,060 & 2,148 & 2,658,118 \\\hline
% 18M & 600,000  & 527  & 599,996 & 15,619 & 93,022 & 395,182 &599,612  & 1,041,544 & 2,149 &3,987,177 \\\hline
% \end{tabular}
% \caption{Skewed TPC-H datasets}
% \label{tb: skewed_datasets}
% \end{center}
% \end{table*}

% \newpage
% \begin{table}[h!]
% \begin{center}
% \begin{tabular}{|c|c|c|c|c|c|}
% \hline
%   & 6M & 12M & 18M & 24M & 30M \\ \hline % & 36M\\ \hline
% key \#    & 200,000 & 400,000 & 600,000 & 800,000 & 1,000,000 \\ \hline %&0.058  \\ \hline
% max key size & 57 & 61 & 58 & 60 & 61\\ \hline %&0.058  \\ \hline
% \end{tabular}
% \caption{Non-skewed TPC-H datasets}
% \label{tb: non-skewed_datasets}
% \end{center}
% \end{table}
}

\section{Experimental Evaluation}\label{sec: experiment}

In this section, we evaluate the performance of \textsc{Veil}. We study:
\begin{itemize}[noitemsep,nolistsep,leftmargin=0.15in]
%$\bullet$
\item  Impact of  user-specified parameters (Storage Amplification $\mathit{SA}$, Query Amplification $\mathit{QA}$, and  fanout $f$) on    Stash Ratio $\mathit{SR}$. %Note that  we  expect $\mathit{SR}$ to decrease with the  increase in the bucket size and, furthermore, in the number of buckets. 
Based on Equation~\ref{eq: bucket_size_num},   we make the following observations:
\begin{enumerate}[noitemsep,nolistsep,leftmargin=0.2in]

\item For fixed $\mathit{QA}$ and $f$, 
as $\mathit{SA}$ increases, while the bucket size remains constant, the number of buckets increases linearly, 
and hence lower the expected SR (\textbf{Exp 2}).
% \noindent
\item For fixed $\mathit{SA}$ and $f$, as $\mathit{QA}$ increases,   
bucket size increases linearly, but the number of buckets decreases proportional to  $\frac{1}{QA}$. For fixed $\mathit{QA}$ and SA, increasing  $f$ causes 
 bucket size to decrease, but increases the number of buckets.  
 We  explore how $\mathit{SR}$ changes as a function of $\mathit{QA}$ and $f$ experimentally in \textbf{Exp 1} and \textbf{Exp 3}.
 
  \item Since~\cite{Moti_dprfMM} already achieves a $\mathit{QA}$ and $\mathit{SA}$ of 2 with a small stash  (viz., $\mathit{SR}$), we focus on $\mathit{QA}$ and $\mathit{SA}$ in the range 1 to 2.

   \item Since $\mathit{QA}$ influences the number of records retrieved from the server, 
 it is desirable to keep it as close to 1 as possible while still ensuring
a small $\mathit{SR}$ and an $\mathit{SA}$ below 2. This would be strictly better than the state-of-the-art.
\end{enumerate}
%\noindent$\bullet$ 
\item Effect of  overlapping strategy \textsc{Veil-O} on reducing $\mathit{SA}$ 
(\textbf{Exp 5}).

%\noindent$\bullet$ 
\item Comparison of \textsc{Veil} with existing approaches, including \textsf{dprfMM}~\cite{Moti_dprfMM} and \textsf{XorMM}~\cite{XorMM}, in terms of $\mathit{SA}$, $\mathit{QA}$, and 
$\mathit{SR}$ (\textbf{Exp 7}). % when data has different levels of skewness.(Experiment 8) .

%\noindent$\bullet$ 
\item 
 \textsc{Veil} and \textsc{Veil-O} performance on larger data sets. (\textbf{Exp 8} and \textbf{9}).

%\noindent$\bullet$ 
\item  Setup time and query time for \textsc{Veil} and \textsc{Veil-O} and compare them with those for state-of-art (Exp \textbf{10-12}).

\end{itemize}

\subsection{Setup}
We evaluate \textsc{Veil} using the LineItem table from the variant of TPC-H dataset entitled TPC-H-SKEW~\cite{crolotte2012introducing}. TPC-H-SKEW generates the same tables, except that it allows us to control the skewness of the data using
a  ``skew factor'', denoted by $z$. We show the impact of changing $z$ values in Table~\ref{tb: datasets} by listing the $L_{\mathit{max}}$ and the number of keys generated 
in the LineItem table for a scale factor 1 (which corresponds to 6M records in the LineItem  table). As shown in Table~\ref{tb: datasets}, 
as $z$ increases, the data gets more skewed with larger $L_{\mathit{max}}$. When 
$z$
is zero, the data generated is non-skewed  as in the original TPC-H dataset.
We vary $z$ from 0 to 1 and use a default of 0.4, {\blue{which is consistent with real-world datasets that are most likely to be skewed}}.
To generate key-value datasets, we use the Partkey (PK) column as the keys and use values of all other columns as their %and Suppkey (SK), using PK as the keys and SK as their
associated values. 
For most of our experiments, we use the TPC-H with scale factor 1 (\textit{i}.\textit{e}., the datasize is 6M). We also include the results with the scale factor of 6 to see how \textsc{Veil} scales to a larger
dataset. 
Our experiments were conducted on a MacBook Pro equipped with an M1 Pro processor and 32 GB of RAM, running the macOS Monterey operating system. We utilized SHA-256~\cite{sha-256} as the hash function and employed AES encryption (CBC mode)~\cite{daemen1999aes} for our symmetric encryption scheme.

%the same machine for both the client and the server, which is  

\begin{table}[h!]
\begin{center}
\small
\caption{TPC-H datasets with different skew factor $z$}
\label{tb: datasets}
\begin{tabular}{|c|c|c|c|c|c|c|c|}
\hline
 % & Fanout & Bucket size & $\mathit{QA}$ & $\mathit{SA}$ & Stash Ratio\\ \hline 
Skew Factor $z$ & 0  & 0.2 & 0.4 & 0.6 & 0.8 & 1 \\ \hline
$L_{\mathit{max}}$ & 57  & 63 & 357 & 5,234 & 131,749 & 370,760 \\ \hline
\# of keys &  200,000  &  200,000 & 200,000  & 200,000 & 81,153  & 199,
919 \\ \hline
\end{tabular}

% \BBB\BBB\BBB\BB
\end{center}
\end{table}
% select l_partkey, count(*) as count from lineitem group by l_partkey order by count desc;

\noindent \textbf{Comparison Metrics: }
We evaluated our approaches and the state-of-art approaches using the metrics defined in \S\ref{sec:intro}, including \textit{query amplification} (\textit{QA}), \textit{storage amplification} (\textit{SA}), and \textit{stash ratio} (\textit{SR}). To measure the physical storage overhead at the local side and at the server, we introduced the following two additional metrics:

\noindent \textit{Client Storage Amplification}, denoted by \textit{CSA}: that is the ratio of the size of
total physical storage required at the local side to implement a given volume hiding scheme over 
the size of encrypted representation  of the dataset. 
For instance, in \textsc{Veil}, clients need to store values of fanout and  number of buckets  (2 integers) in addition
to the records in the stash. Thus, the \textit{CSA} corresponds to the storage requirement of these divided by the 
encrypted representation of the data.

\noindent \textit{Server Storage Amplification}, denoted by \textit{SSA}: the ratio of the  size of physical storage required at the server side to implement a given volume hiding scheme to the the size of encrypted representation  of the dataset. 
For instance, in \textsc{Veil}, the server needs to store the encrypted record set as well as the multimap index \S\ref{sec:components of veil}). Thus, \textit{SSA} corresponds to the ratio of the total storage needed at the server by the size of the encrypted record set.

\subsection{Evaluation of Parameters on Stash Ratio}
We conducted an ablation study of three factors, --- the fanout $f$, the storage amplification $\mathit{SA}$,  the query amplification $\mathit{QA}$, and the skewness factor $z$ by fixing three others in each experiment to evaluate the impact of each factor on the overall performance of \textsc{Veil}.
%For each experiment, we generated 9 databases at a given skew factor $z$ using the TPC-H skew benchmark. %As the TPC-H skew generates deterministic databases, we varied $z$ slightly to create the 9 distinct databases, \textit{e}.\textit{g}., setting $z$ to 0.392, 0.394, 0.396, 0.398, 0.4, 0.403, 0.404, 0.406, and 0.408, respectively, where the average skew factor $z$ is exactly 0.4. 
For each generated database, we ran \textsc{Veil} 5 times to map keys to buckets randomly and computed an average stash ratio ($\mathit{SR}$). The outcomes of our experiments are illustrated in Fig.~\ref{fig: exp_stash_vs_fanout_fixed_QA_SA_9_datasets}, Fig.~\ref{fig: exp_stash_vs_sa_fixed_QA_F_9_datasets}, Fig.~\ref{fig: exp_stash_vs_qa_fixed_SA_F_9_datasets}, and Fig.~\ref{fig: exp_stash_vs_z}. We also recorded the average stash size, shown as the labels in Fig.~\ref{fig: exp_stash_vs_fanout_fixed_QA_SA_9_datasets}, Fig.~\ref{fig: exp_stash_vs_sa_fixed_QA_F_9_datasets}, Fig.~\ref{fig: exp_stash_vs_qa_fixed_SA_F_9_datasets}, and Fig.~\ref{fig: exp_stash_vs_z}.\footnote{** Numbers on top of the lines in the figures represent the average number of records in the stash. We will have similar numbers in all plots with Stash Ratio (Fig.~\ref{fig: exp_stash_vs_fanout_fixed_QA_SA_9_datasets}, Fig.~\ref{fig: exp_stash_vs_sa_fixed_QA_F_9_datasets}, Fig.~\ref{fig: exp_stash_vs_qa_fixed_SA_F_9_datasets}, Fig.~\ref{fig: exp_stash_vs_z}, and Fig.~\ref{fig:exp_stash_vs_desired_overlapping}). }

\begin{figure*}[htbp]
% \BBB\BB
    \centering
    \begin{minipage}{0.33\textwidth}
        \centering
        \includegraphics[width=\linewidth]{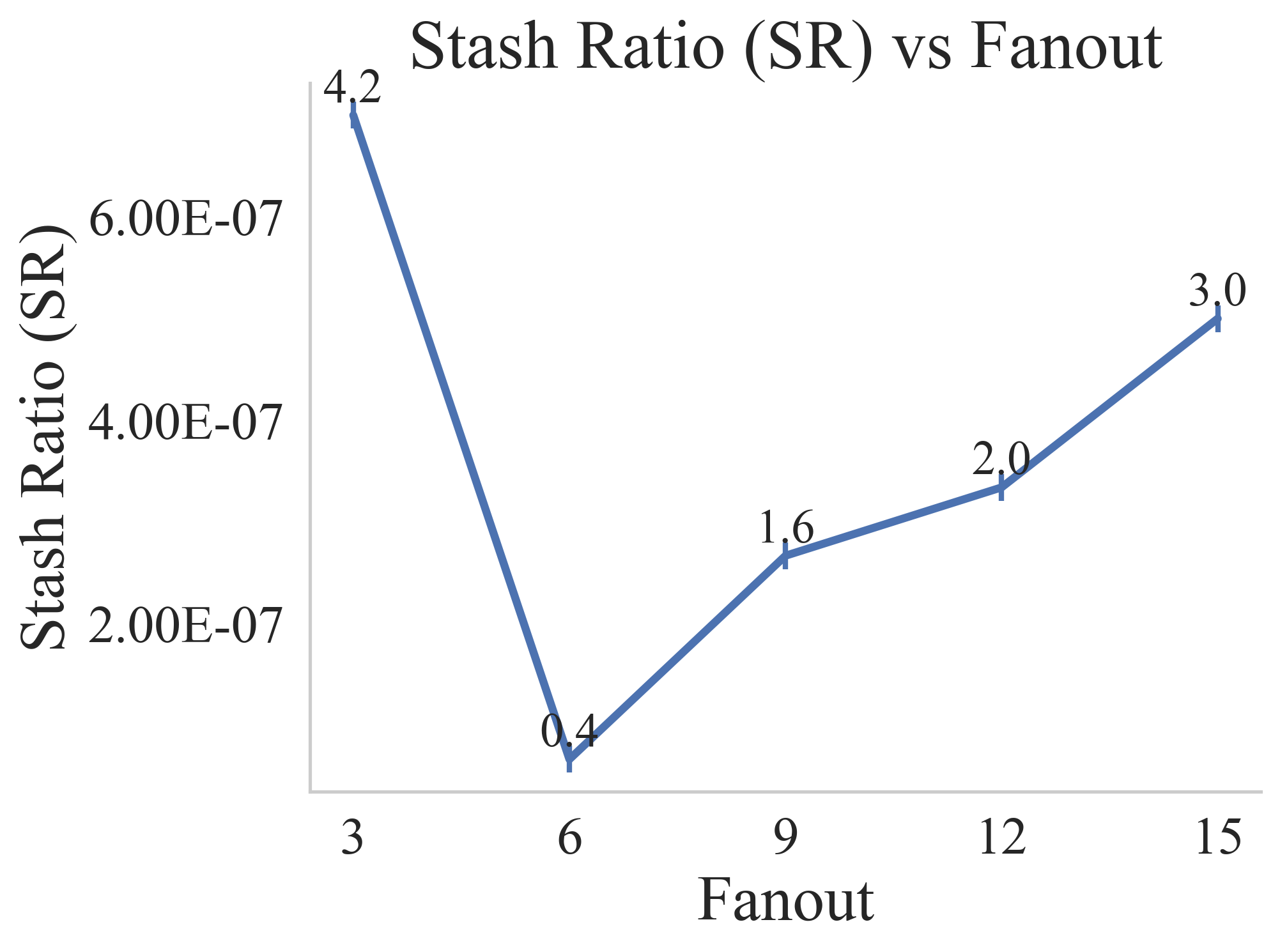}
        % \BBB\BBB\BBB\BB
        \caption{Impact of $f$ ($\mathit{SA}$=1.2, $\mathit{QA}$=1)**}\label{fig: exp_stash_vs_fanout_fixed_QA_SA_9_datasets}
    \end{minipage}%
    \hfill
    \begin{minipage}{0.33\textwidth}
        \centering
        \includegraphics[width=\linewidth]{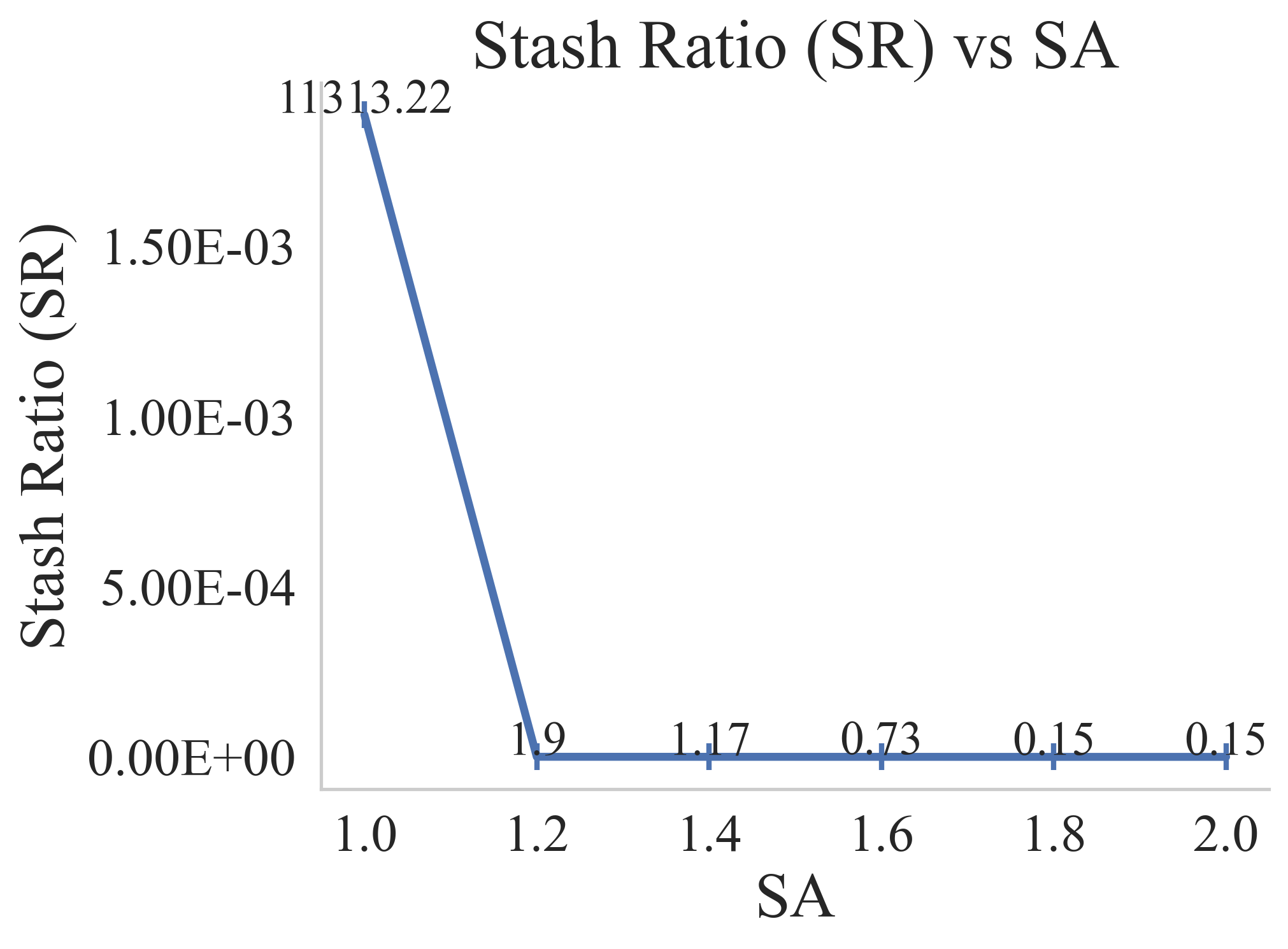}
        % \BBB\BBB\BBB\BB
        \caption{Impact of SA ($\mathit{QA}$=1, $f$=6)**}\label{fig: exp_stash_vs_sa_fixed_QA_F_9_datasets}
    \end{minipage}%
    \hfill
    \begin{minipage}{0.33\textwidth}
        \centering
        \includegraphics[width=\linewidth]{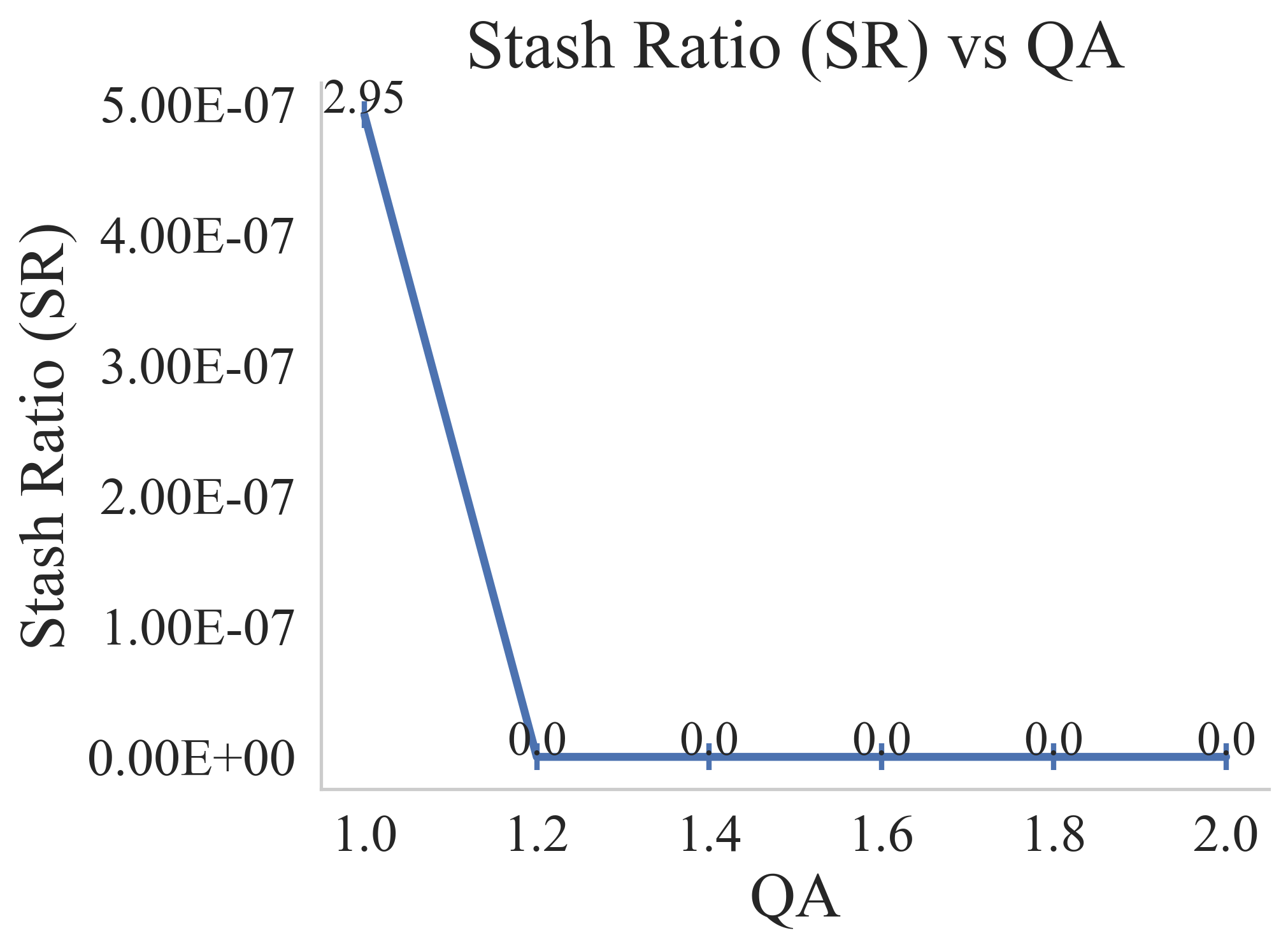}
        % \BBB\BBB\BBB\BB
        \caption{Impact of QA ($\mathit{SA}$=1.2, $f$=6)**}\label{fig: exp_stash_vs_qa_fixed_SA_F_9_datasets}
    \end{minipage}%
    % \BBB\BB
\end{figure*}

\begin{figure}[htbp]
    \centering
    % \begin{minipage}{0.33\textwidth}
    %     \centering
    %     \includegraphics[width=\linewidth]{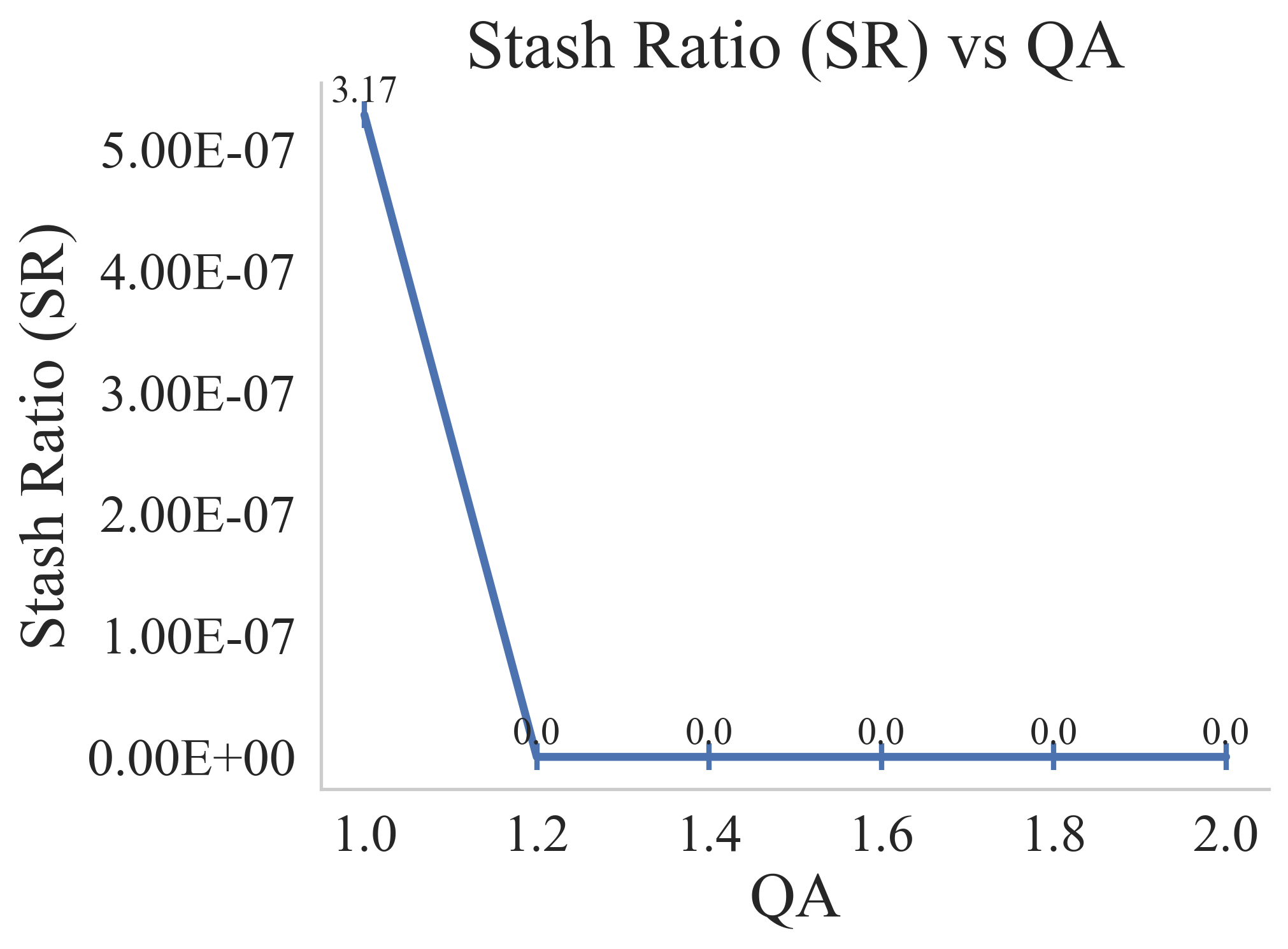}
    %     \BB\BBB
    %     \caption{Impact of QA ($\mathit{SA}$=1.2, $f$=12)}\label{fig: exp_stash_vs_qa_fixed_SA_F_9_datasets}
    % \end{minipage}%
    \begin{minipage}{0.33\textwidth}
        \centering
        \includegraphics[width=\linewidth]{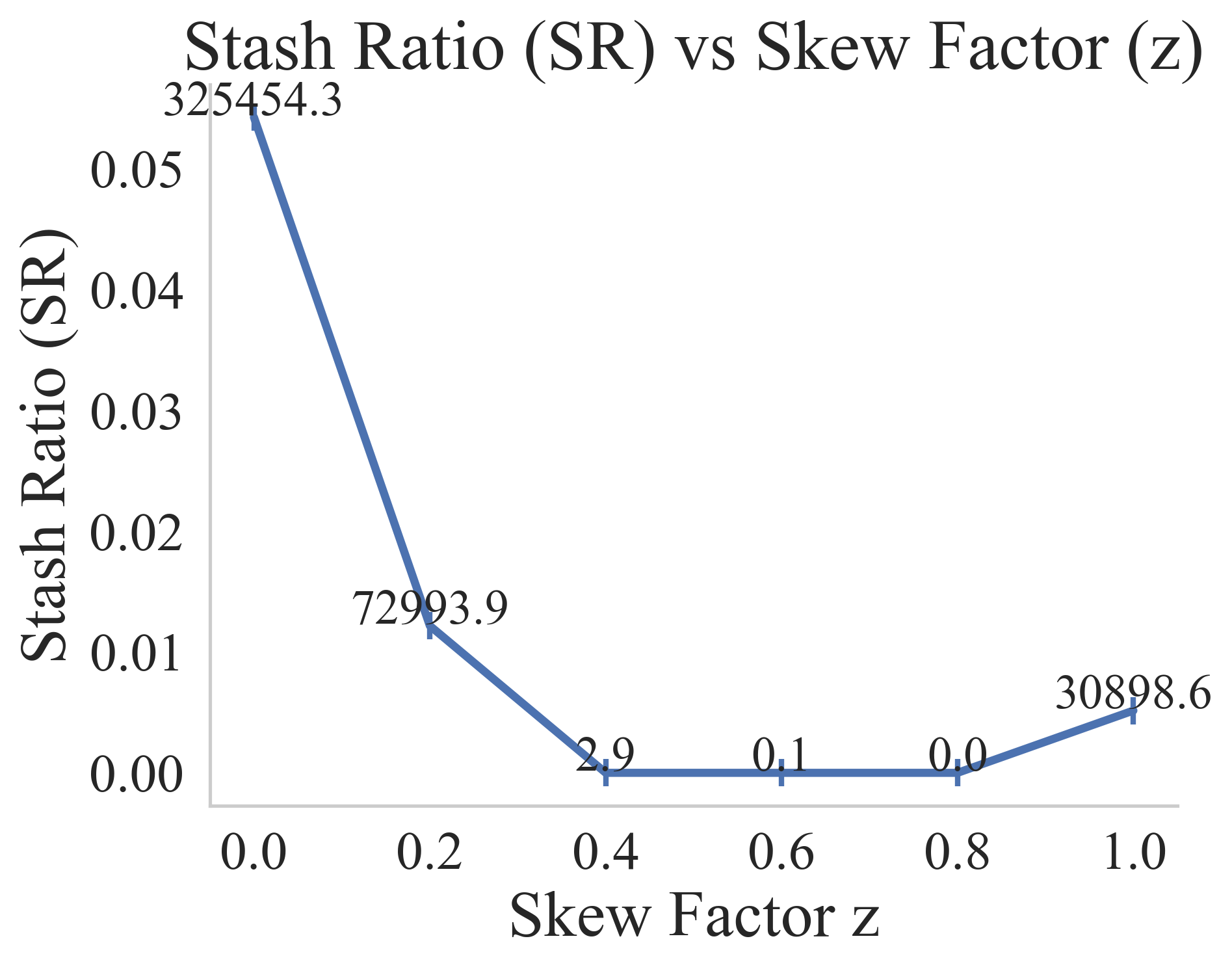}
        \caption{Impact of $z$ ($f$ = 6)**}\label{fig: exp_stash_vs_z}
        % \caption{**Impact of $z$ ($\mathit{SA}$=1.2, $\mathit{QA}$=1, $f$ = 6)}\label{fig: exp_stash_vs_z}
    \end{minipage}%
    % \hfill
    \begin{minipage}{0.33\textwidth}
        \centering
        \includegraphics[width=\linewidth]{exp_images/exp_real_sa_vs_d.png}
        \caption{Impact of $d$ ($\mathit{QA}$=1, $f$ = 6)}\label{fig: exp_real_sa_vs_d}
    \end{minipage}%
    \begin{minipage}{0.33\textwidth}
        \centering
        \includegraphics[width=\linewidth]{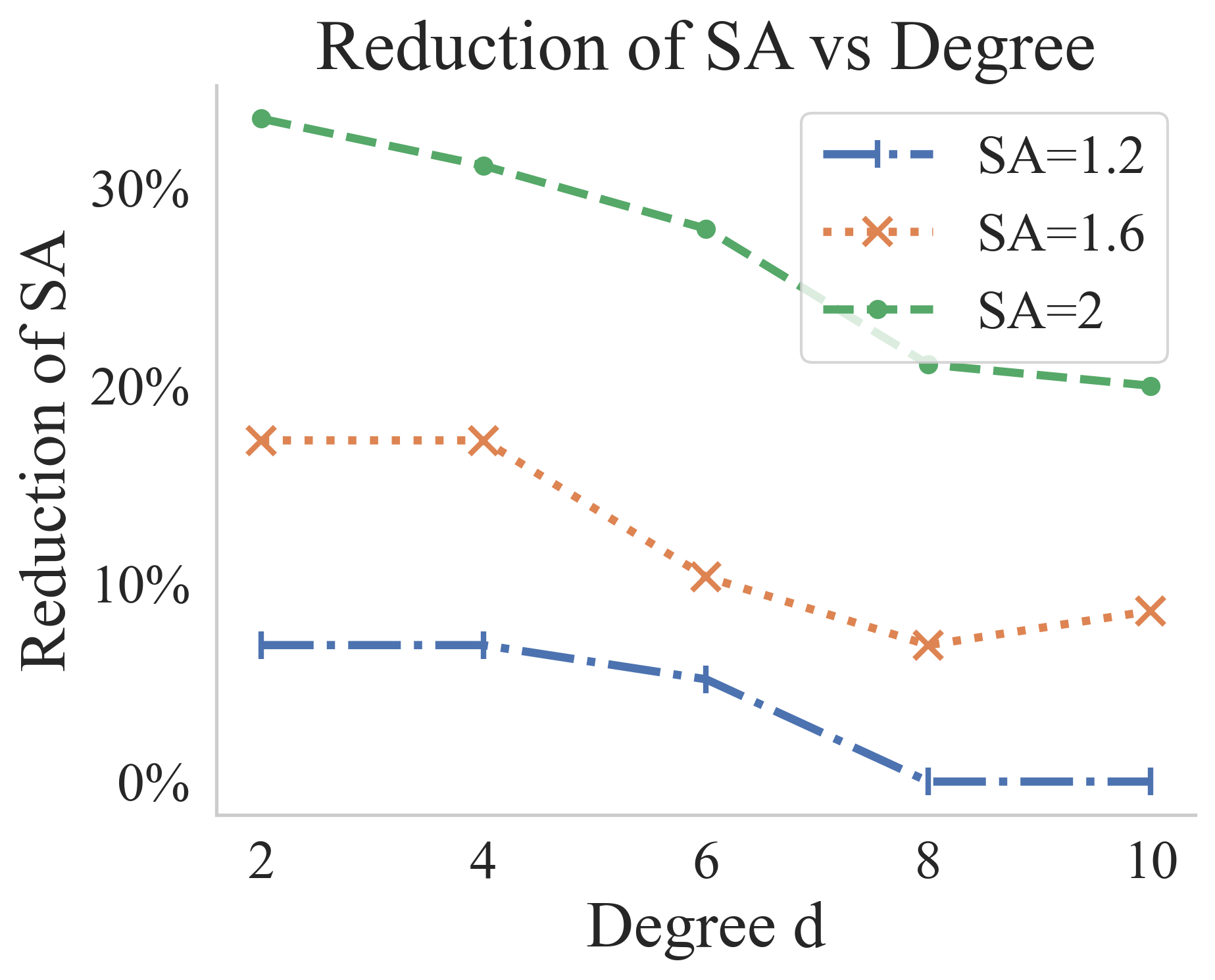}
        \caption{Impact of $d$ on 36M Data}\label{fig: exp_real_sa_vs_d_36M}
    \end{minipage}%
    % \vspace{1em}
\end{figure}

\begin{figure*}[htbp]
    \centering
    \begin{minipage}{0.33\textwidth}
        \centering
\includegraphics[width=0.98\linewidth]{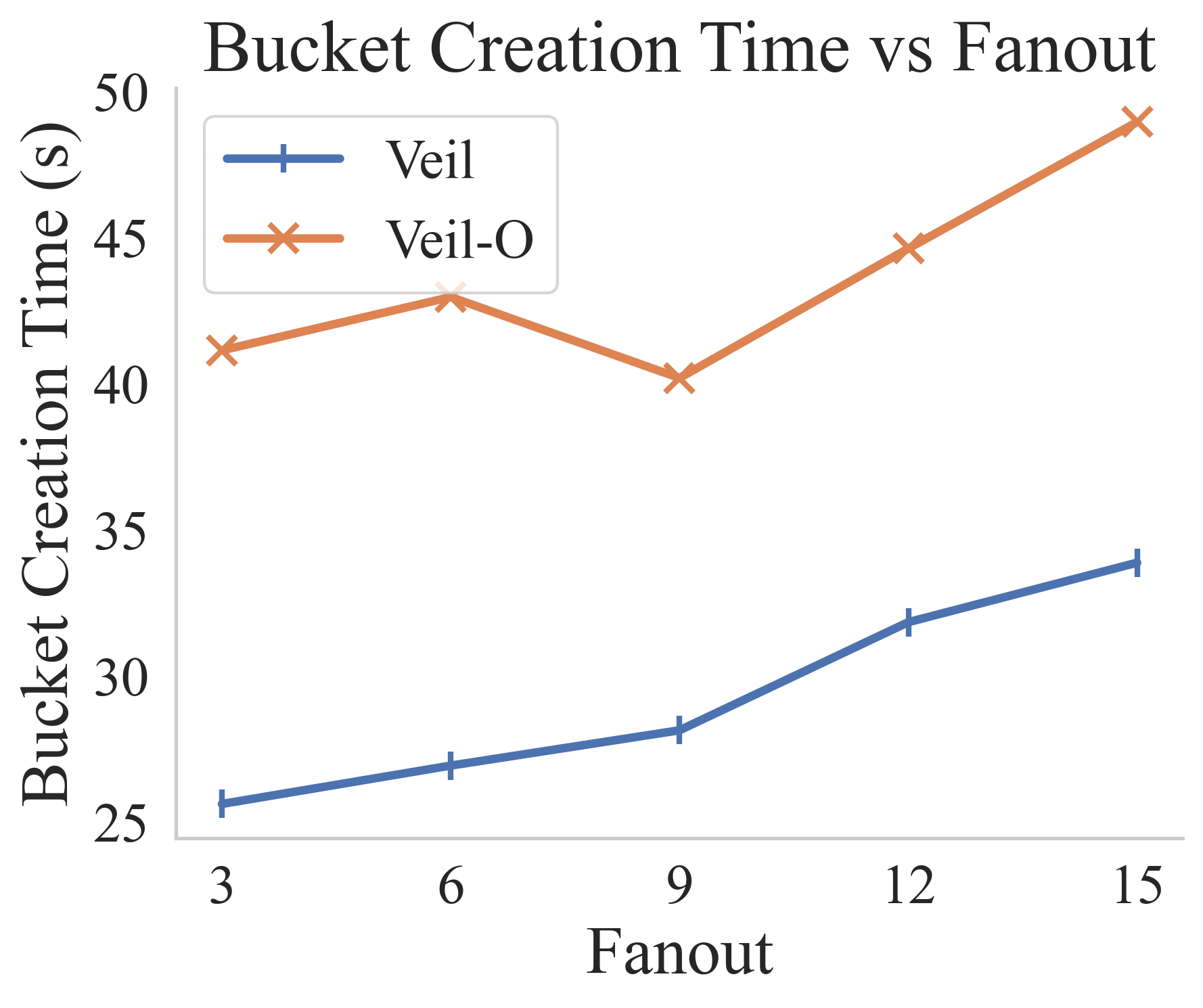}
        \caption{Bucket Creation Time vs $f$}\label{fig: exp_bucket_creation_time_vs_fanout}
    \end{minipage}%
    \begin{minipage}{0.33\textwidth}
        \centering
\includegraphics[width=\linewidth]{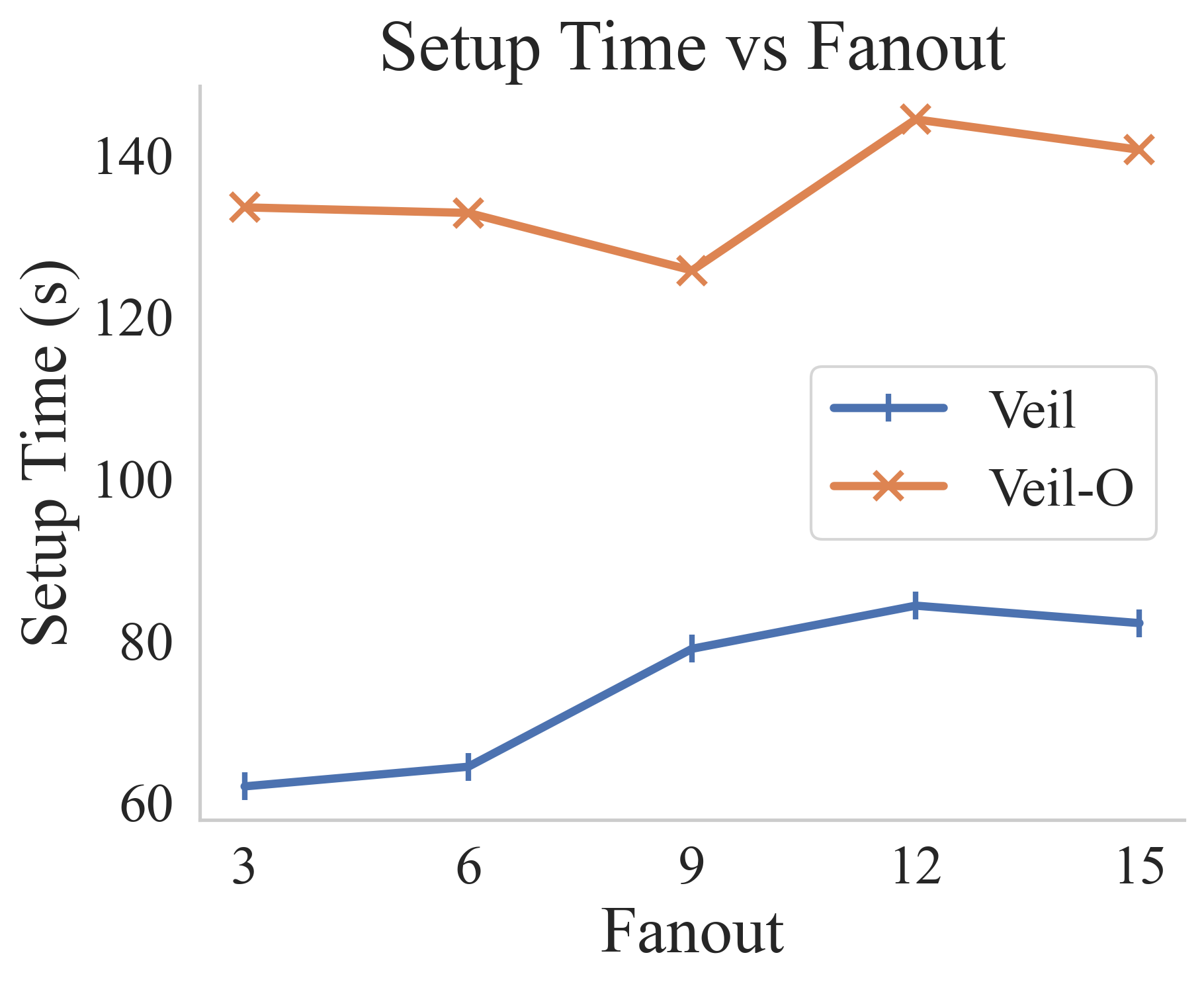}
        \caption{Setup Time vs Fanout}\label{fig: exp_set_up_time_vs_fanout}
    \end{minipage}%
    \begin{minipage}{0.33\textwidth}
        \centering
        \includegraphics[width=0.98\linewidth]{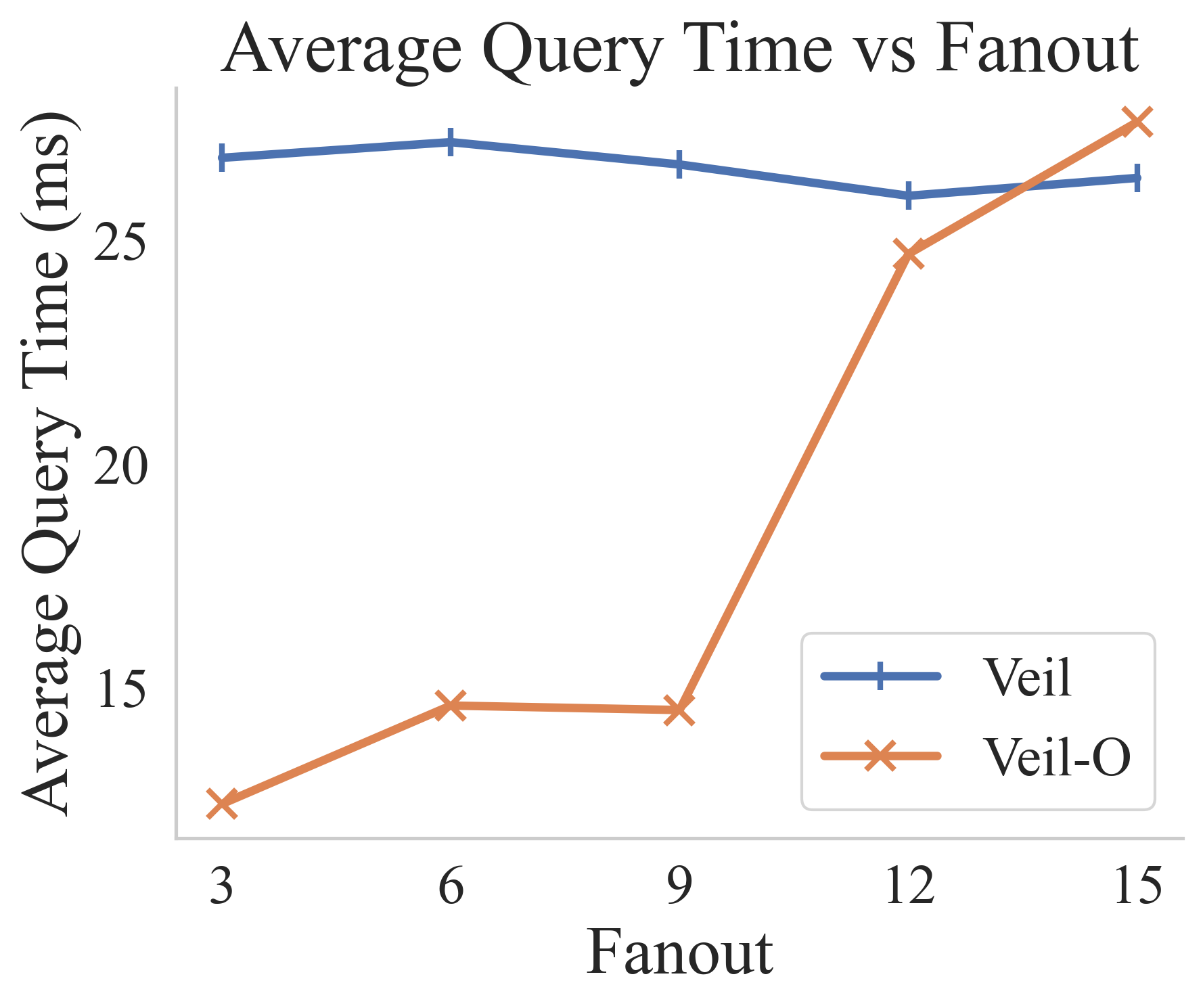}
        \caption{Average Query Time vs $f$}\label{fig: exp_query_time_vs_fanout}
    \end{minipage}
    % \vspace{1em} .
\end{figure*}

\noindent\textbf{Exp 1: Impact of fanout $f$.}
Fig.~\ref{fig: exp_stash_vs_fanout_fixed_QA_SA_9_datasets} presents $\mathit{SR}$ as a function of $f$ with fixed values of $\mathit{QA}$ and $\mathit{SA}$, where $\mathit{QA}$ is 1 and $\mathit{SA}$ is 1.2.\footnote{$\mathit{QA}$ is more important in a secure outsourcing system compared with the low price for storage in a public cloud.} The results show that $\mathit{SR}$ decreases as the fanout $f$ increases. This is because a larger $f$ provides more ``choices''  when selecting buckets for records, and a record is placed into the local stash only if all the chosen $f$ buckets are full. However, when the fanout reaches a certain threshold (\textit{e}.\textit{g}., 6 in Fig.~\ref{fig: exp_stash_vs_fanout_fixed_QA_SA_9_datasets}), this benefit is not observed. $\mathit{SR}$ may increase if the fanout $f$ is further increased. This is because, at this stage, the bucket size, according to Equation~\ref{eq: bucket_size_num}, becomes small, making the buckets more likely to become full during the random bucketing process. \emph{In the rest of the experiments, we will use 6 as an optimal value of $f$.}

\noindent\textbf{Exp 2: Impact of storage amplification $\mathit{SA}$.}
Fig.~\ref{fig: exp_stash_vs_sa_fixed_QA_F_9_datasets} illustrates the relationship between the stash ratio $\mathit{SR}$ and the storage amplification $\mathit{SA}$, with fixed values of $\mathit{QA} = 1$ and fanout $f = 6$ (the optimal value according to \textbf{Exp} 1).
The results show that when $\mathit{SA}$ is 1, \textsc{Veil} uses a small stash, with $\mathit{SR}$ to be approximately 0.002. When $\mathit{SA}$ gets larger (\textit{e}.\textit{g}., $\mathit{SA}\geq 1.2$), the $\mathit{SR}$ reduces to nearly zero, with only a small number of values (less than 5) stored in the local stash. This is because when $\mathit{QA}$ and fanout $f$ are fixed, the bucket size $\ell_b$ remains constant according to Equation~\ref{eq: bucket_size_num}. Consequently, an increase in $\mathit{SA}$ leads to the creation of more buckets, which in turn results in a reduced $\mathit{SR}$.
%Additionally, the $\mathit{SR}$ for the skewed dataset is smaller than that for the non-skewed dataset. In a skewed dataset, fewer keys are considered ``big'', \textit{i}.\textit{e}., keys with a large number of associated values, while the majority of keys have a small number of values. The maximum key size $L_{\textit{max}}$ is large compared to most other keys; given a desired fanout $f$, the bucket size $\ell_b$ is substantial (according to Equation~\ref{eq: bucket_size_num}), making it less likely for the buckets to become full when allocating records to them.

% Additionally, we set the Query Amplification (QA) to 1.2, as according to Fig.~\ref{fig: exp_stash_vs_fanout_fixed_SA}, when $\mathit{QA}$ is greater than 1.2, the stash is nearly zero, making it difficult to observe the effect of varying $\mathit{SA}$ on the stash; conversely, when $\mathit{QA}$ is small, for instance, 1.0, the stash is considerably larger.

% For comparison, we also used \textsf{dprfMM} as a baseline with $\mathit{SA}$ to be 2 and 2.6, 
% For non-skewed TPC-H dataset, the Stash Ratio is 0.13 (when $\mathit{SA}$ is 2) and 0.09 (when $\mathit{SA}$ is 2.6), while for skewed TPC-H dataset, the Stash Ratio is 0.08 (when $\mathit{SA}$ is 2) and 0.05 (when $\mathit{SA}$ is 2.6).

\noindent\textbf{Exp 3: Impact of $\mathit{QA}$.}
Fig.~\ref{fig: exp_stash_vs_qa_fixed_SA_F_9_datasets} illustrates the relationship between the stash ratio $\mathit{SR}$ and the query amplification $\mathit{QA}$, with fixed values of $\mathit{SA} = 1$ and fanout $f = 12.$
%, where $\mathit{SA}$ is set to 1.2 and $f$ is set to 12. 
The results show that $\mathit{SR}$ is considerably small and is even reduces to 0 when $\mathit{QA}$ is greater than 1.2.

\noindent\textbf{Exp 4: Impact of skew factor $z$.}
Fig.~\ref{fig: exp_stash_vs_z} illustrates the relationship between the stash ratio $\mathit{SR}$ and the skew factor $z$, while fixing $\mathit{SA}$ of 1.2, $\mathit{QA}$ of 1, and $f$ of 12. With the skew factor $z$ increasing, the maximum key size $L_{\mathit{max}}$ and bucket size increases. $\mathit{SR}$ also decreases, since the bucket size is larger in comparison to the majority of other keys in the dataset, and there is a high likelihood of available space for a record within the buckets, resulting in a reduced probability of a record being placed in the stash.
However, as $z$ continues to increase, reaching a certain threshold, such as 0.8 in Fig.~\ref{fig: exp_stash_vs_z}, the stash expands due to the significant disparity between $L_{\mathit{max}}$ and the sizes of most keys in the dataset. With a fixed $\mathit{SA}$, the number of buckets decreases in accordance with Equation~\ref{eq: bucket_size_num}, causing the buckets corresponding to the keys with a large number of records to be more likely to reaching full capacity. Consequently, records mapped to these buckets are more likely to be placed into the stash.

\subsection{Padding Strategy Evaluations}
\noindent\textbf{Exp 5: Impact of the degree $d$ when creating a $d$-regular graph.}
We fixed $\mathit{SA}$ to 1.2, $\mathit{QA}$ to 1, and varied the degree $d$ to 2, 4, 6, 8, and 10 to evaluate the impact of $d$ on the reduction of $\mathit{SA}$ in \textsc{Veil-O}. 
The results in Fig.~\ref{fig: exp_real_sa_vs_d} indicate that when $d$ is 2, the $\mathit{SA}$ is the smallest, which is 1.08. As $d$ increases, the SA increases, since each bucket has more ``neighbors'', and the probability of two large buckets being neighbors gets higher, making the number of common records for each two neighbored buckets (\textit{i}.\textit{e}., the weights over edges in the $d$-regular graph) smaller. When the degree $d$ is increased to 6, the overlapping size becomes zero, and the $\mathit{SA}$ is equal to the desired value.
% \noindent\textbf{Exp 6: Effectiveness of the Padding Strategies.}
% We fixed $\mathit{QA}$ to 1 and varied the desired $\mathit{SA}$ within the range of 1.2 to 2.0 to evaluate the effectiveness of the padding strategies  in minimizing the number of fake values introduced. 
% Specifically, we evaluate the reduction of $\mathit{SA}$ using ($\mathit{SA}$ - Real $\mathit{SA}$)/$\mathit{SA}$, where Real $\mathit{SA}$ indicates the storage amplification after applying the overlapping strategies. 
% The results in Fig.~\ref{fig: exp_padding_strategies_skew} show that the \textsc{Veil-O} outperforms the basic approach that equalizes bucket sizes through padding. Specifically, \textsc{Veil-O} can reduce the number of fake values by allowing buckets to share common values. Furthermore, the \textsc{Veil}-C Optimized exhibits promising results, necessitating only a minimal amount of fake values compared with the other approaches.

{\color{black}\noindent\textbf{Exp 6: Impact of desired overlapping size in $\textsc{Veil-O}^{\prime}$. }We fixed \textit{SA} to 1.2, \textit{QA} to 1, degree  $d$ to 2, and varied the desired overlapping size to 2, 4, 6, 8, and 10 to evaluate the impact of the desired overlapping size on both \textit{SR} and \textit{SA}. Results are presented in Fig.~\ref{fig:impact_of_overlapping}. 
The results show that the \textit{SA} decreases with an increasing desired overlapping size, as more common records are shared between neighboring buckets. Meanwhile, more records are placed into the stash due to the mandatory sharing of common records.}

\begin{figure}[htbp]
    \begin{subfigure}[b]{0.38\textwidth}
        \includegraphics[width=\textwidth]{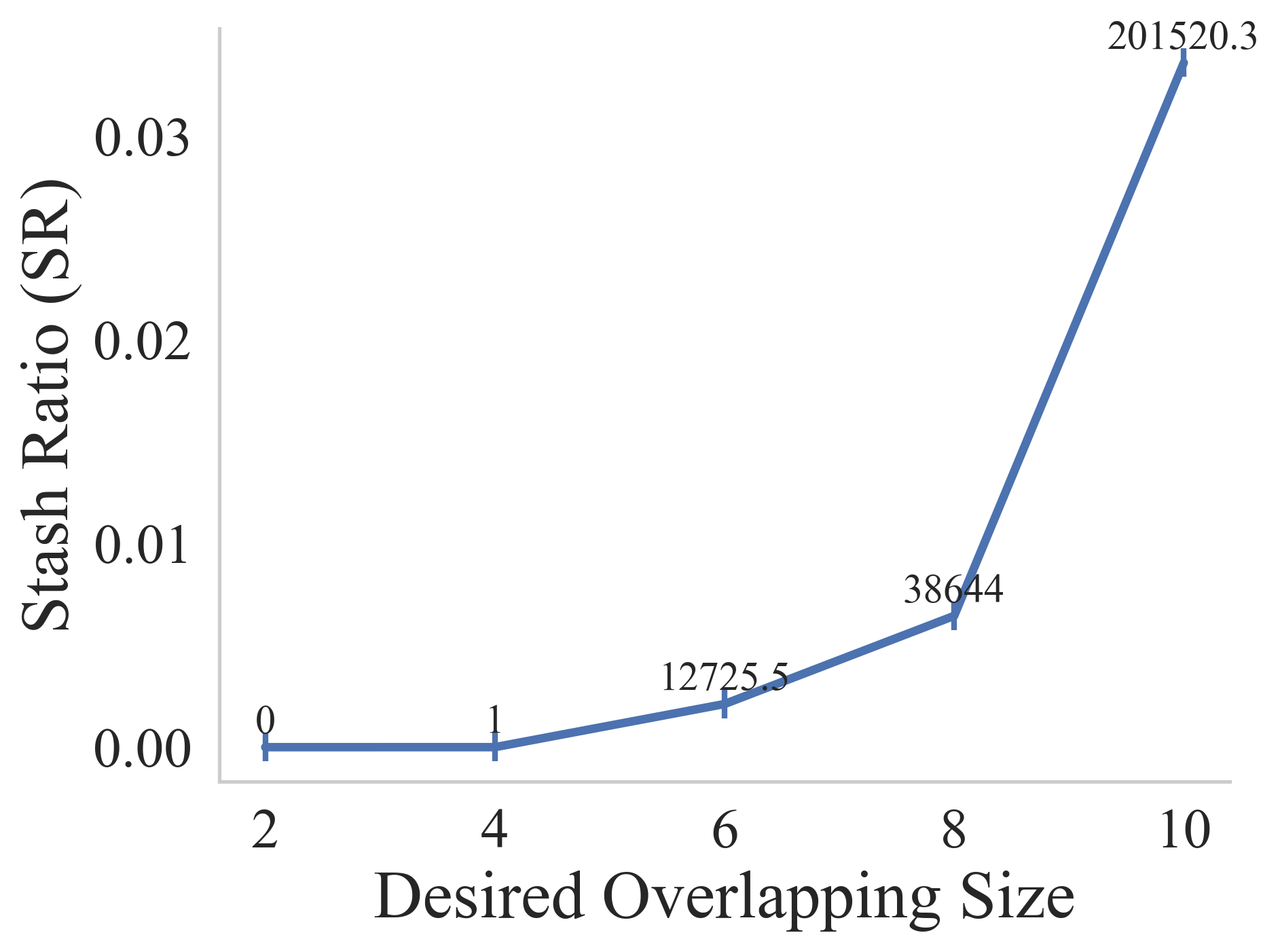}  
        \caption{** \textit{SR} vs desired overlapping}
        \label{fig:exp_stash_vs_desired_overlapping}
    \end{subfigure}
    \hfill
    \begin{subfigure}[b]{0.38\textwidth}
        \includegraphics[width=\textwidth]{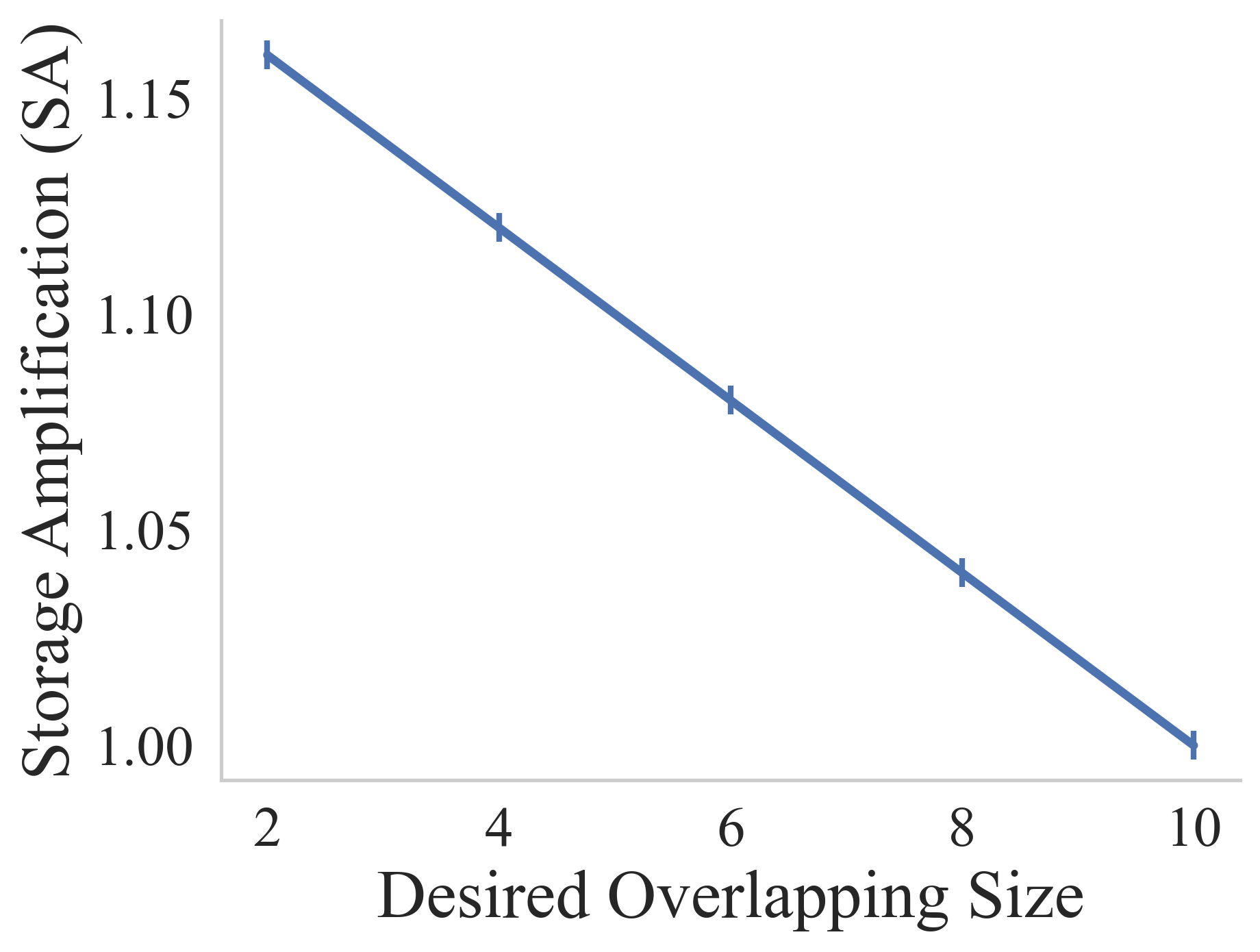} 
        \caption{\textit{SA} vs desired overlapping}
        \label{fig:exp_SA_vs_desired_overlapping}
    \end{subfigure}
    \caption{Impact of the desired overlapping size in $\textsc{Veil-O}^{\prime}$. }\label{fig:impact_of_overlapping}
\end{figure}

\subsection{Comparison with the State-of Arts}
\noindent\textbf{Exp 7: Comparisons with \textsf{dprfMM} and \textsf{XorMM}.}
To compare \textsc{Veil} with \textsf{dprfMM}~\cite{Moti_dprfMM},
{\color{black} we fixed the fanout $f$, the desired $\mathit{SA}$ and $\mathit{QA}$ of \textsc{Veil} to 12, 1.2 and 1,} respectively, and  
set the $\mathit{SA}$ and $\mathit{QA}$  of \textsf{dprfMM}~\cite{Moti_dprfMM} to 2 and 2.6, in accordance with their experimental configurations. {\color{black}We also included \textsf{XorMM}~\cite{XorMM}, which has a fixed \textit{SA} of 1.23 and \textit{QA} of 1, for comparison.}
% For the random bucketing strategy, we set $\mathit{SA}$ to 1.2, $\mathit{QA}$ to 1.2, and set the fanout to 12. 
We used two different TPC-H datasets with 6M records, with varying skew factor $z$ to be 0 (\textit{i}.\textit{e}., the original TPC-H dataset) and 0.4, respectively. 
The results are presented in Table~\ref{tb: compare_with_existing_approaches_z0}, and~\ref{tb: compare_with_existing_approaches_z0_4}, respectively.
The results demonstrate that \textsc{Veil} outperforms \textsf{dprfMM} when processing skewed datasets, utilizing a significantly smaller stash, almost zero when $z$ is 0.4. Additionally, the overlapping strategy \textsc{Veil-O} reduces $\mathit{SA}$ from 1.2 to 1.12 when $z$ is 0.4, as the bucket sizes for skewed datasets are larger in comparison to the number of records for most keys. 
%\vishal{This sentence sounds incorrect.} 
%This increases the likelihood of records being allocated to buckets. 
However, this is not observed for non-skewed data which results in a higher stash ratio ($\mathit{SR}$).

% SELECT
% pg_size_pretty(pg_total_relation_size('dprfmmtpch_6m0_3_1')) AS total_size

\begin{table}[h!]
\centering
\small
\caption{Comparisons on 6M non-skewed TPC-H dataset ($z=0$). }
\begin{threeparttable}
\begin{tabular}{|P{2.5cm}|P{1.2cm}|P{1.2cm}|P{1.2cm}|P{1.5cm}|P{1.8cm}|P{1.5cm}|}
% \begin{tabular}{|c|c|C{3cm}|C{3cm} |c|c|c|}
% {lcccC{2.0cm}C{2.0cm}C{2.0cm}C{2.0cm}}
\hline
 & $\mathit{QA}$ & $\mathit{SA}$ & $\mathit{SR}$ & $\mathit{CSA}$ & $\mathit{SSA}$\\ \hline 
\textsf{dprfMM}~\cite{Moti_dprfMM} & 2 & 2 & {5.949E-5} & {1.467E-5}& 1.000 \\ \hline
%39013 bytes; 1319 +1218 MB
% 6001215, 357 in stash
\textsf{dprfMM}~\cite{Moti_dprfMM} & 2 & 2.6 & {0} & {1.287E-9}& 1.000 \\ \hline
% 4 bytes; 1625+1339MB

\textsf{XorMM}~\cite{XorMM} & 1 & 1.23 & - &2.18E-8& 1.000 \\%5.81E-8\\ 
% (12 bytes)&(32bytes); 526MB
\hline
\textsc{Veil} & 1 & 1.2 & 0.056 &0.0229&1.020\\ \hline
%local stash: 36536290.33bytes, server idx: 31686424 bytes, total storage 1518 MB

\textsc{Veil-O} ($d$= 2 )  & 1 & 1.2 & 0.056 &0.0033&1.041\\ \hline

{\textsc{Veil} }& 1 & 2 & 4.566E-5 &1.552E-5&1.027\\ \hline
%local stash: 36536290.33bytes, server idx: 50.364200592041016/1834 MB

{\textsc{Veil-O} ($d$= 2)}  & 1 & 2 & 4.649E-5 &2.066E-6&1.050 \\\hline
% 1014mb; index: 50.365264892578125 MB; stash 0.0020952224731445312 MB

\end{tabular}
\begin{tablenotes}[leftmargin=5in]
% \setlength\labelsep{0pt}
% \setlength\labelwidth{0pt}
% \setlength\itemindent{0pt}
        % \footnotesize
        \small
        % \centering
      \item Note: The $\mathit{CSA}$ and $\mathit{SSA}$ of \textsf{dprfMM} are low because \textsf{dprfMM} stores two tables that contain $2|\mathcal{D}|$ or $2.6|\mathcal{D}|$ records in total. The SR of \textsc{Veil} and \textsc{Veil-O} can be high when QA is optimal and SA is low (\textit{e}.\textit{g}., when QA is 1 and SA is 1.2), as more collisions happen when allocating records into buckets.
\end{tablenotes}
\end{threeparttable}

\label{tb: compare_with_existing_approaches_z0}
\end{table}

\begin{table}[h!]
\begin{center}
\small
\caption{Comparisons on 6M skewed TPC-H dataset ($z=0.4$)}
\begin{tabular}{|P{2.5cm}|P{1.2cm}|P{1.2cm}|P{1.2cm}|P{1.5cm}|P{1.8cm}|P{1.5cm}|}
\hline
 & $\mathit{QA}$ & $\mathit{SA}$ & $\mathit{SR}$ & $\mathit{CSA}$ & $\mathit{SSA}$\\ \hline 
\textsf{dprfMM}~\cite{Moti_dprfMM}& 2 & 2 & {5.000E-7} & {1.411E-5} & 1.000\\ \hline
% 3 in stash; 36231 bytes stash; 1320+1128MB 

\textsf{dprfMM}~\cite{Moti_dprfMM}  & 2 & 2.6 & {0} & {1.287E-9}&1.000\\ \hline
% 1625+1340MB: data storage; 4 byte stash

\textsf{XorMM}~\cite{XorMM} & 1 & 1.23 & - & 2.18E-8& 1.000 \\\hline %5.81E-8\\ \hline
% (12 bytes)&(32bytes), dbsize 525MB

% \textsf{XorMM}~\cite{XorMM} & -  & - & 1 & 1.23 & 0.53 \\ \hline
 % & Fanout& Bucket size & Bucket \# & $\mathit{QA}$ & $\mathit{SA}$ & stash \\ \hline
 % NF~\cite{next_fit_bin_packing} & 1 & 357 & 1 & 1.05 & N.A. \\ \hline
% FFD~\cite{ffd}& 1 & 357 & 1 & 1.008 & N.A.\\ \hline
\textsc{Veil} & 1 & 1.2 & 4.26E-07 & 2.71E-7
%(441.4 bytes/1552MB)
& 1.018
%(27.92MB) 
\\ \hline
\textsc{Veil-O} ($d=2$) & 1 & 1.2 & 3.67E-07& 2.68E-8   & 1.0314\\ \hline
%25 bytes; 27.92381MB; DB storage = 889MB; 2.2 records in stash avg

{\textsc{Veil}} & 1 & 2 & 0 & 3.477E-9
%(8 bytes/2194MB)
& 1.027
%(50.364200592041016MB / 1834mb) 
\\ \hline

{\textsc{Veil-O} ($d=2$)} & 1 & 2 & 0& 7.524E-9  & 1.056 \\ \hline
% stash 7.62939453125E-6 MB index: 57.22074508666992 MB
% total 1014mb
\end{tabular}

\label{tb: compare_with_existing_approaches_z0_4}
\end{center}
\end{table}

% \vspace{-20pt}
\subsection{Performace on Large Datasets}
% To evaluate how \textsc{Veil} handles large datasets, we generated a TPC-H dataset with 36 million records (corresponding to the scale factor of 6 in TPC-H), incorporating a skew factor ($z$) of 0.4, and evaluated \textsc{Veil} using the stash ratio ($\mathit{SR}$) and the actual $\mathit{SA}$.
To evaluate how our algorithms (\textsc{Veil} and \textsc{Veil-O}) handle large datasets, we generated a TPC-H dataset with 36 million records (corresponding to the scale factor of 6 in TPC-H), incorporating a skew factor ($z$) of 0.4 and ran experiments to measure stash ratio $\mathit{SR}$ with fixed $\mathit{QA}$ and $\mathit{SA}.$ We also evaluate how changing the degree $d$ in \textsc{Veil-O} impacts the reduction of $\mathit{SA}$. 
%We measured evaluated them using the stash ratio ($\mathit{SR}$) and the actual $\mathit{SA}$.

\noindent\textbf{Exp 8: Stash Ratio on Large Datasets.}
We executed \textsc{Veil} on the 36M dataset 10 times to compute an average stash ratio ($\mathit{SR}$), which was found to be nearly zero, with 1.3 records placed in the stash on average. Specifically, in 7 out of the 10 experiments, the stash size is zero.

%As a comparison, we ran $\textsf{dprfMM}$ on the 36M dataset 10 times, which produced an $\mathit{SR}$ of 0.075 and a stash size of 451,595 for each of the experiments.

\noindent\textbf{Exp 9: Reduction $\mathit{SA}$ in \textsc{Veil-O} on Large Datasets.}
We ran \textsc{Veil-O} 10 times on the 36M dataset to compute an average practical $\mathit{SA}$. % achieved after employing the overlapping strategy \textsc{Veil-O}. 
We varied the degree of the $d$-regular graph in \textsc{Veil-O} to 2, 4, 6, 8, and 10, and varied the desired $\mathit{SA}$ to 1.2, 1.6, and 2.0. 
The results are in Fig.~\ref{fig: exp_real_sa_vs_d_36M}. 
The results show that 
\textsc{Veil-O} effectively reduces the number of fake values, especially when the desired $\mathit{SA}$ is higher.
%, \textit{e}.\textit{g}., 2. 

%In all 10 experiments, \textsc{Veil-O} creates 2-regular graph with the overlapping size on each edge to be 4, and $\mathit{SA}$ was reduced to 1.12 as a result of record-sharing across buckets.

\subsection{Running Time}
\noindent\textbf{Exp 10: Bucket creation time.} 
We varied the fanout from 3 to 15 and evaluated the time required for bucket creation, including allocating each record to a bucket and padding. The results of this experiment are in Fig.~\ref{fig: exp_bucket_creation_time_vs_fanout}. As the fanout increases, the time required for bucket creation increases for both the basic approach \textsc{Veil} and the overlapping approach  $\textsc{Veil-O}$. 
This is because more buckets need to be created given user-desired $\mathit{SA}$ and $\mathit{QA}$ thereby increasing the total processing time.

\noindent\textbf{Exp 11: Setup time.} 
We varied the fanout from 3 to 15 at increments of three and evaluated the setup time, \textit{i}.\textit{e}., the time taken to create, encrypt, and store the buckets. The results are in Fig.~\ref{fig: exp_set_up_time_vs_fanout}. 
We also take \textsf{dprfMM}~\cite{Moti_dprfMM} and \textsf{XorMM}~\cite{XorMM} as a comparison. When $\mathit{SA}$ was set to 2.6, the setup time for \textsf{dprfMM} was 210.09s, which dropped to 155.349s when $\mathit{SA}$ was 2. {\textcolor{black}{The setup time for \textsf{XorMM} is 124.11s. }}
The results show that $\textsc{Veil}$ outperforms $\textsc{Veil-O}$, which can be attributed to the additional time $\textsc{Veil-O}$ takes to generate a $d$-regular graph. Also, $\textsc{Veil}$ required less setup time than \textsf{XorMM}, while $\textsc{Veil-O}$ exhibited a setup duration comparable to that of \textsf{XorMM}. Furthermore, both $\textsc{Veil}$ and $\textsc{Veil-O}$ demonstrated shorter setup times than \textsf{dprfMM}.

% while $\textsc{Veil}$ take less time to set up than \textsf{XorMM}, and $\textsc{Veil}-O$ takes similar time to setup.
% --------

% The results show that for fixed $\mathit{SA} = 1.2$, and $\mathit{QA} = 2$,  $\textsc{Veil}$ outperforms \textsc{Veil-O} and \textsf{dprfMM}, as \textsc{Veil} does not encrypt the bucket-ids when outsourcing the data, which reduces the setup time. 
% This shows that the total encryption time in \textsc{Veil-O} is larger than the time taken to create and insert fake values in \textsc{Veil}. 
%includes less number of fake records due to the sharing of common values between neighboring buckets, and this reduces the total number of records inserted into the database. 

\noindent\textbf{Exp 12: Query time.} 
To evaluate the time taken for query execution, we varied the fanout $f$ from 3 to 15, with increments of three, and ran 20 queries to compute an average query time for each query. We also included \textsf{dprfMM} and \textsf{XorMM} as a comparison. The average query time for \textsf{dprfMM} is 201ms when $\mathit{SA}$ is 2.6 and 63ms when $\mathit{SA}$ is 2. {\color{black}The average query time for \textsf{XorMM} is 33.1ms.
The results are shown in Fig.~\ref{fig: exp_query_time_vs_fanout}. 
Results show that regarding query time, \textsc{Veil} and \textsc{Veil-O} outperform both \textsf{dprfMM} and \textsf{XorMM}. Also, note that the query time for \textsc{Veil} is similar when the fanout varies. That is because fanout only determines the number of buckets to retrieve for each query and  doesn’t have any impact on the number of records retrieved. When processing a query, both \textsc{Veil} and \textsc{Veil-O} get the desired bucket-ids using \textsf{Map} and retrieve all buckets together. }

\section{Conclusion}\label{sec:conclusion}
This paper proposed \textsc{Veil} to prevent volume leakage in encrypted search. 
\textsc{Veil} selects multiple buckets for each key and employs a greedy strategy to distribute key-value pairs into different buckets. 
With the created buckets, \textsc{Veil} applies various strategies to add fake values to the created buckets to pad them to an equal size. This increases $\mathit{SA}$. 
To reduce the number of fake values that need to be added to buckets, we proposed \textsc{Veil-O} which allows sharing values among 
buckets by creating regular graphs with vertices corresponding to the buckets and edges connecting two vertices if the corresponding buckets share values. This significantly improves $\mathit{SA}.$ We verified this using evaluations and showed that both of our approaches are strictly better than the state of art.

\section*{Acknowledgements}
This work was supported by NSF Grants No. 2032525, 1545071, 1527536, 1952247, 2008993, 2133391, and 2245372. Vishal Chakraborty was also supported by the HPI Research Center in Machine Learning and Data Science at UC Irvine. 
The authors thank the reviewers for their feedback.

% \begin{acks}
% To Robert, for the bagels and explaining CMYK and color spaces.
% \end{acks}

%%
%% The next two lines define the bibliography style to be used, and
%% the bibliography file.
\bibliographystyle{ACM-Reference-Format}
\bibliography{references}

\appendix

\section{Leakage Functions}\label{sec:leakage_functions}
We follow the typical leakage functions introduced by Kamara \textit{et al.}~\cite{kamara2018suppression}, as follows.

\noindent \textbf{Query Equality}: the equality pattern leaking whether two queries are to the same key or not. Given queried keys $\mathcal{K}_Q=\{k_1, \ldots, k_t\}$, $\textsf{qeq}(k_1, \ldots, k_t) = \mathcal{M}$ consists of a $t \times t$ matrix such
that $\mathcal{M}[q][q^\prime] = 1$ if and only if $k_q = k_{q^\prime}$.

\noindent \textbf{Response Length}: the number of
records associated with queried keys, \textit{i}.\textit{e}., the real volume, or the real output size. Formally, Given queried keys $\mathcal{K}_Q=\{k_1, \ldots, k_t\}$ and a KV dataset $\mathcal{D}$, $\textsf{rlen}(\mathcal{D}, k_1, \ldots, k_t) =
(|\mathcal{D}[k_q]|)_{k_q\in\mathcal{D}}$ where $|\mathcal{D}[k_q]|$ refers to the tuple of values associated with $k_q$.

\noindent \textbf{Maximum Response Length}: The maximum number of values associated with any key in $\mathcal{D}$. For a KV dataset $\mathcal{D}$, $\textsf{mrlen}(\mathcal{D}) = \texttt{MAX}\{|k_i|\}_{k_i\in\mathcal{K}} = L_{\mathit{max}}$.

\noindent \textbf{Domain Size}: The total number of records in the KV dataset $\mathcal{D}$. Formally,  $\textsf{dsize}(\mathcal{D}) = |\mathcal{D}| = \sum |k_i|_{k_i\in\mathcal{K}}$.

\section{Analysis of \textsc{Veil} Based on $\mathcal{BC}$}\label{sec:security_analysis_veil}

We can analyze \textsc{Veil} based on both performance and
security. From the performance perspective, \textsc{Veil} provides guaranteed 
storage and query amplification as specified
by the user. 
Furthermore, we further show that if 
the value of $\mathit{SA} \geq 1$, $\mathit{QA} \geq 1$, and $f \geq 0$, 
the expected size of stash  is zero.
%execution  properties based on the stash ratio $\mathit{SR}$ as well as the user-defined $\mathit{SA}$ and $\mathit{QA}$. 
Our experimental results in \S\ref{sec: experiment} further illustrate that the stash size remains very small - much smaller
than the stash created by \cite{Moti_dprfMM} at significantly lower 
$\mathit{SA} $ and $\mathit{QA}$.
Below, we focus on security analysis of \textsc{Veil}.
In particular, we show that \textsc{Veil} guarantees secure execution 
as discussed in \S\ref{sec:setting}. We state Theorem~\ref{th: veil_basic} as follows.

\begin{theorem}
\label{th: stash size}
Given a dataset $\mathcal{D}$, user-defiend $\mathit{SA}$ and $\mathit{QA}$, and fanout $f$, suppose $\mathcal{BC}$ creates a set of buckets $\mathcal{B}$ over $\mathcal{D}$ using $\textsf{MAP(*)}$. If $\mathit{SA} \geq 1$, $\mathit{QA} \geq 1$, and $f \geq 0$, 
\textcolor{black}{then the average size of the stash, represented as $s$, is expected to be zero, denoted as $\mathbb{E}[s]=0$. This implies that, although $s$ of individual cases might vary from zero, under the defined conditions, the overall average (or expected value) of the stash size will be zero.}

\end{theorem}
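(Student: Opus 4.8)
The plan is to treat the hash function $\mathcal{H}$ as a random oracle, so that $\textsf{MAP}$ assigns to each key $k_i$ a set of $f$ bucket-ids drawn (essentially) uniformly and independently from $\{0,\dots,n-1\}$, with all $|k_i|$ records of $k_i$ sharing that same candidate set. Under this model the stash size $s$ is a well-defined random variable, and the greedy rule of Algorithm~\ref{alg: random_bucketing} places each record into the least-loaded of its $f$ candidate buckets, stashing it precisely when all $f$ are already at capacity $\ell_b$. I would establish $\mathbb{E}[s]=0$ in two stages: a deterministic capacity/counting step, followed by an expectation step over the random placement.

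First I would record the two structural guarantees supplied by the parameters. Since $\ell_b=\lceil QA\cdot L_{\mathit{max}}/f\rceil$, we have $f\ell_b \ge QA\cdot L_{\mathit{max}} \ge L_{\mathit{max}} \ge |k_i|$ for every key; hence the $f$ candidate buckets of any single key have, by themselves, enough aggregate capacity to hold all of that key's records, so no key is intrinsically too large to place. Since $n=\lceil SA\cdot|\mathcal{D}|/\ell_b\rceil$, the total capacity satisfies $n\ell_b \ge SA\cdot|\mathcal{D}| \ge |\mathcal{D}|$ because $SA\ge 1$; globally there is at least as much capacity as data.

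For the expectation step, let $X_j$ denote the load of bucket $B_j$ in the auxiliary \emph{uncapacitated} version of the process (same greedy routing, but buckets never fill), so that $\sum_j X_j=|\mathcal{D}|$ and the true stash is controlled by the total overflow $\sum_j \max(0,\,X_j-\ell_b)$. By symmetry of the uniform candidate assignment, relabeling buckets leaves the distribution invariant, so $\mathbb{E}[X_j]=|\mathcal{D}|/n$ for every $j$, and the capacity bound above gives $\mathbb{E}[X_j]\le \ell_b/SA \le \ell_b$. Thus the expected load of each bucket lies at or below its capacity, which is the quantitative heart of the claim.

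The hard part is the final implication, passing from $\mathbb{E}[X_j]\le\ell_b$ to $\mathbb{E}[\max(0,X_j-\ell_b)]=0$: these are not equivalent for a single random bucket, since upward fluctuations of $X_j$ contribute positive expected overflow. I would close this gap with a concentration argument, exploiting that ``least-loaded of $f$ choices'' balances loads very tightly. Exposing the shuffled records one at a time yields a martingale in which each record shifts any bucket's load by at most one, so a bounded-differences (Azuma) inequality shows $X_j$ concentrates around its mean $|\mathcal{D}|/n$ with exponentially small upper tails; because $SA\ge 1$ leaves a margin of $\ell_b(1-1/SA)$ below capacity and $\ell_b$ scales with $L_{\mathit{max}}$, the probability that any bucket reaches $\ell_b$ is negligible and the expected overflow is driven to zero. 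I expect this concentration step to be the main obstacle, and I would state plainly that the clean equality $\mathbb{E}[s]=0$ is the large-instance (fluid) idealization of this bound: individual runs may stash a few records, consistent with the experiments, while the capacity margin together with the balancing effect of the $f$ choices forces the expected overflow to vanish.
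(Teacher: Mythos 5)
Your first two stages are, in substance, the paper's \emph{entire} proof: the paper computes the probability that a given record lands in a given bucket as $\frac{f}{n}\cdot\frac{1}{f}=\frac{1}{n}$ (the same symmetry argument you invoke), concludes that the expected load of every bucket is $|\mathcal{D}|/n \le \ell_b$, and then asserts directly that this ``implies \ldots{} an expected stash size of 0.'' In other words, the inference you correctly single out as the hard part --- passing from $\mathbb{E}[X_j]\le \ell_b$ to $\mathbb{E}[\max(0,\,X_j-\ell_b)]=0$ --- is exactly the step the paper performs with no justification whatsoever. On this point your write-up is strictly more careful than the published proof: an expected load at or below capacity does not control the expected overflow, because upward fluctuations contribute positively to it, and the paper simply skips over this.

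Regarding the concentration step you propose to close the gap, be aware of two limitations. First, your slack $\ell_b(1-1/\mathit{SA})$ is zero precisely when $\mathit{SA}=1$, a case the theorem explicitly covers (and which the paper's proof even emphasizes); with zero slack, a bucket's load exceeds its mean with probability bounded away from zero, so the expected overflow is genuinely positive there, not merely unproven. Second, even with strict slack ($\mathit{SA}>1$), Azuma-type tails make the expected overflow exponentially small but never exactly zero; since $s\ge 0$, the literal claim $\mathbb{E}[s]=0$ would force $s=0$ almost surely, which is contradicted by the paper's own experiments reporting nonzero stash sizes. So the clean equality cannot be obtained by any argument of this kind --- it holds only as the large-instance idealization you describe, and your instinct to say so plainly is the right one. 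The honest conclusion is that the theorem, read literally, is false as an exact statement; what is provable is either your asymptotic bound (expected overflow vanishing under strict slack) or the weaker per-bucket statement the paper actually establishes (expected load does not exceed capacity), and any correct write-up has to choose one of those two readings.
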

% \begin{proofhint}
% The probability of a candidate bucket being selected (\textit{i}.\textit{e}., the smallest one) is $\frac{1}{f}$. Therefore, the probability of each bucket $B_j$ being selected for a record is $\frac{f}{n}\cdot \frac{1}{f}=\frac{1}{n}$. 
% After allocating all records in $\mathcal{D}$ to the buckets in $\mathcal{B}$, the expectation of the number of records in each bucket $B_j$ is $|\mathcal{D}|\cdot \frac{1}{n}$, which is equal to the minimum possible bucket size (\textit{i}.\textit{e}., corresponding to the case when $\mathit{SA}$ is 1). This observation implies that even if the desired $\mathit{SA}$ is the optimal value (which is 1), we do not need to place records into the local stash, resulting in an expected stash size of 0. 
% $\Box$
% \end{proofhint}

\noindent\textbf{Proof of Theorem~\ref{th: stash size}.}
% \label{proof: stash size}
Consider a key-value dataset $\mathcal{D}$ with a key set $\mathcal{K}$. Let $\mathcal{B}$ be the bucket set created over $\mathcal{D}$, with the number of buckets being $n$.
The bucket size is calculated as 
$\frac{QA\cdot L_{\mathit{max}}}{f}$, which is also equal to $\frac{\mathit{SA}\cdot|\mathcal{D}|}{n}.$  Since $\mathit{SA} \geq 1$, the bucket size $\ell_b\geq\frac{|\mathcal{D}|}{n}$.

For each key $k_i\in \mathcal{K}$, let its size (or volume) be $|k_i|$. After shuffling $\mathcal{D}$, we allocate the records into buckets. The probability of a record being associated with a key $k_i$ is $\frac{|k_i|}{|\mathcal{D}|}$. For a bucket $B_j\in\mathcal{B}$, the probability of $B_j$ being a candidate bucket for key $k_i$ (\textit{i}.\textit{e}., one of the $f$ buckets obtained using \textnormal{\textsf{MAP}}) is $\frac{f}{n}$.
The probability of a candidate bucket being selected (\textit{i}.\textit{e}., the smallest one) is $\frac{1}{f}$. Therefore, the probability of each bucket $B_j$ being selected for a record is $\frac{f}{n}\cdot \frac{1}{f}=\frac{1}{n}$. 
After allocating all records in $\mathcal{D}$ to the buckets in $\mathcal{B}$, the expectation of the number of records in each bucket $B_j$ is $|\mathcal{D}|\cdot \frac{1}{n}$, which is equal to the minimum possible bucket size (\textit{i}.\textit{e}., corresponding to the case when $\mathit{SA}$ is 1). This observation implies that even if the desired $\mathit{SA}$ is the optimal value (which is 1), we do not need to place records into the local stash, resulting in an expected stash size of 0.

We next focus on security analysis of VEIL. In particular, we
show that \textsc{Veil} guarantees secure execution as discussed in XX. We
state the theorem  and give a proof sketch followed by a formal proof based on real-ideal game.

\begin{theorem}
\label{th: veil_basic}
Let $\mathcal{D}$ be a dataset with the key set $\mathcal{K}$. Given user-defined $\mathit{SA}$ and $\mathit{QA}$, and fanout $f$, suppose $\mathcal{BC}$ creates a set of buckets $\mathcal{B}$ over $\mathcal{D}$ using $\textsf{MAP(*)}.$ The \textsc{Veil} strategy with $\mathcal{BC}$ is secure with respect to VSR.
\end{theorem}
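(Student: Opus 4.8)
The plan is to establish VSR through a standard real-versus-ideal simulation argument and then read off the required probability equality as a corollary. First I would fix a leakage profile $\mathcal{L}$ that $\mathcal{BC}$, together with the outsourcing and query-evaluation steps, is permitted to reveal: namely the query-equality pattern $\textsf{qeq}$ (whether two queries touch the same key) together with the public structural parameters $n$, $f$, and $\ell_b = \lceil \mathit{QA}\cdot L_{\mathit{max}}/f\rceil$. The crucial feature of $\mathcal{L}$ is that it carries the maximum response length but \emph{not} the per-key response length: it is independent of the individual volumes $|k_i|$. I would then exhibit a PPT simulator $\mathcal{S}$ that, on input $\mathcal{L}$ alone, produces a transcript computationally indistinguishable from the adversary's real view.

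\textbf{The simulator.} I would have $\mathcal{S}$ generate the encrypted record store as $\lceil \mathit{SA}\cdot|\mathcal{D}|\rceil$ freshly sampled ciphertexts of a fixed dummy plaintext, and the multimap index as $n$ buckets, each holding exactly $\ell_b$ random record identifiers; this matches the real system because every well-formed bucket has the identical size $\ell_b$ regardless of how many real records it actually holds. For each query $\mathcal{S}$ returns a set of $f$ bucket-ids, reusing a previously emitted set whenever $\textsf{qeq}$ indicates a repeated key and drawing $f$ fresh uniform ids in $[0,n-1]$ otherwise. Indistinguishability is argued through two hybrids: (i) replacing the real AES ciphertexts (and the padding/fake records, which after the shuffle of Line~\ref{ln:mix_d} occupy positions independent of their keys) by dummy ciphertexts is undetectable by IND-CPA security of the encryption scheme; (ii) replacing the real outputs of $\textnormal{\textsf{MAP}}(k_i)=\{\mathcal{H}(k_i\,|\,\gamma)\bmod n\}_{\gamma=1}^{f}$ by random-oracle/PRF outputs is undetectable provided $\mathcal{H}$ is keyed with a secret key unknown to the adversary, so that its outputs on any key are pseudorandom and, across distinct keys, mutually independent.

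\textbf{Deriving VSR.} By hypothesis $k_1,k_2\notin\mathcal{K}_Q$, and the challenge key satisfies $k_\alpha\in\mathcal{K}-\mathcal{K}_Q$, so the observation triggered by the new query is a $\textsf{qeq}$-fresh set of $f$ bucket-ids. By step (ii) its distribution is that of $f$ independent uniform ids, hence identical whether the underlying key is $k_1$ or $k_2$; by step (i) the accompanying ciphertexts and the equi-sized buckets betray nothing about the key's volume. Therefore the full adversarial view is produced by a process symmetric in $k_1$ and $k_2$, giving $\mathit{Prob}[\mathrm{view}\mid k_\alpha=k_1]=\mathit{Prob}[\mathrm{view}\mid k_\alpha=k_2]$, and by Bayes' rule $\mathit{Prob}[k_\alpha=k_1\mid Adv]=\mathit{Prob}[k_\alpha=k_2\mid Adv]$, which is precisely the security condition defining VSR.

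\textbf{Main obstacle.} I expect the delicate step to be showing that the adversary's two extra powers---exact knowledge of the volumes $|k_1|,|k_2|$ and the ability to re-run $\mathcal{BC}$ itself---confer no advantage. Unlike the FFD counterexample, $\textnormal{\textsf{MAP}}$ assigns buckets through $\mathcal{H}$ independently of volume, so even with exact volume knowledge the adversary cannot predict $\textnormal{\textsf{MAP}}(k_1)$ or $\textnormal{\textsf{MAP}}(k_2)$ without the secret hash key. Formalizing this requires showing that conditioning on the entire prior transcript---including the plaintext keys of the \emph{queried} keys in $\mathcal{K}_Q$---leaves the PRF values on the never-queried keys $k_1,k_2$ jointly uniform and independent of everything observed; this is exactly where PRF security (rather than mere collision resistance of $\mathcal{H}$) must be invoked, and where the assumption $|\mathcal{K}_Q|\le|\mathcal{K}|-2$ guaranteeing at least two untouched keys is used.
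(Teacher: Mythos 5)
Your proof is correct and belongs to the same family as the paper's --- both rest on (a) equi-sized buckets making the response volume key-independent and (b) the randomness of \textsf{MAP} making the bucket-ids of a never-queried key unpredictable, wrapped in a real/ideal formalism --- but your execution is genuinely different and substantially more complete. The paper's formal argument only writes down two games (a challenger answering the real query versus a simulator that, on a uniform bit $\rho$, returns the buckets of either the queried key or the other unqueried key) and stops there: no simulator of the full adversarial view is constructed, no hybrids appear, and indistinguishability is never argued beyond the one-line remark that key-to-bucket assignment is random. You instead give a standard SSE-style leakage-based simulation: you fix the leakage profile (\textsf{qeq} plus the public parameters $n$, $f$, $\ell_b$, pointedly excluding per-key response lengths), build a simulator from that leakage alone, and discharge indistinguishability via IND-CPA and PRF hybrids, after which the VSR equality follows by symmetry of the simulated view in $k_1,k_2$ and Bayes' rule. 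What your route buys --- beyond rigor --- is that it surfaces a hypothesis the paper needs but never states: the paper instantiates \textsf{MAP} with unkeyed SHA-256, and since the VSR adversary knows every plaintext key in $\mathcal{K}$, it could recompute $\textnormal{\textsf{MAP}}(k_1)$ and $\textnormal{\textsf{MAP}}(k_2)$ itself and match them against the observed bucket-ids, so the theorem as literally instantiated would be false for a public hash. Your requirement that $\mathcal{H}$ be secretly keyed (invoking PRF security rather than mere collision resistance) is exactly the repair that the paper's sketch implicitly relies on when it asserts the conditional bucket probabilities are equal; your proof makes this assumption explicit and, in doing so, makes the argument sound.
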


% Intuitively, the adversary cannot learn association between buckets and keys, as the $f$ buckets for each key are randomly selected, and the dataset $\mathcal{D}$ is shuffled to mix records of different keys to achieve a random order when inserting records to buckets.
% Also, for each query, the retrieved number of buckets ($f$) are same, and  buckets are of same size, making the total number of values to retrieve is the same.
% Thus by simply observing buckets, adversary cannot guess which key is
% being queried.
% Even in the worst case, \textit{i}.\textit{e}., the adversary knows queries to $|\mathcal{K}|-2$ keys, \textsc{Veil} is still secure when the adversary observes a new query to one of the two unknown keys, and the adversary can not determine which key the query is to. This is because by observing the $f$ buckets retrieved for the new query, although the adversary may know all the other keys in the $f$ buckets, the adversary can not know the number of value of each key in each bucket, as they may be  distributed in different buckets or the local stash. 

\noindent\textbf{Proof of Theorem~\ref{th: veil_basic}.}
Before we develop a more formal proof  based on Real-Ideal game, 
we first provide an intuitive sketch of why \textsc{Veil} 
is secure based on our security requirement.
Let $\mathcal{K}_q = \{k_1, \ldots, k_m \}$ be the set of all prior queried keys.
Suppose for each queried key $k_q\in\mathcal{K}_q$, $\textsf{Cipher}(k_q)$ is the set of buckets $B \in \textnormal{\textsf{MAP}}(k_q)$. Since all buckets are
equi-sized and $\textnormal{\textsf{MAP}}(k_q)$  returns a set of $f$ buckets,
the size of ciphertext returned, $|\textsf{Cipher}(k_q)|$, is the same
irrespective of they query keyword $k_q$. Thus, the size of ciphertext
returned does not reveal anything to the adversary about $k_q$. 
To show \textsc{Veil} is secure with respect to VSR, we need to further show
that if
 $k_\alpha \notin \mathcal{K}_q$ , for all distinct keys $k_1$, $k_2$, such
 that $k_1,k_2 \notin \mathcal{K}_q$, the adversary cannot determine 
if $k_\alpha$
corresponds to $k_1$  or $k_2$. That is, 
$\mathit{Prob}[k_\alpha = k_1|A] - \mathit{Prob}[k_\alpha = k_2|A]  <= \epsilon $ where $\epsilon$ is an extremely small positive number and $A$ is the knowledge of the adversary.

To see this, consider the worst-case situation when the adversary has the maximal knowledge, -- \textit{i}.\textit{e}., the adversary knows the keywords for all prior queries executed so far, and there are only two keys $k_1$ and $k_2$ that have never beend queried.  
Thus, the adversary knows all $k_q \in \mathcal{K}_q$ and the corresponding $\textnormal{\textsf{MAP}}(k_q)$. Let, for the query to key $k_\alpha$, the set of buckets retrieved be $\textnormal{\textsf{MAP}}(k_\alpha).$
Since \textsc{Veil} uses a random assignment of keys
to buckets, for buckets $B_p, B_q \in \mathcal{B},$ we have,
$Pr[B_p \in \textnormal{\textsf{MAP}}(k1)\mid \textnormal{\textsf{MAP}}(k_1), \ldots, \textnormal{\textsf{MAP}}(k_m))]  = 
Prob[B_q \in \textnormal{\textsf{MAP}}(k2) \mid \textnormal{\textsf{MAP}}(k_1), \ldots, \textnormal{\textsf{MAP}}(k_m)]$.
Thus, an observation of the access pattern of
the query to key $k_\alpha$ does not enable the adversary to distinguish between $k_1$ and $k_2$. 
%\end{proofhint}

%\section{Real-Ideal Game Proofs}
We now turn our attention to proving the above theorem more formally using a
use a traditional real-ideal game based approach to proving securty.
In the proof, the game is  played between an adversary $\mathcal{A}$ and a challenger $\mathit{C}$.

\begin{definition}\label{def:game}
The schema is secure if there exists a simulator $\mathcal{S}$ such that for all adversaries $\mathcal{A}$: 
$\mathit{|Pr[Real_{\mathcal{A},\mathcal{C}} = 1]- Pr[Ideal_{\mathcal{A}, \mathcal{S}}=1]|\leq negl(\lambda)}$
\end{definition}

%\subsection{Proof of Theorem~\ref{th: veil_basic}}\label{sec: proof_of_veil_basic}
% \%label{proof: proof_of_veil_basic}

% \subsection{Proof of Theorem~\ref{th: veil_basic}}\label{sec: proof_of_veil_basic}
\begin{proof}
We consider buckets $B_1, B_2, \ldots, B_n$ that are created over a KV dataset $\mathcal{D}$ with a set $\mathcal{K}$ of keys. Suppose the fanout is $f$. To prove the security, we consider the extreme case, where the adversary knows the queries corresponding to $|\mathcal{K}|-2$ keys in $\mathcal{K}$ -- only two keys are unknown, denoted as $k_0$ and $k_1$.
Formally, for each query $Q$ to the unknown keys, the challenger chooses a bit $\rho\in \{0, 1\}$ uniformly at random. If $\rho$ is 0, the challenger returns the buckets to the query key, while if $\rho$ is 1, the challenger will return the buckets to the other key. Recall that $\mathcal{A}$ is the adversary and $\mathcal{C}$ is the challenger.

\noindent{\textbf{$\mathit{Real_{\mathcal{A}, \mathcal{C}}}$}}: 
\begin{enumerate}
    \item $\mathcal{A}$ randomly selects and sends a key $k_i$ from $\{k_0, k_1\}$, to the challenger $\mathit{C}$.
    \item $\mathcal{C}$, on receiving $k_i$, finds the buckets $B_j \in \mathcal{B}$ that contains values corresponding to $k_i$.
    \item $\mathcal{C}$ returns $E(B_j)$ to $\mathcal{A}$.
    \item $\mathcal{A}$ outputs a bit $\mathit{b\in \{0, 1\}}$.
\end{enumerate}

\noindent{\textbf{$\mathit{Ideal}_{\mathcal{A}, \mathcal{S}}$}}: 
\begin{enumerate}
    \item $\mathcal{A}$ randomly selects a key $k_i$ from $\{k_0, k_1 \}$, and sends $k_i$ to simulator $\mathit{S}$.
    \item $\mathit{S}$ finds $\rho$ for the mapping --- if $\rho$ is 1, the simulator $\mathit{S}$ queries $k_i$; otherwise, $\mathit{S}$ queries $k_{1-i}$. Suppose the returned buckets are $E(B_j)$, $\mathcal{S}$ then returns the buckets $E(B_j)$ to $\mathcal{A}$.
    \item The adversary $\mathcal{A}$ outputs a bit $\mathit{b\in \{0, 1\}}$; 1 for the simulator returning the correct buckets, while 0 for the simulator returning the wrong bucket. 
\end{enumerate}\end{proof}

\section{Correctness of VEIL-O}
\label{th: veilo}
\begin{theorem}
\label{th: veil_o}
Let $\mathcal{D}$ be a dataset with the key set $\mathcal{K}$. Given user-defiend $\mathit{SA}$ and $\mathit{QA}$, and fanout $f$, suppose $\mathcal{BC}$ creates a set of buckets $\mathcal{B}$ over $\mathcal{D}$ using $\textsf{MAP(*)}.$ The \textsc{Veil} strategy with $\mathcal{BC}$ is secure with respect to VSR.
\end{theorem}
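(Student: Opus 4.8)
The plan is to establish VSR security for \textsc{Veil-O} by reducing to the argument already developed for \textsc{Veil} in Theorem~\ref{th: veil_basic} and then separately accounting for the only new information that overlapping exposes. The starting observation is that \textsc{Veil-O} inherits the randomized bucket-creation routine $\mathcal{BC}$ and the hash-based $\textnormal{\textsf{MAP}}$ function unchanged from \textsc{Veil}; overlapping is introduced purely as a padding optimization. Hence, conditioned on the query-equality pattern, the assignment of any never-queried key to its $f$ buckets remains uniform and independent of the key's identity. In particular, for the two unqueried keys $k_1, k_2$ and any candidate bucket set $S$, I would show $\mathit{Prob}[\textnormal{\textsf{MAP}}(k_\alpha) = S \mid k_\alpha = k_1] = \mathit{Prob}[\textnormal{\textsf{MAP}}(k_\alpha) = S \mid k_\alpha = k_2]$, so the bare set of retrieved buckets cannot distinguish $k_1$ from $k_2$, exactly as in \textsc{Veil}.

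The substantive new step is to handle the overlap pattern visible through the multimap indices: which RIDs are shared, between which buckets, and in what quantity. I would argue that this structure is a deterministic function of quantities that do not reference the secret key-to-bucket association of the unqueried keys. Specifically, the $d$-regular topology is fixed by the bucket indices through the functions $F_1, \ldots, F_d$ in the graph-creation step; the edge directions chosen by EDD and the final overlap size $\delta$ depend only on bucket \emph{sizes}; and the labels produced in label creation depend only on $\delta$, the graph, and the (shuffled) internal ordering of each bucket's records. By the well-formedness guarantees of Definition~\ref{def:well_formed_overlapping} --- equal bucket size, equal-sized neighborhoods, and equal overlap size --- every bucket presents an identical structural signature, so the overlap pattern the adversary observes is identically distributed whether the observed query was for $k_1$ or $k_2$. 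This is precisely the property that Example~\ref{example:unsecure_clustering} shows a naive borrowing scheme fails to satisfy.

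I would then formalize this via the real--ideal game of Definition~\ref{def:game}, extending the simulator $\mathcal{S}$ to produce the full multimap view including shared RIDs in $Mmap[B_p]$, not just the encrypted buckets. Because the overlap topology and $\delta$ are computable from structural parameters that $\mathcal{S}$ knows without the challenge bit, $\mathcal{S}$ can populate each index with the correct number of shared and non-shared RIDs and encrypt fresh random plaintexts for every slot. Indistinguishability of the real and ideal transcripts then reduces to the IND-CPA security of $E$ (AES-256): any adversary separating the two with non-negligible advantage would break semantic security. This yields $|\mathit{Pr}[Real_{\mathcal{A},\mathcal{C}}=1] - \mathit{Pr}[Ideal_{\mathcal{A},\mathcal{S}}=1]| \leq negl(\lambda)$, and therefore $\mathit{Prob}[k_\alpha=k_1 \mid Adv] = \mathit{Prob}[k_\alpha=k_2 \mid Adv]$.

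The \textbf{main obstacle} will be ruling out leakage through the labels themselves, i.e., the specific records shared across neighboring buckets, since these are correlated rather than independently random: the label-creation construction places each borrowed record in exactly one cross-bucket label (so intersections of three neighbors are empty). I would need to show that this correlation is a fixed function of the public topology and $\delta$ alone, and that the shuffling performed before labels are read off decouples them from any real-versus-fake distinction the adversary could exploit. A final point of care --- consistent with \S\ref{sec:overlapping_discussion} --- is that $\delta$ is data-dependent, so this argument establishes only VSR (keyword indistinguishability) and not database indistinguishability; the stronger guarantee requires the $\textsc{Veil-O}$$^\prime$ modification in which $\delta$ is pinned to a user-chosen $O_{\mathit{desired}}$ independent of the database.
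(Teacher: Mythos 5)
Your proposal is correct in substance, but it follows a genuinely different route from the paper's own proof. The paper's argument (Appendix, proof of Theorem~\ref{th: veil_o}) is a compact real--ideal game in the style of Definition~\ref{def:game}: the simulator fixes a self-inverse pairing $F_s$ of buckets ($F_s(F_s(B_j))=B_j$) and, on each query, flips a uniform bit to return either the true $f$ buckets given by $\textnormal{\textsf{MAP}}$ or their paired images; security rests on the claim that the adversary cannot tell which branch was taken. Crucially, the paper never says \emph{why} a swapped bucket is indistinguishable from the real one --- the well-formedness conditions of Definition~\ref{def:well_formed_overlapping} (equal size, equal neighborhood, equal overlap $\delta$) are exactly what make the swap undetectable, and exactly what fails in Example~\ref{example:unsecure_clustering}, yet they are never invoked in the proof, nor is the RID-sharing structure in the multimap ever mentioned. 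Your decomposition fills precisely this gap: you reduce the bucket-retrieval pattern to the Theorem~\ref{th: veil_basic} argument (uniformity of $\textnormal{\textsf{MAP}}$ for unqueried keys), then argue separately that the only new observable --- the overlap topology, $\delta$, and the labels --- is a function of the bucket indices (via $F_1,\ldots,F_d$), bucket sizes, and internal orderings, so that after padding every bucket presents an identical structural signature; the IND-CPA step then handles the ciphertexts. What the paper's route buys is brevity and a single game hop; what yours buys is an explicit treatment of the overlap leakage surface (the actual novelty of \textsc{Veil-O} over \textsc{Veil}) and a correct scoping of the result --- VSR only, not database indistinguishability, because $\delta$ is data-dependent --- which matches \S\ref{sec:overlapping_discussion} but is absent from the paper's proof. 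One shared caveat applies to both arguments: $\delta$ and the edge directions are functions of the \emph{pre-padding} bucket sizes, which themselves depend on where the unqueried keys' records landed, so strictly speaking the global scalar $\delta$ is correlated with the secret assignment; under VSR this is tolerated (the database is fixed before the challenge query, and neither proof attempts to bound this second-order correlation), but neither your argument nor the paper's eliminates it rigorously.
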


%\subsection
%\textbf {Proof of Theorem~\ref{th: veil_o}}\label{sec: proof_of_veil_o}:
\begin{proof}
We use a game between the adversary $\mathcal{A}$ and the challenger $\mathcal{C}$ and Definition~\ref{def:game} to formally prove that \textsc{Veil-O} is secure. Consider a set of buckets $\mathcal{B}=B_1, B_2, \ldots, B_n$ that are created over a KV dataset $\mathcal{D}$, using which \textsc{Veil-O} creates a $d$-regular graph. 
Suppose the simulator $\mathit{S}$ creates a mapping between buckets in the $d$-regular graph using a self-reverse function $F_s$, where for each input bucket $B_j$, $F(F(B_j))=B_j$. In other words, for each two buckets $B_j$ and $B_{j^{\prime}}\in\mathcal{D}$, if $F(B_j)=B_{j^{\prime}}$, then $B_{j^{\prime}} = B_j$.
% For each mapping, the simulator flips a fair coin. If the coin is head, the simulator switches values between the two clusters; otherwise, the simulator does not do anything. 
When observing a new query that fetches one of $B_j$ and $B_{j^{\prime}}$, the simulator can choose to return the real bucket or the mapped bucket.
For each two mapped buckets, the simulator chooses a bit $\rho\in \{0, 1\}$ uniformly at random. If $\rho$ is 1, the simulator will return the desired bucket, while  if $\rho$ is 0, the simulator returns the mapped bucket. %For those pairs of clusters with $\rho$ of 1, the simulator randomly generates a number $\sigma$ ($\sigma\in[1,\beta]$).

\noindent{\textbf{$\mathit{Real_{\mathcal{A}, \mathcal{C}}}$}}: 
\begin{enumerate}
    \item The adversary $\mathcal{A}$ randomly selects a key $k$, and sends $k$ to challenger $\mathit{C}$.
    \item After the challenger $\mathcal{C}$ receives $k$, the challenger finds the $f$ buckets using $\textnormal{\textsf{MAP}}(k_i)$, denoted as $\mathcal{B}$[$k_i$].
    \item The challenger retrieves buckets corresponding to $\mathcal{B}$[$k_i$] and sends it to $\mathcal{A}$.
    \item The adversary $\mathcal{A}$ outputs a bit $\mathit{b\in \{0, 1\}}$.
\end{enumerate}

\noindent{\textbf{$\mathit{Ideal}_{\mathcal{A}, \mathcal{S}}$}}:
\begin{enumerate}
    \item The adversary $\mathcal{A}$ randomly selects a key $k$, and sends $k$ to challenger $\mathit{C}$.
    \item After the challenger $\mathcal{C}$ receives $k$, the challenger finds the $f$ buckets using $\textnormal{\textsf{MAP}}(k_i)$, denoted as $\mathcal{B}$[$k_i$].
    \item The challenger retrieves buckets corresponding to $\mathcal{B}$[$k_i$] and sends it to $\mathcal{A}$.
    \item The adversary $\mathcal{A}$ randomly outputs a bit $\mathit{b\in \{0, 1\}}$.
\end{enumerate}

\begin{enumerate}
    \item The adversary $\mathcal{A}$ randomly selects a key $k$, and sends $k$ to simulator $\mathit{S}$.
    \item After the simulator $\mathit{S}$ receives $k$, the challenger finds the $f$ buckets using $\textnormal{\textsf{MAP}}(k_i)$.
    \item The simulator $\mathit{S}$ finds $\rho$ for the mapping; if $\rho$ is 0, $\mathit{S}$ returns buckets corresponding to $\mathcal{B}$[$k_i$] to $\mathcal{A}$; otherwise, the simulator $\mathit{S}$ finds mapped buckets $\mathcal{B^{\prime}[k_i]}$, where for each $B_j\in\mathcal{B[k_i]}$, $F(B_j)\in \mathcal{B^{\prime}[k_i]} $, and returns buckets corresponding to $\mathcal{B^{\prime}[k_i]}$.
    \item The adversary $\mathcal{A}$ outputs a bit $\mathit{b\in \{0, 1\}}$; 1 for the simulator returning the correct buckets, while 0 for the simulator returning the mapping buckets. 
\end{enumerate}\end{proof}

% \shanshan{In our submission, we have two related work sections: one in section 7 (discribed above), and the other is full related work section, -- we put it in appendix. As the camera-ready version does not allow appendix, how do we handle the two related works?}

\section{Full Related Works}\label{sec:full_related_work}
% \section{Related Works}
% \label{sec: related_work}

% \sharad{Volume hiding was only recently gotten attention.
% Orifginal work ignored it and allowed it as acceptable leakage.
% However, as menrtion in intro, this is serious leakage. 
% The first set of papers PRT. Prominent approaches
% are ..TDST. Storage optimized
% DST. Then Moti.

% People also considered approaches that may sometimes not give
% full answers. they may return less than lmax. Example is XX.
% New CCS paper.
% clear out and do fina version of related work.

% Above approaches only key searches. Hyberiindex consider range
% searches... how is it done. But it is leaky.

% Bucket based approch in our work can be generalize to support
% range searches... give key idea. But dertailed exploerartion is
% outside the scope of this paper. We are acrtively looking at it.

% }

% \section{Expanded Related Work}
Volume hiding has recently gained attention in the research community. In earlier studies, it was often overlooked and allowed as acceptable leakage. However, as mentioned in \S\ref{sec:intro}, volume hiding is a critical aspect of preserving privacy and preventing information leakage in secure data outsourcing.

Kamara \emph{et al}. introduced the first line of volume-hiding techniques~\cite{kamara2018suppression,kamaraVLH}, utilizing a multi-map data structure for encrypted keyword search. \textsf{DST}~\cite{kamaraVLH} creates batches over the dataset $\mathcal{D}$, where values of a key are allocated to at most $L_{\mathit{max}}$ batches, and pads the batches to equal-size when outsourcing. It retrieves $L_{\mathit{max}}$ (the maximum key size) batches each time when querying for a key. %\textsf{DST} reduces storage overhead compared to padding each key to the maximum key size at the cost of higher communication overhead.
Storage Optimized \textsf{DST}~\cite{kamaraVLH} reduces storage overhead of \textsf{DST} for datasets with a large number of duplicated values by storing those duplicated values only once. Although this approach seems reasonable, identifying duplicated values associated with multiple keys requires significant computation and has not been extensively studied. Moreover, even though duplicated values are stored only once, additional storage is required for each key to record the locations of all its values, resulting in a storage overhead of $|\mathcal{D}|$.

{\color{black}
Patel \emph{et al.} proposed \textsf{dprfMM}~\cite{Moti_dprfMM} that utilizes cuckoo hashing and two hash functions ($h_1$ and $h_2$) to construct two tables ($T_1$ and $T_2$). Each key-value pair $\langle k_i, v_j^{i}\rangle$ in the dataset is allocated a position in $T_1$ using $h_1$. If the position is taken, the algorithm uses $h_2$ to compute a position in $T_2$. When both positions are occupied, the key-value pair is stored in a local stash. To query for records of a specific key, \textsf{dprfMM} uses $h_1$ and $h_2$ to  fetch $L_{\mathit{max}}$ records from each of $T_1$ and $T_2$ then checks the stash for potential records that are stored locally. This approach offers robustness against potential adversaries, however, it suffers from high $\mathit{SA}$ and $\mathit{QA}$, both of 2, while our approach supports user-defined $\mathit{SA}$ and $\mathit{QA}$ and can achieve much better $\mathit{SA}$ and $\mathit{QA}$ with a comparable stash size.

\textsf{XorMM}~\cite{XorMM} leverages an XOR Filter~\cite{graf2020xor} to store the key-value dataset. The $\mathit{SA}$ is approximately 1.23 and the $\mathit{QA}$ is 1. It utilizes a hash function and a PRF to assign 3 slots for each record within the XOR filter. Each record is stored in its encrypted form in one slot, after undergoing an XOR operation with values in the other two slots. This technique allows for the retrieval of each encrypted records through an XOR operation on the values from the 3 corresponding slots for the record.
Upon processing a query for a specific key, XorMM employs the three hash functions to compute slots corresponding to multiple records. It then retrieves these slot values and conducts an XOR operation across the three chosen slots for each record to obtain encrypted records.

While XorMM achieves a much better $\mathit{SA}$ and $\mathit{QA}$ compared with \textsf{dprfMM}, \textsc{Veil} still outperforms it, for the following reasons: 1) 
\textsc{Veil} enable users to define $\mathit{SA}$ and $\mathit{QA}$, leading to the potential for various encrypted database sizes, even from the same or different datasets. For example, suppose two datasets $\mathcal{D}_1$ and $\mathcal{D}_2$ where $|\mathcal{D}_1| \neq |\mathcal{D}_2|$, and suppose the user desired $\mathit{SA}$ for the two datasets are $\mathit{SA}_1$ and $\mathit{SA}_2$, respectively. It may happen that $\mathit{SA}_1|\mathcal{D}_1| = \mathit{SA}_2*|\mathcal{D}_2|$ even if the datasizes of the two datasets are different. 
Such variability makes it challenging for adversaries to distinguish between encrypted databases.
2) XorMM falls short as it does not support long value records, even when we store the record-ids (RIDs) in the XOR filter rather than the records themselves. This limitation arises because each RID is computed using values from three locations, which can occasionally generate meaningless RIDs that do not map to any records in the data storage. 3) \textsc{Veil} has greater flexibility for future extension to dynamic database updates, as it provides a ``buffer'' for potential insertions into the database. XorMM, on the other hand, lacks this flexibility, as even minor changes in the database size would necessitate a complete reorganization of the XOR filter.}

Researchers have also explored approaches that may sometimes return fewer results than the maximum key size $L_{\mathit{max}}$, reducing query overhead but inevitably being lossy or leaking approximate volumes of queries. Examples include \textsf{PRT}~\cite{kamaraVLH} and \textsf{dprfMM}~\cite{Moti_dprfMM}.
\textsf{PRT} generates fake volumes for keys using a pseudorandom function, adding fake values to the key if the generated volume for the key is larger than the key size, or truncating some values of the key to the fake volume if the key size is larger. 
\textsf{PRT} pads each key to a generated volume instead of retrieving $L_{\mathit{max}}$ values for each query, thus reducing the storage overhead. However, such construction is lossy with negligible probability. Although the authors indicated that truncation is minimal when the input data is Zipf-distributed, they did not provide any guarantee for data loss.
The other approach, \textsf{dpMM}~\cite{Moti_dprfMM},  computes a fake volume for each query key using the key size and noise drawn from the Laplace distribution, instead of retrieving a fixed number of values for each query. This approach reduces the query overhead at the cost of lossy query answers with negligible probability.
% % {\color{red}Compared with these approaches, our approaches can generate non-lossy answers}

HybrIDX~\cite{feifei_hybridx} handles volume hiding for range queries  of
the kind $ k_i > k$ or $k_i < k$ for some key $k$. It partitions each key
$k_i$ into small, fixed-size batches, called ``blocks'' - the
last block of each key is padded to ensure that all batch sizes are equal. 
HybrIDX partitions each range query into ``batches'', where each batch contains the same number of keys, denoted as $q$, and each time it retrieves blocks for $q$ keys from either the cloud or a secure hardware enclave. 
For keyword queries, a special case of range queries, when $q>1$, the approach retrieves blocks for more than one key, potentially resulting in a high query overhead; while when $q$ is 1, HybrIDX retrieves all blocks corresponding to the queried key, which is efficient at a cost of leaking approximate key sizes through the volume of each query. Assuming the adversary is aware of the block size $\ell_{\mathit{block}}$, upon observing the volume $\ell$ of a query,  they can deduce that the actual volume lies between $\ell-\ell_{\mathit{block}}+1$ and $\ell$.
Although HybrIDX employs a secure enclave to cache some records, the issue remains unresolved, particularly when the query key differs from previous queries. Moreover, even without knowledge of the block size, an adversary can observe multiple queries, calculate the differences between the volumes of the queries, and compute common divisors of those ``differences'' to obtain the batch size.

{\color{black}
Similar to
previous work on volume hiding \cite{Moti_dprfMM, Moti21_dynamic,XorMM}, 
we studied \textsc{Veil} 
for the static case when insertions, deletions and updates to data do not occur.
However, \textsc{Veil} can support dynamic updates without requiring a re-execution of the bucketization strategy every time there is a update  as long as the value of $L_{max}$ after insertions 
does not exceed $L_{max} \times \mathit{QA}$. For instance, insertion can be supported by retrieving the set of records 
in the $f$ buckets corresponding to the
key for the record being inserted, replacing one of the fake tuples (if present) in the buckets 
by the newly inserted tuples (or converting 
a deleted tuple into a fake tuple in case of deletion), 
re-encrypting the records in the buckets to prevent adversary from learning
about the change, and re-outsourcing that modified buckets. 
Of course, if the $f$ buckets corresponding to the key  do not contain  any fake tuples (and hence
have no residual capacity to store more data), the newly inserted tuple will be stored in a stash, as would have
been the case had the buckets become full even in the static case. 
Such a strategy will continue to work, as long as, the new value of $L_{max}$ after insertions, say $L_{max}'$ 
remains below $QA \times L_{max}$ \footnote{ In fact, it will continue to work beyond that except that all new records 
after the point of increase in $L_{max}'$ beyond $QA \times L_{max}$ will need to be stored locally in the stash.}.
As $L_{max}$ increases, since we are not changing the fanout $f$ or the 
bin size, effectively the $QA$ value reduces which, in turn, increases the probability of a record to have to be stored
in a stash. When  $L_{max}'$ goes above  $QA \times L_{max}$, we can either
continue to map new records to stash (increasing client side overhead) and/or reorganize the data. While the study
of insertion and dynamic maintenance of \textsc{Veil} is outside the scope of this work, we note that our approach is
significantly more flexible and amenable for dynamic updates compared to prior work such as  \cite{Moti_dprfMM} or
\cite{XorMM}. For instance,  \cite{XorMM} constructs a xor filter through a sequential process such that 
an insertion of a  record (whether or not it causes the 
$L_{max}$ value to change) would require recomptuation of the entire  filter. As a result, there is no easy way to 
support dynamic updates on approaches such as \cite{XorMM} and this, coupled with flexibility by the user to choose
$QA$ and $SA$ parameters to control the overheads, is one of the primary advantages of \textsc{Veil}.

}

% \input{old_related_work}

%% SHARAD here you need to prove that neighbor functions are correctr.T

% \section{Neighbor Functions}
%  \begin{theorem}
%  \label{th: symFunc}
%     Let $n \neq 3$ and $2 \leq d \leq n-1.$
%       Consider any set of $n$ buckets, 
%       $\{B_0,\ldots, B_(n-1)\}$. 
%       For $0 \leq p \leq n-1$, associate neighbors of bucket $B_p$ (viz., $\mathcal{N}(B_p)$) using the set of $d$ functions $\mathcal{F}$ defined as follows:
%       \begin{itemize}
%          \item For $1 \leq i \leq \lfloor \frac{d}{2} \rfloor$ we have $F_i(B_x) = B_{(x + i) \mod{n}}$
%          \item For $\lceil \frac{d+1}{2} \rceil \leq i < \lceil \frac{d+1}{2} \rceil  + \lceil \frac{d-1}{2} \rceil$ we have $F_i(B_x) = B_{(x - (i-\lfloor \frac{d}{2} \rfloor )) \mod{n}}$
%          \item If $d$ is odd $F_d(B_x) = B_{(x + \lfloor \frac{n}{2} \rfloor ) \mod{n}}$
% \end{itemize} 
% With $0 \leq p \leq n-1$ and with $1 \leq q \leq d$, for each bucket $B_p$ and $B_q$, it holds that $B_q \in \mathcal{N}(B_p)$ if and only if $B_p  \in \mathcal{N}(B_q)$.
% \end{theorem}

\end{document}